\definecolor{darkblue}{rgb}{0,0,0.38}
\definecolor{darkred}{rgb}{0.6,0,0}
\definecolor{darkgreen}{rgb}{0.1,0.35,0}
\DeclareMathOperator{\spn}{span}
\DeclareMathOperator{\val}{val}
\DeclareMathOperator{\polylog}{polylog}
\newcommand{\labeltarget}[1]{\Hy@raisedlink{\hypertarget{#1}{}}}
\newcommand\tm[3][]{\tikz[remember picture,baseline=(#2.base),#1]%
{\node[inner sep=0em] (#2) {#3};}}
\newcommand{\Pcomp}{\textsc{P}}
\newcommand{\NPcomp}{\textsc{NP}}
\newcommand{\Vl}{V^{\mathcal{L}}}
\newcommand{\Vt}{V^{\mathcal{T}}}
\def\supp{\operatorname{supp}}
\def\OPT{\mathsf{OPT}}
\newcommand{\fl}[1]{\lfloor #1 \rfloor}
\newtheorem{theorem}{Theorem}
\newtheorem{lemma}[theorem]{Lemma}
\newtheorem{corollary}[theorem]{Corollary}
\newtheorem{property}[theorem]{Property}
\title{Firefighting on Trees Beyond Integrality Gaps}
\author{
David Adjiashvili\thanks{
Department of Mathematics, ETH Zurich, Zurich, Switzerland.
Email: \href{mailto:david.adjiashvili@ifor.math.ethz.ch}%
{david.adjiashvili@ifor.math.ethz.ch}.
Supported by Seed Project ``Risk Protection in Complex Networks''
of ETH Zurich Risk Center.}
\and 
Andrea Baggio\thanks{
Department of Mathematics, ETH Zurich, Zurich, Switzerland.
Email: \href{mailto:andrea.baggio@ifor.math.ethz.ch}%
{andrea.baggio@ifor.math.ethz.ch}.
Supported by EU grant FP7-PEOPLE-2012-ITN no. 316647,
``Mixed-Integer Nonlinear Optimization''.
}
\and
Rico Zenklusen\thanks{
Department of Mathematics, ETH Zurich, Zurich, Switzerland.
Email: \href{mailto:ricoz@math.ethz.ch}%
{ricoz@math.ethz.ch}.}
}
\begin{document}

\maketitle

\begin{abstract}

The Firefighter problem and a variant of it, known as
Resource Minimization for Fire Containment (RMFC), are natural
models for optimal inhibition of harmful spreading processes.
Despite considerable progress on several
fronts, the approximability of these problems is still badly understood.
This is the case even when the underlying graph is a tree, which is
one of the most-studied graph structures in this context
and the focus of this paper.
In their simplest version, a fire spreads from one fixed vertex
step by step from burning to adjacent non-burning vertices,
and at each time step, $B$ many non-burning vertices
can be protected from catching fire.
The Firefighter problem asks, for a given $B$,
to maximize the number of vertices
that will not catch fire, whereas RMFC (on a tree)
asks to find the smallest $B$ that allows for saving all leaves
of the tree. 
Prior to this work, the best known approximation ratios
were an $O(1)$-approximation for the Firefighter problem
and an $O(\log^* n)$-approximation for RMFC, both being
LP-based and essentially matching
the integrality gaps of two natural LP relaxations.

We improve on both approximations by presenting a PTAS for the
Firefighter problem and an $O(1)$-approximation for RMFC,
both qualitatively matching the known hardness results.
Our results are obtained through a combination
of the known LPs with
several new techniques, which allow for
efficiently enumerating subsets of super-constant
size of a good solution to obtain stronger LPs.
\end{abstract}

\section{Introduction}

The Firefighter problem was introduced by
Hartnell~\cite{Hartnell1995} as a natural model for optimal
inhibition of harmful spreading phenomena on a graph.
Despite considerable interest in the problem
and progress on several fronts, our understanding
of how well this and related problems can be approximated
is still very limited.
Interestingly, this is even true when the underlying
graph is a spanning tree,
which is one of the most-studied graph
structures in this context and also the focus of
this paper.

The Firefighter problem on trees is defined as follows.
We are given a graph $G=(V,E)$ which is a spanning tree
and a vertex $r\in V$, called \emph{root}.
The problem is defined over discretized time steps.
At time $0$, a fire starts at $r$ and spreads step
by step to neighboring vertices.
During each time step $1,2,\dots$ an arbitrary non-burning
vertex $u$ can be \emph{protected}, preventing $u$
from burning in any future time step.
In its original form, the goal is to find a protection
strategy minimizing the number of vertices that will
catch fire.
A closely related problem, called \emph{Resource
Minimization for Fire Containment (RMFC)} on trees,
was introduced by
Chalermsook and Chuzhoy~\cite{ChalermsookChuzhoy2010}.
Here the task is to determine the
smallest number $B\in \mathbb{Z}_{>0}$
such that if one can protect $B$ vertices at each time
step (instead of just $1$), then there is a protection
strategy where none of the leaves of the tree
catches fire.
In this context, $B$ is referred to as
the \emph{number of firefighters}.

Both the Firefighter problem and RMFC---both restricted
to trees as defined above---are known to be computationally
hard problems.
More precisely, Finbow, King, MacGillivray and
Rizzi~\cite{FinbowKingMacGillivrayRizzi2007} showed 
NP-hardness for the Firefighter problem on trees with 
maximum degree three.
For RMFC on trees, it is NP-hard to decide
whether one firefighter,
i.e., $B=1$, is sufficient~\cite{king_2010_firefighter};
thus, unless $\Pcomp=\NPcomp$, there is
no (efficient) approximation algorithm with an approximation
factor strictly better than $2$.

On the positive side, several approximation algorithms
have been suggested for the Firefighter problem and RMFC.
Hartnell and Li~\cite{HartnellLi2000} showed that a 
natural greedy algorithm is a $\frac{1}{2}$-approximation for the
Firefighter problem. This approximation guarantee
was later improved by Cai, Verbin
and Yang~\cite{CaiVerbinYang2008} to $1-\frac{1}{e}$,
using a natural linear programming (LP) relaxation
and dependent randomized rounding.
It was later observed by Anshelevich, Chakrabarty, Hate 
and Swamy~\cite{AnshelevichChakrabartyHateSwamy2009}
that the Firefighter problem on
trees can be interpreted as a monotone submodular
function maximization (SFM) problem subject to a
partition matroid constraint.
This leads to alternative ways to obtain a
$(1-\frac{1}{e})$-approximation by using a recent
$(1-\frac{1}{e})$-approximation for monotone SFM subject
to a matroid
constraint~\cite{vondrak_2008_optimal,calinescu_2011_maximizing}.
The factor $1-\frac{1}{e}$ was later only improved for
various restricted tree
topologies (see~\cite{IwaikawaKamiyamaMatsui2011})
and hence, for arbitrary trees,
this is the best known approximation factor to date.

For RMFC on trees, Chalermsook and
Chuzhoy~\cite{ChalermsookChuzhoy2010} presented an
$O(\log^* n)$-approximation, where $n=|V|$ is the
number of vertices.\footnote{
$\log^* n$ denotes the minimum number $k$ of
logs of base two that have to be nested such
that $\underbrace{\log\log\dots\log}_{k \text{ logs}} n \leq 1$.}
Their algorithm is based on a natural
linear program which is a straightforward
adaptation of the one used in~\cite{CaiVerbinYang2008}
to get a $(1-\frac{1}{e})$-approximation for the
Firefighter problem on trees.

Whereas there are still considerable gaps between
current hardness results and approximation algorithms
for both the Firefighter problem and RMFC on trees, 
the currently best approximations essentially match
the integrality gaps of the underlying LPs.
More precisely,
Chalermsook and Vaz~\cite{chalermsook_2016_new}
showed that for any $\epsilon >0$, the canonical LP used
for the Firefighter problem on trees has an integrality
gap of $1-\frac{1}{e}+\epsilon$. This generalized
a previous result by Cai, Verbin and Yang~\cite{CaiVerbinYang2008},
who showed the same gap if the integral solution is required
to lie in the support of an optimal LP solution.
For RMFC on trees, the integrality gap
of the underlying LP
is~$\Theta(\log^* n)$~\cite{ChalermsookChuzhoy2010}.

It remained open to what extent these integrality
gaps may reflect the approximation hardnesses of the
problems.
This question is motivated by two related problems
whose hardnesses of approximation indeed matches the
above-mentioned integrality gaps for the Firefighter
problem and RMFC.
In particular, many versions of monotone SFM subject
to a matroid constraint---which we recall was shown
in~\cite{AnshelevichChakrabartyHateSwamy2009}
to capture the Firefigther problem on
trees as a special
case---are
hard to approximate up to a factor of
$1-1/e+\epsilon$ for any constant $\epsilon >0$.
This includes the problem of maximizing an explicitly
given coverage function subject to a single cardinality
constraint, as shown by Feige~\cite{feige_1998_threshold}.
Moreover, as highlighted in~\cite{ChalermsookChuzhoy2010},
the Asymmetric $k$-center problem is similar in nature
to RMFC, and has an approximation hardness of
$\Theta(\log^* n)$.

The goal of this paper is to fill the gap between
current approximation ratios and hardness results
for the Firefighter problem and RMFC on trees.
In particular, we present approximation ratios
that nearly match the hardness results, thus showing
that both problems can be approximated to factors
that are substantially better than the integrality
gaps of the natural LPs.
Our results are based on several new techniques,
which may be of independent interest.

\subsection{Our results}

Our main results show
that both the Firefighter
problem and RMFC admit strong approximations
that essentially match known hardness bounds,
showing that approximation factors can be achieved that
are substantially stronger than the integrality
gaps of the natural LPs.
In particular, we obtain the following result
for RMFC.
\begin{theorem}\label{thm:O1RMFC}
There is a $12$-approximation for RMFC.
\end{theorem}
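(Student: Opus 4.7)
The plan is to combine the canonical LP relaxation of RMFC with an enumeration step whose goal is to \emph{destroy} the configurations that cause the $\Theta(\log^* n)$ integrality gap. Let $B^{\ast}$ be the smallest integer $B$ for which the natural LP (one variable $x_{v}$ per non-root vertex, path-covering constraints $\sum_{v\in P\setminus\{r\}} x_{v}\ge 1$ for every root-to-leaf path $P$, and budget constraints $\sum_{v:\level(v)=t} x_{v}\le B$ for every time step $t$) is feasible; $B^{\ast}$ is computable by binary search and satisfies $B^{\ast}\le \OPT$. It thus suffices to produce an integral protection strategy that uses at most $12 B^{\ast}$ firefighters per time step.

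The first step is to quantify where the gap comes from: the known $\log^* n$ lower-bound instances place delicate fractional cuts across a tower-type schedule of levels, so that each individual level carries only a small fractional mass and no rounding scheme can avoid aggregating this mass over $\log^* n$ groups. I would therefore partition the levels into $O(\log^* n)$ groups using a doubly-exponential schedule and treat the \emph{top} $O(1)$ groups (those close to the root) differently from the remaining \emph{deep} groups. On the deep groups, I expect to argue that the fractional LP solution can be rounded to an integral solution losing only a constant factor in the budget; this should follow from a standard chain/heavy-path decomposition combined with a dependent rounding step, in the spirit of the $(1-1/e)$-style rounding used previously for the Firefighter problem.

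The heart of the argument is the treatment of the top groups, which is where the enumeration trick advertised in the abstract comes in. For these $O(1)$ groups I would guess, for each heavy path of $G$, the identity of the topmost vertex of an optimal integral solution $\OPT$ lying on that path within the top groups. If this guess could be made directly for all heavy paths simultaneously, it would have super-constant size and so be too expensive to enumerate; the key point I expect to prove is a structural lemma that only a polynomial-size ``skeleton'' of such top vertices needs to be guessed, either by exhibiting a polynomial-time enumeration of relevant skeletons or by a bounded-depth branching whose leaves cover every possible skeleton. Once the correct skeleton is guessed, the corresponding vertices are fixed as protected, their subtrees are essentially contracted out of the instance, and the LP is re-solved on the residual tree, on which the previous constant-factor rounding applies.

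The hard part will be exactly this enumeration lemma: one must argue that there is a near-optimal solution whose top-group vertices lie in a small, efficiently describable family, and that restricting to this family only loses an $O(1)$ factor in the budget. All other ingredients---the LP, the heavy-path decomposition, and the dependent rounding on deep groups---are standard or follow well-known templates, but the sparsification of the top-group guess is what turns the prior $O(\log^* n)$-approximation into the claimed $O(1)$-approximation and is where I expect the main technical work to reside. The final constant $12$ then comes from carefully balancing the constant-factor losses in (i) the deep-group rounding, (ii) the budget blow-up incurred by committing to guessed vertices in the top groups, and (iii) the budget slack needed so that the residual LP remains feasible after contraction.
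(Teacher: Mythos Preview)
Your high-level intuition---enumerate near the root, LP-round far from it---is correct, but the proposal is missing the two ideas that actually make this work, and the concrete mechanisms you suggest (heavy-path decomposition, dependent rounding) are not what the paper uses and do not obviously suffice.

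First, the paper's algorithm does \emph{not} work on the original tree. A preliminary \emph{compression} step (Theorem~\ref{thm:compressionRMFC}) reduces the depth to $L=O(\log N)$ while replacing the uniform budget $B$ by geometrically growing budgets $B_\ell=2^\ell B$; this costs a factor~$2$. Compression is essential: the enumeration and the sparsity-based rounding below are both controlled by the number of \emph{levels}, not vertices, so without compression neither bound is polynomial. Your doubly-exponential grouping of levels is in the right spirit but is not enough by itself; you still need the depth to be logarithmic before any of the subsequent counting arguments go through.

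Second, the enumeration is \emph{LP-guided}, not skeleton- or heavy-path-based. On the compressed instance one sets a threshold $h=\lfloor\log^{(2)}L\rfloor$ and maintains a ``clean pair'' $(A,D)$ of forced/forbidden vertices in $V_{\le h}$. After solving the LP with these fixings, the solution itself singles out a small set $F_x$ (vertices just below $D$ on paths to leaves that are insufficiently covered by the LP on bottom levels); one branches on each $f\in F_x$ being in or out of $\OPT$. The polynomial bound comes from two facts: $|F_x|=O((\log L)^2)$ because of the budget constraints on $V_{\le h}$, and a potential function $\Phi(A,D)$ (sum of distances from $\OPT\cap V_{\le h}$ to $A\cup D\cup\{r\}$) strictly decreases along the ``correct'' branch, bounding the recursion depth by $O((\log L)^2\log\log L)$. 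Your proposal to guess the topmost $\OPT$-vertex on each heavy path does not come with any such bound, and the ``structural lemma'' you hope for is exactly the hard part---the paper replaces it by this LP-steered branching. Finally, the rounding on bottom levels (Theorem~\ref{thm:bottomCover}) is not dependent rounding: it takes a vertex LP solution, uses that such a solution has at most $L$ \emph{loose} vertices (a sparsity argument analogous to Lemma~\ref{lem:sparsityFF}), and protects all loose vertices by increasing $B$ by~$1$, which is affordable precisely because the compressed budgets are $2^\ell B\ge L$ on the relevant levels. The constant~$12$ then arises as $2$ (compression) times $6$ (Lemma~\ref{lem:goodEnum}/Corollary~\ref{cor:summaryEnum}), with a separate easy argument (Theorem~\ref{thm:bigBIsGood}) handling the regime $B_{\OPT}\ge\log L$.
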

Recalling that RMFC is hard to approximate
within any factor better than $2$, the above
result is optimal up to a constant factor,
and improves on the previously best
$O(\log^* n)$-approximation of
Chalermsook and Chuzhoy~\cite{ChalermsookChuzhoy2010}.

Moreover, our main result for the Firefighter problem
is the following, which, in view of NP-hardness 
of the problem, is essentially best possible in
terms of approximation guarantee.
\begin{theorem}\label{thm:PtasFF}
There is a PTAS for the Firefighter problem
on trees.\footnote{A polynomial
time approximation scheme (PTAS) is
an algorithm that, for any constant
$\epsilon > 0$, returns in polynomial time
a $(1-\epsilon)$-approximate solution.}
\end{theorem}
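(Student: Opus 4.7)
The strategy is to circumvent the $1-1/e$ integrality gap of the natural LP by enumerating, in polynomial time, a super-constant portion of an unknown optimal protection $P^\ast$, adding it as hard constraints to the LP, and then rounding the residual fractional solution. Fix $\epsilon > 0$. Call a vertex of $P^\ast$ \emph{heavy} if the weight it saves (accounting for higher OPT commitments) exceeds a threshold depending only on $\epsilon$; by a straightforward counting argument there are at most $k = k(\epsilon)$ heavy vertices, and they are the ones responsible for the structural obstruction to LP rounding. These $k$ heavy vertices form a skeleton $S^\ast$ that I aim to recover by enumeration.

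A naive enumeration over $\binom{n}{k}$ subsets is prohibitive for super-constant $k$, so the tree structure must be exploited. Heavy vertices inherit a natural laminar arrangement---each is a subtree root consuming one firefighter slot and sitting in a controlled position relative to the others---and can therefore be enumerated via a top-down dynamic program over the tree, producing $n^{O(1)}$ candidate skeletons, at least one of which coincides with $S^\ast$. For each candidate $S$, I would solve the LP of~\cite{CaiVerbinYang2008} augmented by $x_v = 1$ for every $v \in S$ together with the induced restrictions (firefighter time steps consumed by $S$ removed, conflicting ancestor/descendant protections forbidden), and then apply dependent randomized rounding to the residual fractional solution. The algorithm returns the best integral protection found across all guesses.

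The main obstacle is proving that, for the correct guess $S = S^\ast$, the residual integrality gap drops from $1-1/e$ to $1-\epsilon$. The classical $1-1/e$ bound for monotone submodular maximization under a matroid is tight in general, so one must crucially exploit both the tree's path-constraint structure and the fact that, once the heavy protections are fixed, every remaining contribution is only a tiny fraction of $\OPT$; this suggests a correlation-gap or pipage-style rounding analysis in which the per-contribution loss shrinks as contributions shrink. A secondary, essentially orthogonal challenge is devising the enumeration so that at least one candidate skeleton agrees with $S^\ast$ despite $|S^\ast|$ being super-constant---this polynomial-time enumeration of a super-constant guess is exactly what separates the approach from previous constant-size guessing tricks and what should ultimately let us beat the $1-1/e$ barrier.
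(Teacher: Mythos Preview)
Your outline captures the high-level spirit---guess enough of $\OPT$ to kill the integrality gap, then round---but it misses the concrete mechanisms that make this work, and as stated it would not close. First, your heavy/counting setup is internally inconsistent: if the threshold ``depends only on $\epsilon$'' then $k=k(\epsilon)$ is a constant and naive enumeration is already polynomial, yet you then call $k$ super-constant. In fact the number of heavy $\OPT$-vertices one must control is genuinely super-constant, namely $\Theta(\frac{1}{\epsilon^2}\log N)$, and this is forced by the analysis: a vertex LP solution can have up to $L$ \emph{loose} vertices (those $u\in\supp(x)$ with $x(P_u)<1$), so to make their total contribution at most $\epsilon\,\val(\OPT)$ the heaviness threshold must be $\Theta(\frac{\epsilon}{L})\val(\OPT)$. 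Without further preprocessing $L$ can be $\Theta(n)$, which blows up the count; the paper first \emph{compresses} the instance to depth $L=O(\frac{1}{\epsilon}\log N)$ via down-pushes (losing a $(1-\epsilon)$ factor), and this depth-reduction step---together with the sparsity lemma bounding loose vertices by $L$---is the engine you are missing.

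Second, even after compression the enumeration is not over candidate sets $S$ of heavy $\OPT$-vertices. The paper first applies a pruning step guaranteeing $\val(\OPT)\geq\Omega(\epsilon)\,w(V)$, which forces the tree of \emph{all} heavy vertices to have only $O(\frac{1}{\epsilon^3}\log N)$ leaves; it then cuts this heavy tree into $|Q|=O(\frac{1}{\epsilon^3}\log N)$ subpaths and enumerates only the $2^{|Q|}=N^{O(1/\epsilon^3)}$ choices of \emph{which subpaths contain an $\OPT$-vertex}, never which vertex. For each guess the LP is tightened by equalities $x(H_q)\in\{0,1\}$, and the resulting vertex solution has the property that every loose heavy vertex sits on a subpath with a tight vertex, so its load can be shifted down at cost $\leq\eta$. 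Restricting to tight vertices then yields a laminar-matroid LP that is integral---no randomized or pipage rounding is needed. Your ``top-down DP producing $n^{O(1)}$ skeletons'' and ``correlation-gap analysis'' are placeholders for exactly these two missing ideas: the coarse subpath enumeration (rather than vertex-set enumeration) and the loose/tight shifting argument enabled by the $Q$-decomposition.
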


Notice that the Firefighter problem does not admit
an FPTAS\footnote{A fully polynomial time approximation
scheme (FPTAS) is a PTAS with running
time polynomial in the input size and $\frac{1}{\epsilon}$.}
unless $\Pcomp=\NPcomp$, since the optimal
value of any Firefighter problem on a tree of
$n$ vertices is bounded by $O(n)$.\footnote{
The nonexistence of FPTASs unless $\Pcomp=\NPcomp$
can often be derived easily from strong
NP-hardness. Notice that the Firefighter problem
is indeed strongly NP-hard because
its input size is $O(n)$, in which case NP-hardness is
equivalent to strong NP-hardness.
}
We introduce several new techniques that allow us
to obtain approximation factors well beyond 
the integrality gaps of the natural LPs,
which have been a barrier for previous approaches.
We start by providing an overview of these techniques.

\smallskip

Despite the fact that we obtain
approximation
factors beating the integrality gaps, the natural LPs
play a central role in our approaches.
We start by introducing general transformations
that allow for transforming the Firefighter problem
and RMFC into a more compact and better structured
form, only losing small factors in terms of
approximability.
These transformations by themselves do not decrease
the integrality gaps.
However, they allow us to identify small
substructures, over which we can
optimize efficiently, and having an optimal solution
to these subproblems we can define
a residual LP with small integrality gap.

Similar high-level approaches,
like guessing a constant-size
but important subset of an optimal solution are well-known
in various contexts to decrease 
integrality gaps of natural LPs. The best-known example
may be classic PTASs for the knapsack problem,  where the
integrality gap of the natural LP can be decreased to
an arbitrarily small constant by first guessing a constant
number of heaviest elements of an optimal solution.
However, our approach differs substantially 
from this standard enumeration idea.
Apart from the above-mentioned transformations which,
as we will show later, already lead to new results
for both RMFC and the Firefighter problem, we
will introduce new combinatorial approaches to gain information
about a \emph{super-constant} subset of an optimal solution.
In particular, for the RMFC problem we define
a recursive enumeration
algorithm which, despite being very slow for enumerating all
solutions, can be shown to reach a good subsolution 
within a small recursion depth that can be reached in
polynomial time.
This enumeration procedure explores the space step by
step, and at each step we first solve an LP that determines
how to continue the enumeration in the next step.
We think that this LP-guided enumeration
technique may be of independent interest.
For the Firefighter problem, we use a well-chosen
enumeration procedure to identify a polynomial
number of additional constraints to be added to the
LP, that improves its integrality gap to
$1-\epsilon$.

\subsection{Further related results}
Iwaikawa, Kamiyama and Matsui~\cite{IwaikawaKamiyamaMatsui2011}
showed that the approximation guarantee of $1-\frac{1}{e}$ can be improved 
for some restricted families of trees, in particular of low maximum degree.
Anshelevich, Chakrabarty, Hate and 
Swamy~\cite{AnshelevichChakrabartyHateSwamy2009} studied the approximability
of the Firefighter problem in general graphs, which they prove admits no 
$n^{1-\epsilon}$-approximation for any $\epsilon > 0$, unless
$\Pcomp=\NPcomp$. In a 
different model, where the protection also spreads through the graph 
(the \emph{Spreading Model}), the authors show that the problem 
admits a polynomial $(1-\frac{1}{e})$-approximation on general graphs. 
Moreover, for RMFC, an $O(\sqrt n)$-approximation for general 
graphs and an $O(\log n)$-approximation for directed layered
graphs is presented.
The
latter result was obtained independently by Chalermsook and Chuzhoy~\cite{ChalermsookChuzhoy2010}.
Klein, Levcopoulos and Lingas~\cite{KleinLevcopoulosLingas2014} introduced a 
geometric variant of the Firefighter problem, proved its NP-hardness and provided 
a constant-factor approximation algorithm. 
The Firefighter problem and RMFC are natural special cases
of the Maximum Coverage Problem with 
Group Constraints (MCGC)~\cite{ChekuriKumar2004} and the 
Multiple Set Cover problem (MSC)~\cite{ElkinKortsarz2006}, respectively. 
The input in MCGC is a set system consisting of a finite set $X$
of elements with nonnegative weights, a
collection of subsets $\mathcal{S} = \{S_1, \cdots, S_k\}$ of $X$ and 
an integer $k$. The sets in $\mathcal{S}$ are partitioned into 
groups $G_1, \cdots, G_l\subseteq \mathcal{S}$.
The goal is to pick a
subset $H\subseteq \mathcal{S}$ of $k$ 
sets from $\mathcal{S}$ whose union covers elements of total
weight as large as possible with the 
additional constraint that $|H \cap G_j| \leq 1$
for all $j\in [l]\coloneqq \{1,\dots, l\}$. In MSC,
instead of the fixed bounds for groups and the parameter $k$, the goal is
to choose a subset $H\subseteq \mathcal{S}$
that covers $X$ completely, while 
minimizing $\max_{j\in [l]}|H \cap G_j|$. The Firefighter
problem and RMFC can naturally be interpreted as special cases of the latter
problems with a laminar set system $\mathcal{S}$.

The Firefighter problem admits polynomial time algorithms in some restricted classes 
of graphs. Finbow, King, MacGillivray and Rizzi~\cite{FinbowKingMacGillivrayRizzi2007}
showed that, while the problem is NP-hard on trees with maximum degree three, 
when the fire starts at a vertex with degree two in a subcubic tree, the problem is solvable 
in polynomial time. Fomin, Heggernes and van Leeuwen~\cite{FominHeggernes_vanLeeuwen2012}
presented polynomial algorithms for interval graphs, split graphs,
permutation graphs and $P_k$-free graphs.

Several sub-exponential exact algorithms were developed for the Firefighter
problem on trees. Cai, Verbin and Yang~\cite{CaiVerbinYang2008} presented
a $2^{O(\sqrt{n}\log n)}$-time algorithm. 
Floderus, Lingas and Persson~\cite{FloderusLingasPersson2013} presented a
simpler algorithm with a slightly better running time, as well as a
sub-exponential algorithm for general graphs in the
spreading model and an $O(1)$-approximation in planar graphs
under some further conditions.

Additional directions of research on the Firefighter problem include parameterized
complexity (Cai, Verbin and Yang~\cite{CaiVerbinYang2008}, Bazgan, Chopin and 
Fellows~\cite{BazganChopinFellows2011}, Cygan, Fomin and 
van Leeuwen~\cite{CyganFomin_vanLeeuwen2012} and 
Bazgan, Chopin, Cygan, Fellows, Fomin and 
van Leeuwen~\cite{BazganChopinCyganFellowsFomin_van_Leeuwen2014}), generalizations
to the case of many initial fires and many firefighters (Bazgan, Chopin and 
Ries~\cite{BazganChopinRies2013} and Costa, Dantas, Dourado, Penso and 
Rautenbach~\cite{CostaDantasDouradoPensoRautenbach2013}),
and the study of potential strengthenings of the canonical LP for
the Firefighter problem on trees (Hartke~\cite{Hartke2006} and Chalermsook and Vaz~\cite{chalermsook_2016_new}).

Computing the \emph{Survivability} of a graph is 
a further problem closely related to Firefighting
that has attracted considerable attention
(see~\cite{CaiWang2009,CaiChengVerbinZhou2010,Pralat2013,Esperet_Van_Den_HeuvelMaffraySipma2013,Gordinowicz2013,KongZhangWang2014}).
For a graph $G$ and a parameter
$k\in \mathbb{Z}_{\geq 0}$, the $k$-survivability of $G$ 
is the average fraction of nodes that one can save with $k$ firefighters
in $G$, when the fire starts at a random node.

For further references we refer the reader to the survey of Finbow and 
MacGillivray~\cite{FinbowMacGillivray2009}.

\subsection{Organization of the paper}

We start by introducing the classic linear programming
relaxations for the Firefighter problem and RMFC
in Section~\ref{sec:preliminaries}.
Section~\ref{sec:overview} outlines our main
techniques and algorithms. Some 
proofs and additional discussion
are deferred to later sections, namely
Section~\ref{sec:proofsCompression}, providing
details on a compression technique that is
crucial for both our algorithms, Section~\ref{sec:proofsFF},
containing proofs for results related to the
Firefighter problem, and Section~\ref{sec:proofsRMFC},
containing proofs for results related to RMFC.
Finally, Appendix~\ref{apx:trans} contains some basic
reductions showing how to reduce different variations
of the Firefighter problem to each other.

\section{Classic LP relaxations and preliminaries}
\label{sec:preliminaries}

Interestingly, despite the fact that we
obtain approximation factors considerably
stronger than the known integrality gaps of the
natural LPs,
these LPs still play a central role in our approaches.
We thus start by introducing the natural LPs together with
some basic notation and terminology.

Let $L\in \mathbb{Z}_{\geq 0}$ be the \emph{depth} of
the tree, i.e., the largest
distance---in terms of number of edges---between $r$
and any other vertex in $G$.
Hence, after at most $L$ time steps, the fire
spreading process will halt.
For $\ell\in [L]:=\{1,\dots, L\}$, let
$V_\ell\subseteq V$ be the set of all vertices of
distance $\ell$ from $r$, which we call the
\emph{$\ell$-th level} of the instance.
For brevity, we use $V_{\leq \ell} = \cup_{k=1}^\ell V_k$,
and we define in the same spirit $V_{\geq \ell}$, 
$V_{< \ell}$, and $V_{> \ell}$.
Moreover, we denote by
$\Gamma \subseteq V$ the set of all leaves
of the tree, and for any $u\in V$, the set
$P_u\subseteq V\setminus \{r\}$ denotes
the set of all vertices on the unique $u$-$r$ path
except for the root $r$.

The relaxation for RMFC used in~\cite{ChalermsookChuzhoy2010}
is the following:
\begin{equation}\label{eq:lpRMFC}
\begin{array}{*2{>{\displaystyle}r}c*2{>{\displaystyle}l}}
\min & B & & \\
     & x(P_u) &\geq &1   &\forall u\in \Gamma \\
     & x(V_{\leq \ell}) &\leq &B\cdot \ell\hspace*{2em}
               &\forall \ell\in [L]\\
     & x &\in &\mathbb{R}_{\geq 0}^{V\setminus \{r\}},
\end{array}\tag{$\mathrm{LP_{RMFC}}$}
\end{equation}
where $x(U):=\sum_{u\in U} x(u)$ for any $U\subseteq V\setminus \{r\}$.
Indeed, if one enforces $x\in \{0,1\}^{V\setminus \{r\}}$
and $B\in \mathbb{Z}$
in the above relaxation, an exact description of RMFC is
obtained where $x$ is the characteristic vector of the
vertices to be protected and $B$ is the number of
firefighters:
The constraints $x(P_u)\geq 1$ for $u\in \Gamma$ enforce 
that for each leaf $u$, a vertex between $u$ and $r$
will be protected, which makes sure that $u$ will not
be reached by the fire;
moreover, the
constraints $x(V_{\leq \ell})\leq B\cdot \ell$
for $\ell\in [L]$ describe the vertex sets that can be
protected given $B$ firefighters per time step
(see~\cite{ChalermsookChuzhoy2010} for more details).
Also, as already highlighted in~\cite{ChalermsookChuzhoy2010},
there is an optimal solution to RMFC (and also to the Firefighter
problem), that protects with the firefighters available at
time step $\ell$ only vertices in $V_\ell$.
Hence, the above relaxation
can be transformed into one with same optimal objective value
by replacing the constraints
$x(V_{\leq \ell})\leq B\cdot \ell$ \;$\forall\ell\in [L]$
by the constraints
$x(V_\ell) \leq B$ \;$\forall \ell\in [L]$.

The natural LP relaxation for the Firefighter
problem, which leads to the previously best
$(1-1/e)$-approximation presented in~\cite{CaiVerbinYang2008},
is obtained analogously.
Due to higher generality, and even more importantly
to obtain more flexibility
in reductions to be defined later, we work on a slight
generalization of the Firefighter problem on trees,
extending it in two ways:
\begin{enumerate}[nosep,label=(\roman*)]
\item Weighted version: vertices $u\in V\setminus \{r\}$ have
weights $w(u)\in \mathbb{Z}_{\geq 0}$, and the goal
is to maximize the total weight of vertices not catching
fire.
In the classical Firefighter problem all weights are one.

\item General budgets/firefighters:
We allow for having a different number of
firefighters at each time step, say $B_\ell \in \mathbb{Z}_{>0}$
firefighters for time step $\ell\in [L]$.\footnote{Without
loss of generality we exclude $B_\ell=0$, since a level
with zero budget can be eliminated through a simple
contraction operation. For more details we refer to
the proof of Theorem~\ref{thm:compressionFF} which,
as a sub-step, eliminates zero-budget levels.
}
\end{enumerate}
Indeed, the above generalizations are mostly for convenience
of presentation, since general budgets can be reduced to
unit budgets (see Appendix~\ref{apx:trans} for a proof):
\begin{lemma}\label{lem:genBudgetsToUnit}
Any weighted Firefighter problem on trees with $n$ vertices
and general budgets can be transformed efficiently 
into an equivalent weighted Firefighter problem with
unit-budgets and $O(n^2)$ vertices.
\end{lemma}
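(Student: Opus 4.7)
The plan is to ``unroll'' time into space by subdividing the tree edges, so that a general budget $B_\ell$ at level $\ell$ is replaced by $B_\ell$ consecutive unit-budget time steps.

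\emph{Construction.} Without loss of generality we may assume $B_\ell \leq n$ for every $\ell$, since protecting more than $n$ vertices in any step is never useful. For every edge $(p,c)$ of the original tree $G$ with $c \in V_\ell$, replace $(p,c)$ by a path of length $B_\ell$ obtained by inserting $B_\ell-1$ new \emph{dummy} vertices of weight $0$. Call the resulting tree $G'$. Because each of the $n-1$ edges is subdivided into at most $n$ edges, $G'$ has $O(n^2)$ vertices. An original vertex $u \in V_\ell$ now sits at depth $D_\ell := \sum_{k=1}^\ell B_k$ in $G'$. The transformed instance is the weighted Firefighter problem on $G'$ with unit budget at every time step, keeping the original vertex weights and assigning weight $0$ to all dummies.

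\emph{Equivalence, direction $G \Rightarrow G'$.} As recalled in the excerpt, there is an optimal solution to the original instance that at time $\ell$ only protects vertices of $V_\ell$; call these sets $S_1,\dots,S_L$ with $|S_\ell|\leq B_\ell$. In $G'$, use the $B_\ell$ consecutive time steps of the ``phase'' $I_\ell := \{D_{\ell-1}+1,\dots,D_\ell\}$ to protect the (at most $B_\ell$) vertices of $S_\ell$, one per step; unused slots are wasted. Each $u\in S_\ell$ lies at depth $D_\ell$ in $G'$, hence is still non-burning at any $t\in I_\ell$, and the set of saved original vertices is identical to the one in the original instance.

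\emph{Equivalence, direction $G' \Rightarrow G$.} Given any unit-budget solution on $G'$, partition its time steps into the phases $I_1,\dots,I_L$. For every vertex protected in phase $I_\ell$: if it is an original vertex, keep it; if it is a dummy on the subdivided edge $(p,c)$ with $c$ an original child, replace it by $c$. Let $S_\ell$ be the multiset obtained from phase $I_\ell$ (deduplicated). Then $|S_\ell|\leq |I_\ell|=B_\ell$, and because protecting a dummy on the edge leading to an original child $c$ cuts exactly the same sub-collection of original vertices as protecting $c$ itself, the set of saved original weight agrees.

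\emph{Main obstacle.} The only delicate step is verifying, in the $G'\Rightarrow G$ direction, that the resulting $S_\ell$ is a feasible original protection, i.e., that the vertex one writes into $S_\ell$ is still non-burning at original time $\ell$. A dummy on the edge entering a vertex $c\in V_{\ell'}$ lies at some depth $d$ with $D_{\ell'-1}<d<D_{\ell'}$, and if it is protected in phase $I_\ell$ then the firefighter acts at a time $t>D_{\ell-1}$ with $t\leq d<D_{\ell'}$, forcing $\ell'\geq \ell$. Hence the replacement vertex $c\in V_{\ell'}$ satisfies $\level(c)\geq \ell$, which is exactly the condition for $c$ to be protectable at time $\ell$ in the original instance. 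Once this bookkeeping is settled, both directions preserve the total saved weight, so the two instances have the same optimum value.
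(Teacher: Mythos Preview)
Your proof is correct and follows essentially the same construction as the paper: subdivide each edge entering level $\ell$ into a path of length $B_\ell$, give the new vertices weight zero, and use unit budgets throughout; your $G'\Rightarrow G$ direction is in fact more carefully argued than the paper's, which simply declares the correspondence ``straightforward.'' One small caveat: the inequality $t\le d$ in your feasibility check tacitly assumes the $G'$ solution is redundancy-free (a vertex at depth $d<t$ can still be non-burning if an ancestor was already protected), but this is harmless since such redundant protections can be discarded without affecting the saved weight.
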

We also show in Appendix~\ref{apx:trans} that
up to an arbitrarily small error in terms
of objective, any weighted Firefighter instance can be
reduced to a unit-weighted one.
In what follows, we always assume to deal with
a weighted Firefighter instance if not specified
otherwise. Regarding the budgets, we will be
explicit about whether we work with unit or
general budgets, since some techniques are easier
to explain in the unit-budget case, even though
it is equivalent to general budgets by
Lemma~\ref{lem:genBudgetsToUnit}.

\smallskip

An immediate extension of the LP relaxation
for the unit-weighted unit-budget Firefighter
problem used in~\cite{CaiVerbinYang2008}---which
is based on an IP formulation
presented in~\cite{macgillivray_2003_firefighter}---leads
to the
following LP relaxation 
for the weighted Firefighter problem
with general budgets. 
For $u\in V$, we denote by $T_u\subseteq V$
the set of all vertices in the subtree starting
at $u$ and including $u$, i.e., all vertices $v$
such that the unique $r$-$v$ path in $G$ contains $u$.

\begin{equation}\label{eq:lpFF}
\begin{array}{*2{>{\displaystyle}r}c*2{>{\displaystyle}l}}
\max & \sum_{u\in V\setminus \{r\}} x_u w(T_u) & & \\
     & x(P_u) &\leq &1   &\forall u\in \Gamma \\
     & x(V_{\leq \ell}) &\leq &\sum_{i=1}^\ell B_i\hspace*{2em}
               &\forall \ell\in [L]\\[1.5em]
     & x &\in &\mathbb{R}_{\geq 0}^{V\setminus \{r\}}.
\end{array}\tag{$\mathrm{LP_{FF}}$}
\end{equation}
The constraints
$x(P_u)\leq 1$ exclude redundancies, i.e., a vertex
$u$ is forbidden of being protected if another vertex above
it, on the $r$-$u$ path, is already protected. This
elimination of redundancies allows for writing the objective
function as shown above.

We recall that the integrality gap of~\ref{eq:lpRMFC}
was shown to be $\Theta(\log^* n)$~\cite{ChalermsookChuzhoy2010},
and the integrality gap of~\ref{eq:lpFF} is
asymptotically  $1-1/e$
(when $n\to \infty$)~\cite{chalermsook_2016_new}.

\smallskip

Throughout the paper, all logarithms are of base $2$ if
not indicated otherwise.
When using big-$O$ and related notations
(like $\Omega, \Theta, \ldots$), we will always
be explicit about the dependence on 
small error terms $\epsilon$---as used when talking
about $(1-\epsilon)$-approximations---and not consider
it to be part of the hidden constant.
To make statements where $\epsilon$ is part of the
hidden constant, we will use the notation
$O_{\epsilon}$ and likewise
$\Omega_{\epsilon}, \Theta_{\epsilon},\ldots$.

\section{Overview of techniques and algorithms}
\label{sec:overview}

In this section, we present our main technical
contributions and outline our algorithms.
We start by introducing a compression technique in
Section~\ref{subsec:compression} that works
for both RMFC and the Firefighter problem and allows
for transforming any instance to one on a tree with
only logarithmic depth.
One key property we achieve with compression,
is that we can later use (partial)
enumeration techniques with exponential running
time in the depth of the tree. 
However, compression on its own already leads to
interesting results. In particular, it allows
us to obtain a QPTAS for
the Firefighter problem, and a quasipolynomial time
$2$-approximation for RMFC.\footnote{The running time
of an algorithm is \emph{quasipolynomial} if it is
of the form $2^{\polylog(\langle \mathrm{input} \rangle)}$,
where $\langle \mathrm{input} \rangle$ is the input
size of the problem. A QPTAS is an algorithm that,
for any constant $\epsilon >0$, returns
a $(1-\epsilon)$-approximation in quasipolynomial
time.}
However, it seems highly
non-trivial to transform these quasipolynomial time
procedures to efficient ones.

To obtain the claimed results, we develop two 
(partial) enumeration methods to reduce the integrality
gap of the LP.
In Section~\ref{subsec:overviewFirefighter}, we provide
an overview of our PTAS for the Firefighter problem,
and Section~\ref{subsec:overviewRMFC} presents 
our $O(1)$-approximation for RMFC.

\subsection{Compression}\label{subsec:compression}

Compression is a technique that is applicable to both
the Firefighter problem and RMFC. It allows for reducing
the depth of the input tree at a very small loss in
the objective.
We start by discussing compression in the context of
the Firefighter problem. 

To reduce the depth of the tree, we will
first do a sequence of what we call \emph{down-pushes}.
Each down-push acts on two levels $\ell_1,\ell_2\in [L]$
with $\ell_1 < \ell_2$ of the tree, and moves the budget $B_{\ell_1}$
of level $\ell_1$ down to $\ell_2$, i.e., the new
budget of level $\ell_2$ will be $B_{\ell_1}+B_{\ell_2}$,
and the new budget of level $\ell_1$ will be $0$.
Clearly, down-pushes only restrict our options for
protecting vertices. However, we can show that
one can do a sequence of down-pushes such that
first, the optimal objective value of the new instance
is very close to the one of the original instance,
and second,
only $O(\log L)$ levels have non-zero budgets.
Finally, levels with $0$-budget can easily be removed
through a simple contraction operation, thus leading
to a new instance with only $O(\log L)$ depth.

Theorem~\ref{thm:compressionFF} below
formalizes our main compression
result for the Firefighter problem, which we state for
unit-budget Firefighter instances for simplicity.
Since Lemma~\ref{lem:genBudgetsToUnit}
implies that every general-budget
Firefighter instance with $n$ vertices can
be transformed into a unit-budget Firefighter instance
with $O(n^2)$ vertices---and thus $O(n^2)$
levels---Theorem~\ref{thm:compressionFF} can also be used to reduce
any Firefighter instance on $n$ vertices to one
with $O(\frac{\log n}{\delta})$
levels, by losing a factor of at most $1-\delta$
in terms of objective.

\begin{theorem}\label{thm:compressionFF}
Let $\mathcal{I}$ be a unit-budget Firefighter instance 
on a tree with depth $L$, and let $\delta\in (0,1)$.
Then one can efficiently construct a general budget
Firefighter instance $\overline{\mathcal{I}}$ with depth
$L'=O(\frac{\log L}{\delta})$, and such that
the following holds, where $\val(\OPT(\overline{\mathcal{I}}))$
and $\val(\OPT(\mathcal{I}))$ are the optimal values of
$\overline{\mathcal{I}}$ and $\mathcal{I}$, respectively.

\smallskip

\begin{enumerate}[nosep,label=(\roman*)]
\item
$\val(\OPT(\overline{\mathcal{I}}))
  \geq (1-\delta) \val(\OPT(\mathcal{I}))$, and

\item any solution to $\overline{\mathcal{I}}$ can be
transformed efficiently into a solution of $\mathcal{I}$
with same objective value.
\end{enumerate}
\end{theorem}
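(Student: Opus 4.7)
The plan is to construct $\overline{\mathcal I}$ in three explicit steps and then verify the two claims separately. First, fix an approximately geometric sequence of \emph{kept} levels $0 = L_0 < L_1 < \cdots < L_k = L$ with $L_i / L_{i-1} \leq 1+\delta$, so that $k = O(\log L / \delta)$. Perform the down-pushes: for each non-kept $\ell \in (L_{i-1}, L_i)$ push its unit budget down to $L_i$, producing budget $L_i - L_{i-1}$ at $L_i$ and $0$ elsewhere. Next, contract the zero-budget levels into a tree $\overline G$ whose vertex set is the kept part of $V$, in which a kept $v \in V_{L_{i-1}}$ is the parent of every kept $v' \in V_{L_i}$ with $v \in P_{v'}$. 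Finally, transfer the weight $w(u)$ of each non-kept $u$ to its nearest kept ancestor $a(u)$; a direct combinatorial check yields $\overline w(\overline{T_v}) = w(T_v)$ for every kept $v$, because for a non-kept $u \in T_v$ the ancestor $a(u)$ must lie in $\overline{T_v}$ (otherwise $v$ itself would be a kept ancestor of $u$ strictly closer than $a(u)$).

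Property~(ii) is then immediate: view any feasible $\overline S$ in $\overline{\mathcal I}$ as a subset of $V$ and schedule its $|\overline S \cap V_{L_i}| \leq L_i - L_{i-1}$ protections at $L_i$ over the $L_i - L_{i-1}$ distinct time steps inside $(L_{i-1}, L_i]$; each protection occurs before the fire reaches $L_i$, so the unit-per-step budget is respected, and the weight-preservation identity makes the objectives equal. For property~(i), given an optimal antichain $S^*$ of $\mathcal I$, build $S$ in $\overline{\mathcal I}$ by keeping each kept $v \in S^*$ and replacing each non-kept $v \in S^*$ at level $\ell \in (L_{i-1}, L_i)$ by a descendant $v'_v \in T_v \cap V_{L_i}$ greedily maximising $\overline w(\overline{T_{v'_v}})$; feasibility follows from $|S^* \cap V_{(L_{i-1}, L_i]}| \leq L_i - L_{i-1}$. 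To bound the loss I pick the geometric sequence through a uniformly random shift and decompose $\val(S^*) = \sum_u w(u)\,\mathbf{1}\{u\text{ saved by }S^*\}$; a saved $u$ is dropped by $S$ only when its protecting ancestor $v(u)$ is non-kept, and then only if $u$ lies either in the enclosing gap $[\ell_{v(u)}, L_i)$ of $T_{v(u)}$ or in a sibling subtree of $v'_{v(u)}$ at level $\geq L_i$. The random shift keeps the expected gap loss at $O(\delta)\,\val(S^*)$ via the fact that the enclosing gap at level $\ell$ has length $O(\delta\ell)$, and the sibling loss is absorbed by spending the compressed-budget slack $L_i - L_{i-1} - |S^* \cap V_{(L_{i-1}, L_i]}|$ on heaviest remaining descendants.

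The main obstacle I anticipate is the sibling-loss estimate, since the naive per-$v$ bound of $1 - 1/|D_v|$ is far too weak. The analysis must aggregate globally across $S^*$, exploiting both the compressed budget slack and the antichain structure of $S^*$. The cleanest route, I believe, is at the LP level: redistribute mass on non-kept vertices of an optimal fractional solution to $\eqref{eq:lpFF}$ to kept descendants proportionally to $\overline w$, use the geometric gap to lower-bound the fractional compressed objective by $(1-O(\delta))\val(\OPT(\mathcal I))$, and then convert back to an integer compressed solution using that the compressed budgets at kept levels match the cumulative unit budgets $L_i$ of $\mathcal I$, so an antichain structure like $S^*$ can be followed block-by-block without further loss. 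A final rescaling $\delta \mapsto O(\delta)$ then yields~(i).
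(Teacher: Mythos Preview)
Your construction of $\overline{\mathcal I}$ (geometric kept levels, down-pushes, contraction with weight transfer to the nearest kept ancestor) matches the paper, and your argument for~(ii) is fine. The gap is in~(i): moving each non-kept $v\in S^*$ to a kept \emph{descendant} is the wrong direction, and the ``sibling loss'' you flag is not merely a technical nuisance but is in general $\Omega(1)$ and cannot be absorbed by any of the three devices you propose. Concretely, take a single $v\in S^*$ at a non-kept level $\ell\in(L_{i-1},L_i)$ whose descendants at level $L_i$ are $M$ subtrees each of weight~$1$; then $w(T_v)\approx M$, while any single descendant recovers only~$1$. The cumulative compressed slack at block~$i$ is at most $L_i$, so even spending all of it on extra descendants recovers at most $L_i\ll M$. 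Your random shift only controls the gap weight inside $T_v\cap V_{[\ell,L_i)}$, not the fan-out at $L_i$. And your LP redistribution to descendants ``proportionally to $\overline w$'' fails on the same example: unit mass at $v$ contributing $M$ to the objective becomes total mass~$1$ spread over $M$ descendants, each contributing~$1$, for a new objective of~$1$.

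The missing idea is to go the other way: do not push $S^*$ down to kept descendants, but rather select a \emph{subset} $\overline S\subseteq S^*$ that is feasible for the push-down budgets (before contraction). After the down-pushes the binding constraints are $|S\cap V_{\le L_{i+1}-1}|\le L_i$, while the original optimum satisfies $|S^*\cap V_{\le L_{i+1}-1}|\le L_{i+1}-1\le(1+\delta)L_i$ by the geometric spacing. Hence $\frac{1}{1+\delta}\chi^{S^*}$ lies in the corresponding (laminar matroid, hence integral) polytope, so there exists an integral $\overline S\subseteq S^*$ feasible for the push-down instance with $\sum_{u\in\overline S}w(T_u)\ge\frac{1}{1+\delta}\val(\OPT(\mathcal I))\ge(1-\delta)\val(\OPT(\mathcal I))$. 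Contraction of zero-budget levels then identifies any non-kept vertex of $\overline S$ with its kept \emph{ancestor}, which is budget-feasible (because a vertex sitting in a zero-budget level can always be lifted to its parent without violating cumulative budgets) and only increases value. This is exactly the paper's argument; your plan never reaches it because once you insist on landing at kept descendants, the loss is irreparable.
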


For RMFC we can use a very similar compression technique
leading to the following.

\begin{theorem}\label{thm:compressionRMFC}
Let $G=(V,E)$ be a rooted tree of depth $L$.
Then one can construct efficiently a rooted
tree $G'=(V',E')$ with $|V'|\leq |V|$ and
depth $L'=O(\log L)$, such that:

\smallskip

\begin{enumerate}[nosep,label=(\roman*)]
\item If the RMFC problem on $G$ has a 
solution with budget $B\in \mathbb{Z}_{> 0}$ at
each level, then the RMFC problem on $G'$
with non-uniform budgets, where level $\ell \geq 1$
has a budget of $B_\ell=2^{\ell} \cdot B$, has a solution.

\item Any solution to the RMFC problem on $G'$,
where level $\ell$ has budget $B_\ell=2^{\ell} \cdot B$,
can be transformed efficiently into an RMFC solution
for $G$ with budget $2B$.
\end{enumerate}

\end{theorem}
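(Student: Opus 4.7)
My approach would mirror the compression strategy of Theorem~\ref{thm:compressionFF}, adapted to the cut-all-leaves nature of RMFC. I would pick a geometric sequence of checkpoint depths $d_0=0<d_1<\dots<d_{L'}$ in $G$ with $d_i\sim 2^i$ and $L'=O(\log L)$, and set $V'$ to be the set of vertices of $V$ at checkpoint depths, together with all leaves of $G$ (so that every leaf retains a representative requiring protection).  The edges of $G'$ would arise by contracting, along each root-to-leaf path of $G$, the intermediate subpaths between consecutive $V'$-vertices.  This yields $|V'|\le|V|$ and $G'$ of depth $L'$; a vertex at $G$-depth $d\in(d_{\ell-1},d_\ell]$ is placed at level $\ell$ of $G'$.

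For part~(i), given an integer RMFC solution $X\subseteq V$ for $G$ satisfying $|X\cap V_d|\le B$ for all $d$, I would construct $X'$ by pushing each $v\in X$ up to the closest $V'$-ancestor on the root-path of $v$ (keeping $v$ itself if $v\in V'$).  Pushing up replaces a vertex by an ancestor, so the cutting property carries over.  The number of images arriving at level $\ell$ of $G'$ is bounded by the sum of per-depth budgets of $G$ across the depth range mapping to level $\ell$, namely $(d_\ell-d_{\ell-1})B$, which is at most $2^\ell B=B_\ell$.

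For part~(ii), given an integer $X'\subseteq V'$ with $|X'\cap V'_\ell|\le2^\ell B$ that cuts every leaf of $G'$, I would first trim $X'$ to an inclusion-minimal sub-solution in which every remaining vertex is critical for some leaf.  Then, for each $v'\in X'$ at level $\ell$, I would replace $v'$ by an ancestor in $G$ at a chosen $G$-depth within $(d_{\ell-1},d_\ell]$, using a Hall-type matching argument to spread the placements so that no $G$-depth gets more than $2B$.  The range $(d_{\ell-1},d_\ell]$ contains $d_\ell-d_{\ell-1}\sim 2^{\ell-1}$ $G$-depths whose combined capacity $2B\cdot(d_\ell-d_{\ell-1})$ matches (up to the factor $2$) the level-$\ell$ budget $2^\ell B$, so a feasible spreading exists.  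Since each chosen ancestor still dominates the subtree cut by $v'$, the resulting $X$ remains a valid cut of every leaf of $G$.

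The principal technical obstacle lies in part~(ii): an adversarial $X'$ can concentrate protection on many leaves of $G$ sitting at a single non-checkpoint $G$-depth, restricting the $G$-depths admissible for those vertices.  Overcoming this depends on the critical-minimization step (which removes the redundant low-level protections in $X'$ that would otherwise over-fill the cumulative $G$-budget), combined with the factor-$2$ slack in the budget $2B$, which is precisely what the Hall / flow condition needs to accommodate the remaining load imbalance.
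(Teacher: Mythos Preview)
Your construction of $G'$ differs from the paper's in a way that actually breaks part~(ii) as a statement, not just as an algorithm. By placing every leaf of $G$ into $V'$ at the level determined by its $G$-depth, you allow $G'$ to carry strictly more cumulative budget than $G$-with-$2B$ can match. Concretely, take $G$ to consist of $K$ vertex-disjoint root-to-leaf paths of length $5$ and use checkpoints $d_i=2^i$; then each path contributes one vertex to each of levels $1,2,3$ of $G'$ (from $G$-depths $2,4$, and the leaf at depth $5$), so the RMFC instance on $G'$ with budgets $2B,4B,8B$ is feasible whenever $K\le 14B$. But $G$ with budget $2B$ has cumulative budget only $10B$ through depth $5$, so for $K=14B$ no solution in $G$ exists at all. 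Neither critical minimisation nor any Hall/flow argument can repair this, because the infeasibility is information-theoretic: your $G'$ simply certifies too little about $G$. Even in instances where $G$-with-$2B$ \emph{is} feasible, your per-level placement rule can still fail: if a single vertex $a$ at depth $d_1$ has $4B$ leaf-children at depth $d_1+1$, then $X'=\{b_1,\dots,b_{4B}\}$ is inclusion-minimal at level $2$, yet each $b_i$'s only ancestor in $(d_1,d_2]$ is $b_i$ itself, forcing all $4B$ vertices to one $G$-depth.

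The paper sidesteps both problems by working with budgets rather than vertices. It performs \emph{up-pushes}: the budget of every level $\ell\notin\mathcal{L}=\{2^j-1\}$ is moved to the nearest $\mathcal{L}$-level above it, so level $2^j-1$ receives budget $2^jB$; then all zero-budget levels are contracted. Crucially, leaves of $G$ are absorbed upward into their checkpoint ancestors rather than retained as separate $V'$-vertices, so $G'$ never over-budgets. Part~(ii) then follows from a one-line cumulative-budget identity: the up-pushed instance has cumulative budget $\sum_{i\le j}2^iB=(2^{j+1}-2)B=2B(2^j-1)$ through original level $2^j-1$, which is exactly the cumulative $2B$-budget of $G$. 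Hence the up-pushed instance is a push-down of $G$-with-$2B$, and any feasible set in $G'$ un-contracts directly to a feasible set in $G$ with budget $2B$---no matching argument is needed.
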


Interestingly, the above compression results already
allow us to obtain strong quasipolynomial approximation
algorithms for the Firefighter problem and RMFC,
using dynamic programming.
Consider for example the RMFC problem. We can first guess
the optimal budget $B$, which can be done efficiently
since $B\in \{1,\dots, n\}$. Consider now the instance
$G'$ claimed by Theorem~\ref{thm:compressionRMFC}
with budgets $B_\ell = 2^{\ell} B$.
By Theorem~\ref{thm:compressionRMFC},
this RMFC instance is feasible
and any solution to it can be converted to one of
the original RMFC problem with budget $2B$.
It is not hard to see that, for the fixed budgets $B_\ell$,
one can solve the RMFC problem on $G'$ in quasipolynomial
time using a bottom-up dynamic programming approach.
More precisely, starting with the leaves and moving
up to the root, we compute for each vertex $u\in V$
the following table. Consider a subset of the available
budgets, which can be represented as a vector
$q\in [B_1]\times \dots \times [B_{L'}]$. For each such
vector $q$ we want to know whether or not using the sub-budget
described by $q$ allows for disconnecting $u$ from all
leaves below it.
Since $L'=O(\log L)$ and the size of each budget $B_\ell$
is at most the number of vertices, the table size is
quasipolynomial.
Moreover, one can check that these tables can be
constructed bottom-up in quasipolynomial time.
Hence, this approach leads to a quasipolynomial time
$2$-approximation for RMFC. We recall that there is no
efficient approximation algorithm with an approximation
ratio strictly below $2$,
unless $\Pcomp=\NPcomp$.
A similar dynamic programming approach for the Firefighter
problem on a compressed instance leads to a
QPTAS.

However, our focus is on efficient algorithms, and it
seems non-trivial to transform the
above quasipolynomial time dynamic programming approaches
into efficient procedures. To obtain our results,
we therefore combine
the above compression techniques with
further approaches to be discussed next.

\subsection{Overview of PTAS for Firefighter problem}
\label{subsec:overviewFirefighter}

Despite the fact that~\ref{eq:lpFF} has a large integrality
gap
---which can be shown to be the case even after
compression\footnote{This follows from the fact
that through compression with some
parameter $\delta\in (0,1)$, both the optimal
value and optimal LP value change at most
by a $\delta$-fraction.}---%
it is a crucial tool in our PTAS.
Consider a general-budget
Firefighter instance,
and let $x$ be a vertex solution to~\ref{eq:lpFF}.
We say that a vertex $u\in V\setminus \{r\}$ is
\emph{$x$-loose}, or simply \emph{loose}, if
$u\in \supp(x):=\{v\in V\setminus \{r\} \mid x(v) > 0\}$
and $x(P_u) < 1$.
Analogously, we call a vertex $u\in V\setminus \{r\}$
\emph{$x$-tight}, or simply \emph{tight}, if
$u\in \supp(x)$ and $x(P_u)=1$.
Hence, $\supp(x)$ can be partitioned into
$\supp(x)=\Vl \cup \Vt$,
where $\Vl$ and $\Vt$
are the set of all loose and tight vertices, respectively.
Using a sparsity argument for vertex solutions
of~\ref{eq:lpFF} we can bound the number of
$x$-loose vertices.

\begin{lemma}\label{lem:sparsityFF}
Let $x$ be a vertex solution to~\ref{eq:lpFF}
for a Firefighter problem with general budgets.
Then the number of $x$-loose vertices is at
most $L$, the depth of the tree.
\end{lemma}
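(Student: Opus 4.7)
The plan is to apply a standard basic-feasible-solution sparsity argument: since $x$ is a vertex of the polytope defined by~\ref{eq:lpFF}, it is the unique solution of a system of $|V\setminus\{r\}|$ linearly independent tight constraints. The tight nonnegativity constraints are exactly $x_v=0$ for $v\notin\supp(x)$, giving $|V\setminus\{r\}|-|\supp(x)|$ of them. Hence at least $|\supp(x)|$ linearly independent tight constraints must come from the leaf constraints $x(P_u)\le 1$ (for $u\in\Gamma$) and the budget constraints $x(V_{\le\ell})\le\sum_{i\le\ell}B_i$ (for $\ell\in[L]$).

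First, I would dispose of the budget constraints: there are only $L$ of them in total, so their contribution to the rank of the tight non-box constraints is at most $L$.

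Next, the key step is to bound the rank coming from tight leaf constraints by $|\Vt|$. For a tight leaf constraint $x(P_u)=1$, let $v_0(u)$ denote the deepest vertex in $P_u\cap\supp(x)$ (well-defined because $x(P_u)=1>0$). I claim $v_0(u)\in\Vt$: by choice of $v_0(u)$, all vertices of $P_u$ strictly below $v_0(u)$ lie outside $\supp(x)$, so
\[
x(P_{v_0(u)})=\sum_{v\in P_{v_0(u)}\cap\supp(x)}x_v=\sum_{v\in P_u\cap\supp(x)}x_v=x(P_u)=1,
\]
which shows $v_0(u)$ is tight. Moreover, once we project onto the $\supp(x)$-coordinates (using $x_v=0$ for $v\notin\supp(x)$), the leaf constraint reads $\sum_{v\in P_u\cap\supp(x)}x_v=1$, i.e., $\sum_{v\in P_{v_0(u)}\cap\supp(x)}x_v=1$, which depends only on $v_0(u)$. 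Consequently, tight leaf constraints sharing the same $v_0(u)$ collapse to a single equation, so the number of linearly independent projected tight leaf constraints is at most $|\{v_0(u):u\in\Gamma,\,x(P_u)=1\}|\le|\Vt|$.

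Combining the two bounds gives $|\supp(x)|\le|\Vt|+L$, and since $\supp(x)=\Vl\cup\Vt$ is a disjoint union, this rearranges to $|\Vl|\le L$. The only delicate point is the argument that projection collapses all tight leaf constraints with the same deepest support-vertex to a single equation; but this follows directly from the fact that all vertices strictly below $v_0(u)$ on $P_u$ contribute $0$ after projection, so no further bookkeeping is required.
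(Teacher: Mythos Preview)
Your proof is correct and follows essentially the same approach as the paper's: both pin down the vertex $x$ via a full-rank system of tight constraints, account for the tight nonnegativity constraints, and then bound the contribution of tight leaf constraints by $|\Vt|$ via the map sending a tight leaf $u$ to the deepest vertex of $P_u\cap\supp(x)$ (the paper calls this the ``first vertex on the $u$-root path in $\supp(x)$'', which is the same thing). The only cosmetic difference is that the paper fixes a basis containing all tight nonnegativity constraints and shows the map $u\mapsto f_u$ is injective on the selected leaf constraints, whereas you phrase the same idea as ``after projecting to the $\supp(x)$-coordinates, two tight leaf constraints with the same $v_0(u)$ become identical''; mathematically these are the same observation.
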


Having a vertex solution $x$ to~\ref{eq:lpFF},
we can consider a simplified LP obtained from~\ref{eq:lpFF}
by only allowing to protect vertices that are $x$-tight.
A simple yet useful property of $x$-tight vertices is that
for any $u,v\in \Vt$ with $u\neq v$
we have $u\not\in P_v$. Indeed, if $u\in P_v$, then
$x(P_u) \leq x(P_v) - x(v) < x(P_v)=1$ because $x(v)>0$.
Hence, no two tight vertices lie on the same leaf-root path.
Thus, when restricting~\ref{eq:lpFF} to $\Vt$,
the path constraints $x(P_u) \leq 1$ for $u\in \Gamma$ transform
into trivial constraints requiring $x(v)\leq 1$ for
$v\in \Vt$, and one can easily observe that the
resulting constraint system is totally unimodular because
it describes a laminar matroid constraint given by the
budget constraints (see~\cite[Volume B]{Schrijver2003} for more details on
matroid optimization).
Re-optimizing over this LP we get an integral solution
of objective value at least
$\sum_{u\in V\setminus \{r\}} x_u w(T_u)
  - \sum_{u\in \Vl} x_u w(T_u)$,
because the restriction of $x$ to $\Vt$
is still feasible for the new LP.

In particular, if $\sum_{u\in \Vl} x_u w(T_u)$
was at most $\epsilon\cdot\val(\OPT)$, where
$\val(\OPT)$ is the optimal value of the instance,
then this would lead to a PTAS.
Clearly, this is not true in general, since it would contradict
the $(1-\frac{1}{e})$-integrality gap of~\ref{eq:lpFF}.
In the following, we will present techniques to limit
the loss in terms of LP-value when re-optimizing
only over variables corresponding to tight vertices $\Vt$.

Notice that when we work with a compressed instance, by
first invoking Theorem~\ref{thm:compressionFF} with $\delta=\epsilon$,
we have $|\Vl|=O(\frac{\log N}{\epsilon})$, where $N$ is the number of
vertices in the original instance. Hence, a PTAS would
be achieved if for all $u\in \Vl$, we had
$w(T_u) = \Theta(\frac{\epsilon^2}{\log N})\cdot \val(\OPT)$.
One way to achieve this in quasipolynomial time is
to first guess a subset of $\Theta(\frac{\log N}{\epsilon^2})$ many
vertices of an optimal solution with highest impact,
i.e., among all vertices $u\in \OPT$ we
guess those with largest $w(T_u)$.
This techniques has been used in various other settings
(see for example~\cite{ravi_1996_constrained,grandoni_2014_new}
for further details)
 and leads to another QPTAS for the Firefighter problem.
Again, it is unclear how 
this QPTAS could be turned into an efficient procedure.

The above discussion motivates to investigate vertices
$u\in V\setminus \{r\}$ with 
$w(T_u) \geq \eta$ for some
$\eta = \Theta(\frac{\epsilon^2}{\log N}) \val(\OPT)$.
We call such vertices \emph{heavy};
later, we will provide an explicit definition of $\eta$
that does not depend on the unknown $\val(\OPT)$ and
is explicit about the hidden constant.
Let $H=\{u\in V\setminus \{r\} \mid w(u) \geq \eta\}$
be the set of all heavy vertices.
Observe that $G[H\cup \{r\}]$---i.e., the induced subgraph
of $G$ over the vertices $H\cup\{r\}$---is a subtree of
$G$, which we call the \emph{heavy tree}.

Recall that by the above discussion, if we work on a
compressed instance with $L=O(\frac{\log N}{\epsilon})$ levels,
and if an optimal vertex solution to~\ref{eq:lpFF}
has no loose vertices that are heavy, then an integral
solution can be obtained of value at 
least $1-\epsilon$ times the LP value.
Hence, if we were able to guess the
heavy vertices contained in an optimal solution,
the integrality gap of the reduced problem
would be small
since no heavy vertices are left in the LP,
and can thus not be loose anymore.

Whereas there are too many options
to enumerate over all possible subsets
of heavy vertices that an optimal solution
may contain, we will do a coarser
enumeration.
More precisely, we will partition
the heavy vertices into $O_{\epsilon}(\log N)$
subpaths and guess for each subpath whether
it contains a vertex of $\OPT$.
For this to work out we need
that the heavy tree has a very
simple topology;
in particular, it should only have
$O_{\epsilon}(\log N)$ leaves.
Whereas this does not hold in general,
we can enforce it by a further transformation
making sure that $\OPT$ saves a
constant-fraction of $w(V)$ which---as we
will observe next---indeed limits the
number of leaves of the heavy tree to $O_{\epsilon}(\log N)$.
Furthermore, this transformation is useful to
complete our definition of heavy vertices
by explicitly defining the threshold $\eta$.

\begin{lemma}\label{lem:pruning}
Let $\mathcal{I}$ be a general-budget Firefighter instance
on a tree $G=(V,E)$ with weights $w$.
Then for any $\lambda\in \mathbb{Z}_{\geq 1}$,
one can efficiently construct
a new Firefighter instance
$\overline{\mathcal{I}}$ on a subtree $G'=(V',E')$ of $G$
with same budgets,
by starting from $\mathcal{I}$ and applying
node deletions and weight reductions, such that
\smallskip
\begin{enumerate}[nosep, label=(\roman*)]
 \item\label{item:pruningSmallLoss}
    $\val(\OPT(\overline{\mathcal{I}})) \geq
    \left(1 - \frac{1}{\lambda}\right)
     \val(\OPT(\mathcal{I}))$, and

 \item\label{item:pruningLargeOpt}
   $\val(\OPT(\overline{\mathcal{I}})) \geq
         \frac{1}{\lambda} w'(V')$,
where $w'\leq w$ are the vertex weights
in instance $\overline{\mathcal{I}}$.
\end{enumerate}
\smallskip
The deletion of $u\in V$ corresponds to removing the whole
subtree below $u$ from $G$, i.e., all vertices in $T_u$.
\end{lemma}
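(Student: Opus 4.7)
The plan is to construct $\overline{\mathcal{I}}$ by a pruning procedure built around the $(1-1/e)$-approximation of Cai, Verbin and Yang~\cite{CaiVerbinYang2008}. Given any current Firefighter instance $\mathcal{I}'$, this algorithm produces in polynomial time a feasible protection $A$ with $\val(A) \geq (1-1/e)\, \val(\OPT(\mathcal{I}'))$; its \emph{saved set} $S_A \coloneqq \bigcup_{v \in A} T_v$ has total weight $w(S_A) = \val(A)$, so the vertices of $V \setminus S_A$ are precisely those burning under the schedule $A$, making them the natural targets for weight reduction or subtree deletion.

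The simplest construction is a single aggressive step: run the approximation on $\mathcal{I}$ to obtain $A$, and reduce to zero the weights of all vertices outside $S_A$ (without any node deletions, so $V' = V$). Then $w'(V') = \val(A)$; the protection $A$ remains feasible and now saves every remaining weighted vertex, so $\val(\OPT(\overline{\mathcal{I}})) = w'(V') = \val(A) \geq (1-1/e)\, \val(\OPT(\mathcal{I}))$. Condition~(ii) becomes the trivial inequality $w'(V') \geq w'(V')/\lambda$, while condition~(i) holds whenever $(1-1/e) \geq 1-1/\lambda$, i.e.\ for $\lambda \in \{1,2\}$. For $\lambda \geq 3$ this single step loses too much OPT value, so I would iterate a gentler variant: maintain the potential $\Phi_i \coloneqq w(V_i) - \lambda\, \val(\OPT(\mathcal{I}_i))$ (so condition~(ii) is exactly $\Phi_i \leq 0$) and, while $\Phi_i > 0$, use the current approximation $A_i$ to locate and delete a maximal burned subtree $T_u$ (rooted at some $u \notin S_{A_i}$) whose weight $w(T_u)$ is at least $\lambda$ times the induced decrease in $\val(\OPT)$. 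Each such deletion strictly decreases $\Phi_i$, so the procedure terminates in polynomial time, and a telescoping bound on per-iteration OPT losses delivers condition~(i).

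The main obstacle, and the most delicate part of the proof, is certifying in polynomial time the existence of such a ``profitable'' burned subtree at each iteration and tightly controlling the cumulative OPT losses. I would argue existence by a charging or averaging scheme over the maximal burned subtrees of $A_i$, possibly using the LP relaxation $\mathrm{LP_{FF}}$ to bound the total fractional OPT contribution inside the burned region: the hypothesis $\Phi_i > 0$ guarantees a surplus of burned weight over this fractional contribution, which should force at least one maximal burned subtree to meet the required weight-to-OPT-loss ratio. Making this averaging tight with the right constants is where the technical difficulty concentrates; once this is done, bookkeeping over the iterations and polynomial termination follow mechanically from the monotone decrease of $\Phi_i$.
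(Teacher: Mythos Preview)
Your approach is genuinely different from the paper's, but the sketch for $\lambda\geq 3$ has real gaps that I do not see how to close.

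First, the bookkeeping quantities are not computable. Both the stopping test $\Phi_i\le 0$ and the per-step selection of a ``profitable'' burned subtree require knowing $\val(\OPT(\mathcal{I}_i))$ and the induced OPT decrease from deleting a specific $T_u$; these are NP-hard to evaluate. You hint at using $\mathrm{LP_{FF}}$ as a proxy, but the analysis you outline is phrased in terms of the true OPT, and switching to LP values changes both the ratio and the telescoping.

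Second, even existentially the averaging does not give the ratio you need. If $S^*$ is OPT's saved set and $T_u$ ranges over the maximal burned subtrees of $A_i$, then the OPT loss from deleting $T_u$ is $w(T_u\cap S^*)$, and averaging yields
\[
\max_u \frac{w(T_u)}{w(T_u\cap S^*)} \;\ge\; \frac{w(V_i)-\val(A_i)}{w(S^*\setminus S_{A_i})} \;\ge\; \frac{w(V_i)-\val(\OPT_i)}{\val(\OPT_i)} \;>\; \lambda-1,
\]
using only $\Phi_i>0$. You get $\lambda-1$, not $\lambda$, and there is no slack to spare: $S^*$ can lie almost entirely inside the region burned by $A_i$.

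Third, and most seriously, even if every step achieved $W_i\ge\lambda\Delta_i$, telescoping gives $\sum_i\Delta_i\le\frac{1}{\lambda}\sum_i W_i\le\frac{1}{\lambda}w(V)$, which bounds the total OPT loss by $w(V)/\lambda$, not by $\val(\OPT(\mathcal{I}))/\lambda$. Since the whole point is to handle instances where $w(V)\gg\val(\OPT)$, this does not deliver~(i).

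For contrast, the paper avoids all of this by a direct one-shot greedy construction: at each level $\ell$ it picks the $\lambda B_\ell$ heaviest available subtrees, keeps only the union of these subtrees (zeroing weights on the connecting paths), and then proves~(i) by exhibiting a \emph{randomized} feasible solution in the pruned instance built from an optimal solution $S^*$. The key idea is that vertices of $S^*$ covered by an earlier greedy pick survive into the random solution with probability at least $1-1/\lambda$, while uncovered $S^*$-vertices are matched against greedy picks of at least equal weight. Property~(ii) is immediate since the greedy picks partition into $\lambda$ feasible solutions.
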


Since Lemma~\ref{lem:pruning} constructs a new instance
using only node deletions and weight reductions, any
solution to the new instance is also a solution to the
original instance of at least the same objective value.

Our PTAS for the Firefighter problem first applies the
compression Theorem~\ref{thm:compressionFF} with $\delta=\epsilon/3$
and then Lemma~\ref{lem:pruning} with
$\lambda = \lceil\frac{3}{\epsilon} \rceil$ to obtain
a general budget Firefighter instance on a tree $G=(V,E)$.
We summarize the properties of this new instance $G=(V,E)$
below. As before, to avoid confusion, we denote by $N$
the number of vertices of the 
original instance.

\begin{property}\leavevmode
\label{prop:preprocessedFF}

\begin{enumerate}[nosep, label=(\roman*)]
\item The depth $L$ of $G$ satisfies $L=O(\frac{\log N}{\epsilon})$.

\item $\val(\OPT) \geq \lceil\frac{3}{\epsilon} \rceil^{-1} w(V)
  \geq \frac{1}{4}\epsilon w(V)$.

\item The optimal value $\val(\OPT)$ of the new instance is
at least a $(1-\frac{2}{3}\epsilon)$-fraction of the
optimal value of the original instance.

\item Any solution to the new instance can be transformed
efficiently into a solution of the original instance
of at least the same value.
\end{enumerate}
\end{property}

Hence, to obtain a PTAS for the original instance, it
suffices to obtain, for any $\epsilon >0$, a 
$(1-\frac{\epsilon}{3})$-approximation for an instance
satisfying Property~\ref{prop:preprocessedFF}.
In what follows, we assume to work with an instance
satisfying Property~\ref{prop:preprocessedFF} and show
that this is possible.

Due to the lower bound on $\val(\OPT)$ provided
by Property~\ref{prop:preprocessedFF}, we now define
the threshold
$\eta = \Theta(\frac{\epsilon}{\log N}) \val(\OPT)$
in terms of $w(V)$ by
\begin{equation*}
\eta = \frac{1}{12} \frac{\epsilon^2}{L} w(V),
\end{equation*}
which implies that we can afford losing $L$ times a weight
of $\eta$, which will sum up to a total loss of at most
$\frac{1}{12}\epsilon^2 w(V) \leq \frac{1}{3} \epsilon \val(\OPT)$,
where the inequality is due to
Property~\ref{prop:preprocessedFF}.

Consider again the heavy tree $G[H\cup \{r\}]$. Due to
Property~\ref{prop:preprocessedFF} its topology is quite
simple. More precisely, the heavy tree has
only $O(\frac{\log N}{\epsilon^3})$ leaves.
Indeed, each leaf $u\in H$ of the heavy tree fulfills
$w(T_u) \geq \eta$, and two different leaves $u_1,u_2\in H$
satisfy $T_{u_1} \cap T_{u_2} = \emptyset$; since the
total weight of the tree is $w(V)$, the heavy tree
has at most
$w(V)/\eta = 12 L / \epsilon^2 = O(\frac{\log N}{\epsilon^3})$
many leaves.

In the next step, we define a well-chosen
small subset $Q$ of heavy vertices
whose removal (together with $r$) from $G$ will
break $G$ into components of weight at most $\eta$.
Simultaneously, we choose $Q$ such that removing
it together with $r$ from the heavy tree breaks
it into paths, over which we will do an enumeration
later.

\begin{lemma}\label{lem:setQ}
One can efficiently determine a set $Q\subseteq H$
satisfying the following.

\begin{enumerate}[nosep,label=(\roman*)]
\item $|Q|=O(\frac{\log N}{\epsilon^3})$.
\item $Q$ contains all leaves and all vertices
of degree at least $3$ of the heavy tree,
except for the root $r$.
\item Removing $Q\cup\{r\}$ from $G$ leads to a
graph $G[V\setminus (Q\cup \{r\})]$ where each
connected component has vertices whose weight
sums up to at most $\eta$.
\end{enumerate}

\end{lemma}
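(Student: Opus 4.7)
The plan is to build $Q$ in two stages. First, I would set $Q_0$ to be the set of heavy-tree leaves together with the heavy-tree vertices of degree at least $3$, excluding $r$; property (ii) is then immediate. By the observation in the paragraph preceding the lemma, the heavy tree has $O(\log N/\epsilon^3)$ leaves, and in any tree the number of degree-$\geq 3$ vertices is at most the number of leaves minus one, so $|Q_0|=O(\log N/\epsilon^3)$. Removing $Q_0\cup\{r\}$ from the heavy tree leaves a forest of maximum degree $2$, i.e., a disjoint union of paths $P=v_1,\dots,v_k$ (listed from closest to $r$ downward) whose vertices are heavy and of degree exactly $2$ in the heavy tree.

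The second stage processes each such path $P$ to add more cut vertices and ensure that every remaining component has weight at most $\eta$. For each $v_i\in P$, define
\begin{equation*}
w_i \;:=\; w(v_i) \;+\; \sum_{c}\,w(T_c),
\end{equation*}
where $c$ ranges over the children of $v_i$ in $G$ that do not lie on $P$; each such hanger $T_c$ has weight less than $\eta$ because its root $c$ is non-heavy. I would then walk the path from $v_1$ to $v_k$ maintaining a running chunk weight $W$, initialized to $0$: when processing $v_i$, if $W+w_i>\eta$, add $v_i$ to $Q$ and reset $W\leftarrow 0$; otherwise, update $W\leftarrow W+w_i$.

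For the size bound (i), I would use a standard charging argument. Let $v_{i_1},\dots,v_{i_{m_P}}$ be the cuts produced on path $P$ and $C_0,C_1,\dots,C_{m_P}$ the intervening chunks, so $w(C_j)\le\eta$ and, by the cut criterion, $w(C_{j-1})+w_{i_j}>\eta$ for every $j\in[m_P]$. Telescoping gives $m_P\eta<\sum_{i=1}^k w_i$. Since the paths $P$ and their associated hangers are pairwise disjoint subsets of $V$, summing over all paths yields a total cut count strictly less than $w(V)/\eta = 12L/\epsilon^2 = O(\log N/\epsilon^3)$, using $L=O(\log N/\epsilon)$ from Property~\ref{prop:preprocessedFF}. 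Combined with the bound on $|Q_0|$, this gives $|Q|=O(\log N/\epsilon^3)$.

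Finally, for (iii), every connected component of $G\setminus(Q\cup\{r\})$ is of one of two types: either an individual hanger $T_c$ attached to a vertex of $Q\cup\{r\}$, with $w(T_c)<\eta$ because $c$ is non-heavy; or a chunk $C_j$ on some path $P$ together with all its off-path hangers, whose total weight equals $\sum_{v_i\in C_j} w_i = w(C_j)\le\eta$ by construction. The only nontrivial step is the charging inequality $m_P\eta<\sum_i w_i$; once one recognizes that each cut must be immediately preceded by a chunk that together with the cut vertex would exceed weight $\eta$, both the bound on $|Q|$ and the component weight bound fall out of the greedy construction.
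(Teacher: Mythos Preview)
Your approach is sound and genuinely different from the paper's. The paper builds $\overline{Q}$ by an iterative peeling procedure: repeatedly pick a leaf of the \emph{current} heavy tree $H_{G'}$, add it to $\overline{Q}$, and delete its entire subtree from $G'$; since each deletion removes weight at least $\eta$, at most $w(V)/\eta$ iterations occur, and afterwards all degree-$\geq 3$ heavy vertices are thrown in. Your construction instead fixes the topology first (leaves and branch vertices of the heavy tree) and then runs a greedy weight-chunking sweep along each resulting degree-$2$ path. Both routes yield the same $O(w(V)/\eta)=O(\log N/\epsilon^3)$ bound; yours makes the $Q$-path structure used downstream explicit from the start, while the paper's avoids the chunking bookkeeping entirely by charging each selected vertex directly to the $\eta$ units of weight it peels off.

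There is one small slip to fix. In your definition of $w_i$ you sum over ``children of $v_i$ in $G$ that do not lie on $P$''. For the bottom vertex $v_k$ of the path this is wrong: $v_k$ has degree $2$ in the heavy tree, so it has exactly one heavy child, and that child sits in $Q_0$ (it is a heavy-tree leaf or a branch vertex), not on $P$. As written, $w_k$ would therefore include the entire subtree below that heavy child, which breaks both your disjointness claim across paths and your identification of $\sum_{v_i\in C_j} w_i$ with the actual weight of the last component. The fix is immediate: let $c$ range only over the \emph{non-heavy} children of $v_i$. With that correction your charging inequality $m_P\eta<\sum_i w_i$ and the component-weight bound go through exactly as you wrote them.
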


For each vertex $q\in Q$, let $H_q\subseteq H$ be
all vertices that are visited when traversing the
path $P_q$ from $q$ to $r$
until (but not including) the next
vertex in $Q\cup \{r\}$.
Hence, $H_q$ is a subpath of the heavy tree such
that $H_q\cap Q = \{q\}$, which we call for brevity
a \emph{$Q$-path}.
Moreover the set of all $Q$-paths partitions $H$.

We use an enumeration procedure to determine on which
$Q$-paths to protect a vertex. Since $Q$-paths are subpaths
of leaf-root paths, we can assume that at most one vertex
is protected in each $Q$-path.
Our algorithm enumerates over all $2^{|Q|}$ possible
subsets $Z \subseteq Q$, where $Z$
represents the $Q$-paths on which we will protect a
vertex. Incorporating this guess into~\ref{eq:lpFF},
we get the following linear program~\ref{eq:lpFFZ}:

\begin{equation}\label{eq:lpFFZ}
\begin{array}{*2{>{\displaystyle}r}c*2{>{\displaystyle}l}}
\max & \sum_{u\in V\setminus \{r\}} x_u w(T_u) & & \\
     & x(P_u) &\leq &1   &\forall u\in \Gamma \\
     & x(V_{\leq \ell}) &\leq &\sum_{i=1}^\ell B_i\hspace*{2em}
               &\forall \ell\in [L]\\
     & x(H_q) &= &1 &\forall q\in Z\\
     & x(H_q) &= &0 &\forall q\in Q\setminus Z\\
     & x &\in &\mathbb{R}_{\geq 0}^{V\setminus \{r\}}.
\end{array}\tag{$\mathrm{LP_{FF}}(Z)$}\labeltarget{eq:lpFFZtarget}
\end{equation}

We start with a simple observation regarding~\ref{eq:lpFFZ}.
\begin{lemma}\label{lem:isFaceFF}
The polytope over which~\ref{eq:lpFFZ} optimizes is a face
of the polytope describing the feasible
region of~\ref{eq:lpFF}.
Consequently, any vertex solution of~\ref{eq:lpFFZ} is a
vertex solution of~\ref{eq:lpFF}.
\end{lemma}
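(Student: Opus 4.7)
The plan is to show that each of the new equalities imposed in \ref{eq:lpFFZtarget} defines a face of the polytope of~\ref{eq:lpFF}, and then to invoke the fact that an intersection of faces of a polytope is again a face, and every vertex of a face is a vertex of the whole polytope.

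First I would handle the constraints $x(H_q) = 0$ for $q \in Q\setminus Z$. Since $H_q \subseteq V\setminus\{r\}$ and since $x \ge 0$ in~\ref{eq:lpFF}, the inequality $x(H_q) \ge 0$ is valid for the~\ref{eq:lpFF}-polytope. The equality $x(H_q)=0$ therefore cuts out a face (equivalently, it is the intersection of the faces $x(v)=0$ for $v\in H_q$).

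Next I would handle the constraints $x(H_q) = 1$ for $q \in Z$. The key observation is that $H_q$ is, by its very definition, a subset of $P_q$: it is obtained by walking up from $q$ towards $r$ and collecting vertices until the next element of $Q\cup\{r\}$ is reached. Since $q\in Q\subseteq V\setminus\{r\}$, there exists some leaf $u\in\Gamma$ with $q\in P_u$ (pick any leaf in the subtree rooted at $q$); then $H_q \subseteq P_q \subseteq P_u$. Combined with $x\ge 0$, the constraint $x(P_u)\le 1$ of~\ref{eq:lpFF} implies $x(H_q) \le 1$ for every feasible $x$, so $x(H_q)\le 1$ is a valid inequality. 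Hence $\{x\in \text{LP}_{\text{FF}}\text{-polytope} : x(H_q) = 1\}$ is a face.

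Finally, the feasible region of~\ref{eq:lpFFZtarget} is the intersection of~\ref{eq:lpFF}'s polytope with all of the hyperplanes $x(H_q)=1$ for $q\in Z$ and $x(H_q)=0$ for $q\in Q\setminus Z$; by the two steps above this is an intersection of faces of the~\ref{eq:lpFF}-polytope, and an intersection of faces of a polytope is again a face. The "consequently" part then follows from the standard polyhedral fact that if $F$ is a face of $P$ and $x$ is a vertex of $F$, then $x$ is already a vertex of $P$ (because $x$ is the unique minimizer of some linear function over $F$, and, combined with any supporting functional for $F$, also the unique minimizer of a linear function over $P$). There is no real obstacle here; the only point that needs verifying is the containment $H_q\subseteq P_u$ for some leaf $u$, which is immediate from how $H_q$ is constructed.
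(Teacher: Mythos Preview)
Your proof is correct and follows essentially the same approach as the paper: both arguments observe that $x(H_q)\ge 0$ and $x(H_q)\le 1$ are valid inequalities for the \ref{eq:lpFF}-polytope (the latter because $H_q$ lies on some leaf-root path), so the additional equalities define a face. You are simply a bit more explicit about the standard polyhedral facts (intersection of faces is a face, vertices of a face are vertices of the ambient polytope) that the paper leaves implicit.
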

\begin{proof}
The statement immediately follows by observing that
for any $q\in Q$, the inequalities $x(H_q)\leq 1$
and $x(H_q)\geq 0$ are valid inequalities
for~\ref{eq:lpFF}.
Notice that $x(H_q)\leq 1$ is a valid inequality
for~\ref{eq:lpFF} because $H_q$ is a subpath of
a leaf-root path, and the load on any leaf-root
path is limited to $1$ in~\ref{eq:lpFF}.
\end{proof}

Analogously to~\ref{eq:lpFF} we define loose and tight
vertices for a solution to~\ref{eq:lpFFZ}.
A crucial implication of Lemma~\ref{lem:isFaceFF} is that
Lemma~\ref{lem:sparsityFF} also applies to any vertex
solution of~\ref{eq:lpFFZ}.

We will show in the following that for any choice of
$Z\subseteq Q$, the integrality gap of~\ref{eq:lpFFZ}
is small and we can efficiently obtain an integral solution of
nearly the same value as the optimal value of~\ref{eq:lpFFZ}.
Our PTAS then follows by enumerating all $Z\subseteq Q$
and considering the set $Z\subseteq Q$ of 
all $Q$-paths on which $\OPT$ protects a vertex.
The low integrality gap of~\ref{eq:lpFFZ} will follow from the fact
that we can now limit the impact of loose vertices.
More precisely, any loose vertex outside of the heavy
tree has LP contribution at most $\eta$ by definition of
the heavy tree. Furthermore, for each loose vertex $u$
on the heavy tree, which lies on some $Q$-path $H_q$,
its load $x(u)$ can be moved to the single tight vertex
on $H_q$. As we will show, such a load redistribution
will decrease the LP-value by at most $\eta$, due to our choice of $Q$.

We are now ready to state our $(1-\frac{\epsilon}{3})$-approximation
for an instance satisfying Property~\ref{prop:preprocessedFF},
which, as discussed, implies a PTAS for the Firefighter problem.
Algorithm~\ref{alg:FF} describes our
$(1-\frac{\epsilon}{3})$-approximation.

\begin{algorithm}

\begin{enumerate}[rightmargin=1em]
\item Determine heavy vertices $H=\{u\in V \mid w(T_u) \geq \eta\}$,
where $\eta=\frac{1}{12} \frac{\epsilon^2}{L} w(V)$.

\item Compute $Q\subseteq H$ using Lemma~\ref{lem:setQ}.

\item For each $Z\subseteq Q$, obtain an optimal vertex solution
to~\ref{eq:lpFFZ}. Let $Z^*\subseteq Q$ be a set for which the
optimal value
of~\hyperlink{eq:lpFFZtarget}{$\mathrm{LP_{FF}(Z^*)}$}
is largest among
all subsets of $Q$, and let $x$ be an optimal vertex
solution to~\hyperlink{eq:lpFFZtarget}{$\mathrm{LP_{FF}(Z^*)}$}.

\item\label{algitem:reoptFF}
Let $V^{\mathcal{T}}$ be the $x$-tight vertices.
Obtain an optimal vertex solution
to~\ref{eq:lpFF} restricted to variables
corresponding to vertices in $V^{\mathcal{T}}$.
The solution will be a $\{0,1\}$-vector, being the
characteristic vector of a
set $U\subseteq V^{\mathcal{T}}$ which we return.
\end{enumerate}

\caption{A $(1-\frac{\epsilon}{3})$-approximation for
a general-budget Firefighter instance satisfying
Property~\ref{prop:preprocessedFF}.}
\label{alg:FF}
\end{algorithm}

The following statement completes the proof of
Theorem~\ref{thm:PtasFF}.
\begin{theorem}\label{thm:PtasFFProp}
For any general-budget Firefighter instance satisfying
Property~\ref{prop:preprocessedFF},
Algorithm~\ref{alg:FF} computes efficiently a feasible
set of vertices $U\subseteq V\setminus \{r\}$ to protect
that is a $(1-\frac{\epsilon}{3})$-approximation. 
\end{theorem}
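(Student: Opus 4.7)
The theorem asserts three things about Algorithm~\ref{alg:FF}: it runs in polynomial time, it returns a feasible protection set $U$, and $w(U)\ge(1-\epsilon/3)\val(\OPT)$. Efficiency is immediate: by Lemma~\ref{lem:setQ} we have $|Q|=O(\log N/\epsilon^3)$, so the enumeration in step~3 runs in $2^{|Q|}=N^{O(1/\epsilon^3)}$ iterations, each solving a polynomial-size LP. Feasibility of $U$ follows from the observation noted earlier that no two $x$-tight vertices share a leaf-root path: restricting~\ref{eq:lpFF} to variables in $\Vt$ collapses the path constraints to $x_v\le 1$, and the remaining budget constraints describe a laminar matroid, so the restricted LP in step~\ref{algitem:reoptFF} has an integral optimum whose support is a feasible protection set $U\subseteq\Vt$.

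\emph{Comparison with OPT.} The next step is to show $\val(\mathrm{LP_{FF}}(Z^*))\ge\val(\OPT)$. Let $Z^{\OPT}:=\{q\in Q:\OPT\cap H_q\ne\emptyset\}$. Because one may assume $\OPT$ protects at most one vertex on each leaf-root path, it also protects at most one in each $H_q$; hence the indicator vector $\chi_{\OPT}$ satisfies every constraint of $\mathrm{LP_{FF}}(Z^{\OPT})$, giving
\[
\val(\mathrm{LP_{FF}}(Z^*))\;\ge\;\val(\mathrm{LP_{FF}}(Z^{\OPT}))\;\ge\;\val(\OPT).
\]

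\emph{Bounding the integrality loss.} Let $x$ be the vertex solution returned in step~3. Since $x\big|_{\Vt}$ is feasible for the restricted LP in step~\ref{algitem:reoptFF}, it suffices to bound $\sum_{u\in\Vl} x_u w(T_u)$. By Lemma~\ref{lem:isFaceFF} and Lemma~\ref{lem:sparsityFF} we have $|\Vl|\le L$, so it is enough to show each loose vertex contributes at most $\eta$. For $u\in\Vl\setminus H$ this is immediate: $x_u w(T_u)<\eta$ by the definition of the heavy threshold. For $u\in\Vl\cap H$, the plan is the following: $u$ lies on some $Q$-path $H_q$ with $x(H_q)=1$, and we show that a unique tight vertex $v_q^*$ exists on $H_q$ (its deepest $x$-positive node) and that reassigning $x_u$ to $v_q^*$ loses at most $\eta\cdot x_u$. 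Uniqueness is obtained by noting that for any leaf $\gamma$ below $q$, the constraints $x(P_\gamma)\le1$ and $x(H_q)=1$ together force $x$ to vanish on $P_\gamma\setminus H_q$; a direct computation of $x(P_v)$ for $v\in H_q$ then yields $x(P_v)=1$ iff $v$ is the deepest $x$-positive node of $H_q$. Moving $x_u$ down to $v_q^*$ preserves feasibility of $x$ for~\ref{eq:lpFF} and decreases the objective by exactly $x_u\cdot w(T_u\setminus T_{v_q^*})$. By Lemma~\ref{lem:setQ}(ii), $H_q\setminus\{q\}$ is disjoint from $Q$, so $q$ is the unique $Q$-vertex in $T_u$; since $q\in T_{v_q^*}$, the connected subtree $T_u\setminus T_{v_q^*}$ avoids $Q\cup\{r\}$ and hence, by Lemma~\ref{lem:setQ}(iii), has weight at most $\eta$. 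Summing over the $\le L$ loose vertices and applying Property~\ref{prop:preprocessedFF}(ii) gives
\[
\sum_{u\in\Vl} x_u w(T_u)\;\le\;L\eta\;=\;\tfrac{\epsilon^2}{12}\,w(V)\;\le\;\tfrac{\epsilon}{3}\val(\OPT),
\]
and combined with $\val(\mathrm{LP_{FF}}(Z^*))\ge\val(\OPT)$ this yields $w(U)\ge(1-\epsilon/3)\val(\OPT)$.

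\emph{Main obstacle.} The delicate point is the two-part structural claim for $H_q$: that the deepest $x$-positive node is the unique tight vertex, and that the residual weight $w(T_u\setminus T_{v_q^*})$ fits inside a single component of $G[V\setminus(Q\cup\{r\})]$. The first part is a careful combinatorial use of the leaf-path constraints $x(P_\gamma)\le 1$ jointly with the imposed equality $x(H_q)=1$; the second part requires that no $Q$-vertex of $T_u$ other than $q$ exists, which is exactly the role played by Lemma~\ref{lem:setQ}(ii), so that the weight bound of Lemma~\ref{lem:setQ}(iii) can be invoked. Once these are in place, the loss bound follows by simple summation.
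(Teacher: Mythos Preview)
Your approach is the same as the paper's: enumerate $Z\subseteq Q$, take a vertex solution $x$ of $\mathrm{LP_{FF}}(Z^*)$, and exhibit a point supported on $\Vt$ whose objective is within $L\eta$ of $\val(x)$. The substance is right, but there is an internal inconsistency in how you set up and conclude the loss bound, plus a smaller gap in the component argument.

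\textbf{The framing error.} You open by saying that since $x|_{\Vt}$ is feasible for the restricted LP, it suffices to bound $\sum_{u\in\Vl}x_u\,w(T_u)$. For heavy loose vertices, however, you never bound $x_u\,w(T_u)$; you bound the \emph{moving loss} $x_u\cdot w(T_u\setminus T_{v_q^*})$. These are not the same: a heavy loose $u$ near the root can have $x_u\,w(T_u)$ of order $w(V)$, far exceeding $\eta$. Thus the displayed inequality $\sum_{u\in\Vl}x_u\,w(T_u)\le L\eta$ does not follow from your argument and is false in general. What you have actually established is that the vector $y$ obtained from $x$ by zeroing non-heavy loose vertices and shifting each heavy loose $u\in H_q$ onto $v_q^*$ satisfies $\val(x)-\val(y)\le L\eta$. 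Since $\supp(y)\subseteq\Vt$, $y(v_q^*)=x(H_q)=1$, and load was only dropped or pushed to deeper levels, $y$ is feasible for the restricted LP; hence $\val(U)\ge\val(y)\ge\val(x)-L\eta$. This is exactly the paper's argument, and your moving step is the right one---only the comparison point should be $y$, not $x|_{\Vt}$.

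\textbf{The component argument.} The claim ``$q$ is the unique $Q$-vertex in $T_u$'' is false: $T_q\subseteq T_u$ may contain many $Q$-vertices. What you need is that $T_u\setminus T_{v_q^*}\subseteq T_u\setminus T_q$ avoids $Q$. Knowing that the path $H_q\setminus\{q\}$ avoids $Q$ is not enough; you must also rule out $Q$-vertices in the subtrees hanging off this path away from $q$. This is where the full strength of Lemma~\ref{lem:setQ}(ii) enters: every vertex of $H_q\setminus\{q\}$, not being in $Q$, has degree exactly two in the heavy tree, so its off-path children are non-heavy and their subtrees contain no $Q$-vertices. With this, $T_u\setminus T_q$ is a connected subset of $G[V\setminus(Q\cup\{r\})]$ and Lemma~\ref{lem:setQ}(iii) gives the bound $\eta$. (The paper's proof states this containment without spelling out the degree-two step either.)
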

\begin{proof}
First observe that the linear program solved in
step~\ref{algitem:reoptFF} will indeed lead to
a characteristic vector with only $\{0,1\}$-components.
This is the case since no two $x$-tight vertices
can lie on the same leaf-root path. Hence, as discussed
previously, the linear program~\ref{eq:lpFF} restricted
to variables corresponding to $V^{\mathcal{T}}$ is totally
unimodular; indeed, the leaf-root path constraints $x(P_u)\leq 1$
for $u\in \Gamma$
reduce to $x(v)\leq 1$ for $v\in V^{\mathcal{T}}$, and
the remaining LP corresponds to a linear program over a laminar
matroid, reflecting the budget constraints.
Moreover, the set $U$ is clearly budget-feasible since 
the budget constraints are enforced by~\ref{eq:lpFF}.
Also, Algorithm~\ref{alg:FF} runs in polynomial time
because $|Q|=O(\frac{\log N}{\epsilon^3})$
by Lemma~\ref{lem:setQ} and hence,
the number of subsets of $Q$ is bounded by
$N^{O(\frac{1}{\epsilon^3})}$.

It remains to show that $U$ is a
$(1-\frac{\epsilon}{3})$-approximation.
Let $\OPT$ be an optimal solution to the considered
Firefighter instance with value $\val(\OPT)$.
Observe first that the value $\nu^*$
of~\hyperlink{eq:lpFFZtarget}{$\mathrm{LP_{FF}(Z^*)}$}
satisfies $\nu^* \geq \val(\OPT)$, because
one of the sets $Z\subseteq Q$ corresponds to
$\OPT$, namely $Z=\{q\in Q \mid H_q\cap \OPT \neq \emptyset\}$,
and for this $Z$ the characteristic vector
$\chi^{\OPT}\in \{0,1\}^{V\setminus \{r\}}$
of $\OPT$ is feasible
for~\ref{eq:lpFFZ}.
We complete the proof of Theorem~\ref{thm:PtasFFProp}
by showing that the value $\val(U)$ of $U$ satisfies
$\val(U) \geq (1-\frac{\epsilon}{3}) \nu^*$.
For this we show how to transform an optimal solution
$x$ of~\hyperlink{eq:lpFFZtarget}{$\mathrm{LP_{FF}(Z^*)}$}
into a solution $y$
to~\hyperlink{eq:lpFFZtarget}{$\mathrm{LP_{FF}(Z^*)}$}
with $\supp(y) \subseteq V^{\mathcal{T}}$
and such that the objective value $\val(y)$ of $y$ satisfies
$\val(y)\geq (1-\frac{\epsilon}{3}) \nu^*$.

Let $V^{\mathcal{L}} \subseteq \supp(x)$ be the set of
$x$-loose vertices, and let $H$ be all heavy vertices,
as usual. To obtain $y$, we start with $y=x$
and first set $y(u)=0$ for each $u\in V^{\mathcal{L}}\setminus H$.
Moreover, for each $u\in V^{\mathcal{L}}\cap H$ 
we do the following. Being part of the heavy vertices and
fulfilling $x(u)>0$, the vertex $u$
lies on some $Q$-path $H_{q_u}$ for some $q_u\in Z^*$.
Because $x(H_{q_u})=1$, there is a tight vertex
$v\in H_{q_u}$. We move the $y$-value from vertex
$u$ to vertex $v$, i.e., $y(v) = y(v)+y(u)$ and
$y(u)=0$. This finishes the construction of $y$.
Notice that $y$ is feasible
for~\hyperlink{eq:lpFFZtarget}{$\mathrm{LP_{FF}(Z^*)}$},
because it was obtained from $x$ by reducing values
and moving values to lower levels. 

To upper bound the reduction of the LP-value when
transforming $x$ into $y$, we show that the modification
done for each loose vertex $u\in V^{\mathcal{L}}$ decreased
the LP-value by at most $\eta$.
Clearly, for each $u\in V^{\mathcal{L}}\setminus H$,
since $u$ is not heavy we have $w(T_u)\leq \eta$; thus
setting $y(u)=0$ will have an impact of at most $\eta$
on the LP value.
Similarly, for $u\in V^{\mathcal{L}}\cap H$, moving the
$y$-value of $u$ to $q_u$ decreases the LP objective value
by
\begin{equation*}
y(u) \cdot \left(w(T_u) - w(T_{v})\right)
\leq
w(T_u) - w(T_v)
= w(T_u \setminus T_{v})
\leq \eta,
\end{equation*}
where the last inequality follows by observing
that $T_u \setminus T_{v}\subseteq T_u\setminus T_{q_u}$
are vertices in the
same connected component of $G[V\setminus (Q\cup \{r\})]$,
and thus have a total weight of at most $\eta$
by Lemma~\ref{lem:setQ}.

Hence,
$\val(x) - \val(y) \leq |V^{\mathcal{L}}|
  \cdot \eta \leq L\cdot \eta$,
where the second inequality follows by
Property~\ref{prop:preprocessedFF}.
This completes the proof by 
observing 
that $|V^{\mathcal{L}}| \leq L$
by Lemma~\ref{lem:sparsityFF}, and thus
\begin{align*}
\val(y) &= \val(x) + \left(\val(y) - \val(x)\right)
\geq \val(\OPT) + \val(y) - \val(x)
\geq \val(\OPT) - L\cdot \eta\\
&= \val(\OPT) - \frac{1}{12}\epsilon^2 w(V)
\geq \left(1-\frac{1}{3}\epsilon\right) \val(\OPT),
\end{align*}
where the last inequality
is due to Property~\ref{prop:preprocessedFF}.

\end{proof}

\subsection{Overview of $O(1)$-approximation for RMFC}
\label{subsec:overviewRMFC}

Also our $O(1)$-approximation for RMFC uses the natural LP,
i.e,~\ref{eq:lpRMFC}, as a crucial tool to guide the algorithm.
Throughout this section we will work on a compressed instance
$G=(V,E)$ of RMFC, obtained through Theorem~\ref{thm:compressionRMFC}.
Hence, the number of levels is $L=O(\log N)$, where $N$ is the
number of vertices of the original instance. Furthermore, the
budget on level $\ell\in [L]$ is given by $B_\ell = 2^{\ell} B$.
The advantage of working with a compressed instance for
RMFC is twofold.
First, we will again apply sparsity reasonings to limit in certain
settings the number of loose (badly structured) vertices by the
number of levels of the instance.
Second, the fact that low levels---i.e., levels far away from
the root---have high budget, will allow
us to protect a large number of loose vertices by only
increasing $B$ by a constant.

For simplicity, we work with a slight variation 
of~\ref{eq:lpRMFC}, where we replace, for $\ell\in [L]$,
the budget constraints
$x(V_{\leq \ell}) \leq \sum_{i=1}^{\ell} B_i$
by $x(V_\ell) \leq B_\ell$.
For brevity, we define
\begin{equation*}
P_B = \left\{x\in \mathbb{R}_{\geq 0}^{V\setminus \{r\}}
    \;\middle\vert\;
  x(V_\ell) \leq B\cdot 2^\ell \;\;\forall \ell\in [L]
   \right\}.
\end{equation*}
As previously mentioned (and shown
in~\cite{ChalermsookChuzhoy2010}), the resulting LP
is equivalent to~\ref{eq:lpRMFC}.
Furthermore, since the budget $B$ for a feasible RMFC solution
has to be chosen integral, we require $B\geq 1$.
Hence, the resulting linear relaxation asks to find
the minimum $B\geq 1$ such that 
the following polytope is non-empty:
\begin{equation*}
\bar{P}_B = P_B \cap
  \left\{x\in \mathbb{R}^{V\setminus \{r\}}_{\geq 0}
\;\middle\vert\;
x(P_u)\geq 1 \;\;\forall u\in \Gamma\right\}.
\end{equation*}

We start by discussing approaches to partially round a
fractional point $x\in \bar{P}_B$, for some fixed budget $B\geq 1$.
Any leaf $u\in \Gamma$ is fractionally cut off from
the root through the $x$-values on $P_u$. A crucial property
we derive and exploit is that leaves that are 
(fractionally) cut off from $r$ largely on low levels,
i.e., there is high $x$-value on $P_u$ on vertices
far away from the root, can be cut off from the root
via a set of vertices to be protected that are budget-feasible
when increasing $B$ only by a constant.
To exemplify the above statement, consider the level
$h=\lfloor \log L \rfloor$ as a threshold to define
top levels $V_\ell$ as those with indices $\ell\leq h$
and bottom levels when $\ell > h$. For any leaf
$u \in \Gamma$,
we partition the path $P_u$ into its top
part $P_u \cap V_{\leq h}$ and its bottom part
$P_u \cap V_{> h}$. Consider all leaves that are cut
off in bottom levels by at least $0.5$ units:
$W=\{u\in \Gamma \mid x(P_u\cap V_{> h}) \geq 0.5\}$.
We will show that there is a subset of
vertices $R\subseteq V_{>h}$ on bottom levels
to be protected that
is feasible for budget $\bar{B}=2B+1 \leq 3B$ and cuts off
all leaves in $W$ from the root.
We provide a brief sketch why this result holds,
and present a formal proof later.
If we set all entries of $x$ on top levels $V_{\leq h}$
to zero, we get a vector $y$ with $\supp(y) \subseteq V_{>h}$
such that $y(P_u) \geq 0.5$ for $u\in W$. Hence, $2y$ fractionally
cuts off all vertices in $W$ from the root and is feasible
for budget $2B$. To increase sparsity, we can replace $2y$ by
a vertex $\bar{z}$ of the polytope
\begin{equation*}
Q=\left\{z\in \mathbb{R}_{\geq 0}^{V\setminus \{r\}}
 \;\middle\vert\;
 z(V_\ell) \leq 2B\cdot 2^\ell \;\;\forall \ell\in [L],
 z(V_{\leq h}) = 0, z(P_u)\geq 1 \;\;\forall u\in W\right\},
\end{equation*}
which describes possible ways to cut off $W$ from $r$
only using levels $V_{> h}$, and $Q$ is non-empty
since $2y\in Q$.
Exhibiting a sparsity reasoning analogous to the
one used for the Firefighter problem, we can show that
$z$ has no more than $L$ many $z$-loose vertices. 
Thus, we can first include all $z$-loose vertices
in the set $R$ of vertices to be protected by increasing
the budget of each level $\ell > h$ by at most
$L\leq 2^{h+1} \leq 2^\ell$.
The remaining vertices in $\supp(z)$ are well structures
(no two of them lie on the same leaf-root path), and an 
integral solution can be obtained easily.
The new budget value is $\bar{B}=2B+1$, where the ``$+1$''
term pays for the loose vertices.

The following theorem formalizes the above reasoning
and generalizes it in two ways. First, for a leaf $u\in \Gamma$
to be part of $W$, we required it to have a total $x$-value
of at least $0.5$ within the bottom levels; we will allow
for replacing $0.5$ by an arbitrary threshold $\mu\in (0,1]$.
Second, the level $h$ defining what is top and bottom
can be chosen to be of the form $h=\lfloor \log^{(q)} L\rfloor$
for $q\in \mathbb{Z}_{\geq 0}$, where
$\log^{(q)} L \coloneqq
\log\log\dots\log L$ is the value obtained by
taking $q$ many logs of $L$, and
by convention we set $\log^{(0)}L \coloneqq L$.
The generalization in terms of $h$ can be thought of as
iterating the above procedure on the RMFC instance
restricted to $V_{\leq h}$.

\begin{theorem}\label{thm:bottomCover}
Let $B\in \mathbb{R}_{\geq 1}$, $\mu \in (0,1]$, 
$q\in \mathbb{Z}_{\geq 1}$, and
$h = \lfloor \log^{(q)} L\rfloor$.
Let $x\in P_B$ with $\supp(x)\subseteq V_{> h}$,
and we define $W=\{u\in \Gamma \mid x(P_u) \geq \mu\}$.
Then one can efficiently compute
a set $R\subseteq V_{>h}$ such that
\smallskip
\begin{enumerate}[nosep,label=(\roman*)]
\item $R\cap P_u \neq \emptyset \quad \forall u\in W$, and
\item $\chi^R \in P_{B'}$, where $B'= \frac{q}{\mu}B + 1$
and $\chi^R\in \{0,1\}^{V\setminus \{r\}}$ is the
characteristic vector of $R$.
\end{enumerate}

\end{theorem}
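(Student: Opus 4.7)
The plan is to prove the theorem by induction on $q$, reducing the general case to $q=1$ via a pigeon-hole decomposition of $W$ across strata of the tree.

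\emph{Base case $q=1$.} Here $h=\lfloor\log L\rfloor$, so $L\leq 2^{h+1}\leq 2^\ell$ for every $\ell>h$. I would first scale $x$ to $y\coloneqq x/\mu$, noting that $y(P_u)\geq 1$ for $u\in W$, $\supp(y)\subseteq V_{>h}$, and $y(V_\ell)\leq\tfrac{B}{\mu}2^\ell$; hence the polytope
\begin{equation*}
Q \coloneqq \Bigl\{z\in\mathbb{R}_{\geq 0}^{V\setminus\{r\}} :
\supp(z)\subseteq V_{>h},\
z(V_\ell)\leq\tfrac{B}{\mu}2^\ell\ \forall\ell,\
z(P_u)\geq 1\ \forall u\in W\Bigr\}
\end{equation*}
is nonempty. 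I would then pick a vertex $\bar z$ of $Q$ and call $u\in\supp(\bar z)$ \emph{tight} if $\bar z(P_u)=1$ and \emph{loose} otherwise. The tight vertices form an antichain, since for $u_1$ a strict ancestor of $u_2$ with both tight one would get $\bar z(P_{u_2})\geq\bar z(P_{u_1})+\bar z(u_2)>1$; and a sparsity argument adapted from Lemma~\ref{lem:sparsityFF} to the covering LP bounds $|\Vl|\leq L$.

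To extract $R$, I would set $R\coloneqq\Vl\cup\Vt^*$, where $\Vt^*\coloneqq\{u_v^*:v\in W,\ P_v\cap\Vl=\emptyset\}$ and $u_v^*$ denotes, in this case, the unique element of $\Vt\cap P_v$ (unique by the antichain property, existent because $\bar z(P_v)\geq 1>0$ forces some vertex of $\supp(\bar z)\cap P_v$, which must lie in $\Vt$ once $\Vl$ avoids $P_v$). For such $v$ one has $\bar z(u_v^*)=\bar z(P_v)\geq 1$ together with $\bar z(u_v^*)\leq\bar z(P_{u_v^*})=1$, hence $\bar z(u_v^*)=1$. Every $u\in W$ is covered by $R$ (either by a loose vertex on $P_u$ or by $u_u^*$), and for every $\ell>h$
\begin{equation*}
|R\cap V_\ell|
=|\Vl\cap V_\ell|+|\Vt^*\cap V_\ell|
\leq L+\bar z(V_\ell\cap\Vt^*)
\leq 2^\ell+\tfrac{B}{\mu}2^\ell
=\Bigl(\tfrac{B}{\mu}+1\Bigr)2^\ell,
\end{equation*}
using $\bar z(u_v^*)=1$ on $\Vt^*$ and $L\leq 2^\ell$.

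\emph{Inductive step $q\geq 2$.} Let $h_i\coloneqq\lfloor\log^{(i)}L\rfloor$ for $i=0,\dots,q$, so $h_0=L$ and $h_q=h$. For each $u\in W$, $\sum_{i=1}^q x(P_u\cap V_{h_i<\ell\leq h_{i-1}})\geq\mu$, so by pigeon-hole there is $i(u)\in[q]$ with $x(P_u\cap V_{h_{i(u)}<\ell\leq h_{i(u)-1}})\geq\mu/q$. I would partition $W=W_1\sqcup\cdots\sqcup W_q$ by $i(u)$, and for each $i$ apply the base case to $x^{(i)}\coloneqq x\cdot\mathbf{1}_{V_{h_i<\ell\leq h_{i-1}}}$ with threshold $\mu/q$, effective depth $h_{i-1}$, and bottom-threshold $\lfloor\log h_{i-1}\rfloor\leq h_i$; this yields $R_i\subseteq V_{h_i<\ell\leq h_{i-1}}$ covering $W_i$ with $|R_i\cap V_\ell|\leq(\tfrac{q}{\mu}B+1)2^\ell$ in that range. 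Taking $R\coloneqq\bigcup_i R_i$, the covering property follows from $W=\bigsqcup_i W_i$, and the per-level budget bound follows because the $R_i$ live on pairwise disjoint ranges of levels.

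\emph{Main obstacle.} The most delicate piece is the base-case sparsity bound $|\Vl|\leq L$: Lemma~\ref{lem:sparsityFF} was stated for the packing LP of the Firefighter problem, and adapting the argument to the covering LP here---where the path inequalities flip and tight vertices no longer automatically certify distinct tight leaves below them---requires careful combination of the standard LP vertex count with a path-based perturbation argument. Alongside this sits the verification that the tight vertices surviving the cover by $\Vl$ have $\bar z$-value exactly $1$, which is what allows the tight part to be rounded at no additive budget cost.
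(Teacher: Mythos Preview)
Your proposal is correct and follows essentially the same approach as the paper: the paper isolates the single-slice argument as a separate lemma (Lemma~\ref{lem:sliceCover}) and then applies it to each of the $q$ strata $(\lfloor\log^{(k)}L\rfloor,\lfloor\log^{(k-1)}L\rfloor]$ with threshold $\eta=\mu/q$, exactly as in your pigeon-hole decomposition. The only cosmetic difference is that the paper minimizes $z(V\setminus\{r\})$ over $Q$ to force $y(P_u)\leq 1$ everywhere, which lets it define $R=\Vl\cup\{u:y(u)=1\}$ directly without your separate extraction of $\Vt^*$; both variants rely on the same sparsity bound (at most one loose vertex per tight budget constraint in the slice), which does adapt to the covering LP just as you suspect.
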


Theorem~\ref{thm:bottomCover} has several interesting
consequences.
It immediately implies an
LP-based $O(\log^* N)$-approximation for RMFC, thus
matching the currently best approximation result
by Chalermsook and Chuzhoy~\cite{ChalermsookChuzhoy2010}:
It suffices to start with an optimal LP solution $B\geq 1$
and $x\in \bar{P}_B$ and invoke the above theorem with
$\mu=1$, $q=1+\log^* L$.
Notice that by definition of $\log^*$ we have
$\log^* L = \min \{\alpha \in \mathbb{Z}_{\geq 0} \mid
\log^{(\alpha)} L \leq 1\}$; hence
$h=\lfloor \log^{(1+\log^* L)} L\rfloor = 0$, implying that
all levels are bottom levels.
Since the integrality gap of the LP
is~$\Omega(\log^* N)=\Omega(\log^* L)$,
Theorem~\ref{thm:bottomCover} captures the limits of what
can be achieved by techniques based on the standard LP.

Interestingly, Theorem~\ref{thm:bottomCover} also implies
that the $\Omega (\log^* L)$ integrality gap is only
due to the top levels of the instance. More precisely, if,
for any $q=O(1)$ and $h=\lfloor \log^{(q)} L \rfloor$,
one would know what vertices an optimal solution $R^*$ protects
within the levels $V_{\leq h}$, then a constant-factor
approximation for RMFC follows easily 
by solving an LP on the
bottom levels $V_{> h}$ and using Theorem~\ref{thm:bottomCover}
with $\mu=1$
to round the obtained solution.

Also, using Theorem~\ref{thm:bottomCover} it is not hard
to find constant-factor approximation algorithms for RMFC
if the optimal budget $B_\OPT$ is large enough, say
$B \geq \log L$.\footnote{Actually, the argument we
present in the following works for any
$B = \log^{(O(1))}L$. However, we later only
need it for $B\geq \log L$ and thus focus 
on this case.}
The main idea is to solve the LP and define
$h=\lfloor \log L \rfloor$. Leaves that are largely
cut off by $x$ on bottom levels can be handled using
Theorem~\ref{thm:bottomCover}. For the remaining leaves,
which are cut off mostly on top levels, we can resolve an
LP only on the top levels $V_{\leq h}$ to cut them off.
This LP solution is sparse and contains at most $h\leq B$
loose nodes. Hence, all loose vertices can be selected
by increasing the budget by at most $h\leq B$, leading
to a well-structured residual problem for which one can
easily find an integral solution.
The following theorem summarizes this discussion.
A formal proof for Theorem~\ref{thm:bigBIsGood}
can be found in Section~\ref{sec:proofsRMFC}. 

\begin{theorem}\label{thm:bigBIsGood}
There is an efficient algorithm that computes a
feasible solution to a (compressed) instance of
RMFC with budget $B\leq 3 \cdot \max\{\log L, B_{\OPT}\}$.
\end{theorem}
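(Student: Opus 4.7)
The plan is to solve the LP relaxation, split the leaf-cover requirements at the threshold $h=\lfloor\log L\rfloor$, and handle the top-level and bottom-level portions separately before recombining the two partial protection sets. Let $x\in\bar P_{B_{\mathsf{LP}}}$ be an optimal fractional LP solution, so that $B_{\mathsf{LP}}\leq B_{\OPT}$, and set $B^*:=\max\{\log L,\,B_{\mathsf{LP}}\}\leq \max\{\log L,\,B_{\OPT}\}$. For every leaf $u\in\Gamma$, the identity $x(P_u\cap V_{\leq h})+x(P_u\cap V_{>h})=x(P_u)\geq 1$ forces at least one summand to be $\geq 1/2$; accordingly I partition $\Gamma=W^{\mathrm{top}}\cup W^{\mathrm{bot}}$ with $W^{\mathrm{bot}}=\{u\in\Gamma : x(P_u\cap V_{>h})\geq 1/2\}$.

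I handle $W^{\mathrm{bot}}$ directly via Theorem~\ref{thm:bottomCover}. Zero out $x$ on $V_{\leq h}$ to obtain $\bar x\in P_{B_{\mathsf{LP}}}$ with $\supp(\bar x)\subseteq V_{>h}$ and $\bar x(P_u)\geq 1/2$ for every $u\in W^{\mathrm{bot}}$. Applying Theorem~\ref{thm:bottomCover} with $q=1$ and $\mu=1/2$---so that its prescribed threshold equals $\lfloor\log L\rfloor=h$---produces a set $R_1\subseteq V_{>h}$ that intersects $P_u$ for every $u\in W^{\mathrm{bot}}$ and satisfies $\chi^{R_1}\in P_{2B_{\mathsf{LP}}+1}$.

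For $W^{\mathrm{top}}$, I solve the top-restricted LP asking for $y\in\R_{\geq 0}^{V\setminus\{r\}}$ with $\supp(y)\subseteq V_{\leq h}$, $y(V_\ell)\leq 2B_{\mathsf{LP}}\cdot 2^\ell$ for all $\ell\in[h]$, and $y(P_u\cap V_{\leq h})\geq 1$ for every $u\in W^{\mathrm{top}}$. This LP is feasible because $2x$ restricted to $V_{\leq h}$ satisfies all constraints. At a vertex solution $y$, a sparsity argument analogous to Lemma~\ref{lem:sparsityFF}---counting tight budget versus tight path constraints at a basic feasible solution---yields at most $h$ loose vertices $\Vl\subseteq\supp(y)$, while the tight vertices $\Vt$ form an antichain in the tree. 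I protect every vertex of $\Vl$ and re-optimize over the variables indexed by $\Vt$; exactly as in step~\ref{algitem:reoptFF} of Algorithm~\ref{alg:FF}, the path constraints now reduce to $y(v)\leq 1$ for $v\in\Vt$, leaving a totally unimodular laminar-matroid problem whose integral optimum, combined with $\Vl$, gives a set $R_2\subseteq V_{\leq h}$ that cuts every leaf of $W^{\mathrm{top}}$ from $r$ through $V_{\leq h}$. The $h$ loose protectors contribute at most $h$ extra vertices on any single level, hence $\chi^{R_2}\in P_{2B_{\mathsf{LP}}+h}$.

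Taking $R=R_1\cup R_2$ yields a feasible RMFC solution whose per-level budget is bounded by
\begin{equation*}
B \;=\; \max\{2B_{\mathsf{LP}}+1,\; 2B_{\mathsf{LP}}+h\} \;=\; 2B_{\mathsf{LP}}+h \;\leq\; 2B^* + \log L \;\leq\; 3B^* \;\leq\; 3\max\{\log L,\,B_{\OPT}\},
\end{equation*}
using $h\leq\log L\leq B^*$ and $B_{\mathsf{LP}}\leq B^*$. The main technical obstacle will be establishing the sparsity/antichain claim for vertex solutions of the top-restricted LP: namely, that at most $h$ coordinates are loose and that the tight ones form an antichain enabling the totally unimodular residual rounding. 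This is, however, a direct analogue of Lemma~\ref{lem:sparsityFF} and of the rounding step~\ref{algitem:reoptFF} of Algorithm~\ref{alg:FF}, and everything else is routine bookkeeping on top of Theorem~\ref{thm:bottomCover}.
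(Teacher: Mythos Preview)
Your approach matches the paper's: solve the LP, split at $h=\lfloor\log L\rfloor$, round the bottom part via Theorem~\ref{thm:bottomCover} with $q=1,\mu=1/2$, and round the top part using a sparse vertex solution on $V_{\leq h}$ plus the loose vertices. The budget arithmetic is the same as well.

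There is one genuine gap in the top-level part. You state the top LP as a pure feasibility system with covering constraints $y(P_u\cap V_{\leq h})\geq 1$ and then take ``a vertex solution $y$''. At an arbitrary vertex of that polytope nothing prevents $y(P_v)>1$ for some $v\in\supp(y)$; such a vertex is neither $y$-loose nor $y$-tight in the sense of Lemma~\ref{lem:sparsityFF}, so the partition $\supp(y)=\Vl\cup\Vt$ you rely on can fail, and a leaf whose path carries only such ``over'' vertices would not be covered by $\Vl$ nor by any re-optimization over $\Vt$. The paper sidesteps this by imposing \emph{equality} constraints $z(P_u)=1$ (after first shrinking a feasible point), which forces $z(P_v)\leq 1$ for all $v$; equivalently, you could minimize $y(V_{\leq h})$, since at a minimizer every $v\in\supp(y)$ lies above some leaf $u$ with $y(P_u)=1$, whence $y(P_v)\leq 1$. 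With either fix your sparsity bound $|\supp(y)\setminus\Vt|\leq h$ and the coverage argument go through.

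A minor remark: the re-optimization you borrow from step~\ref{algitem:reoptFF} of Algorithm~\ref{alg:FF} is phrased for the Firefighter \emph{packing} LP ($x(P_u)\leq 1$), whereas here you need covering ($\geq 1$). The total unimodularity still holds, but once $y(P_v)\leq 1$ for all $v\in\supp(y)$ the re-optimization is unnecessary: every leaf not hit by $\Vl$ has a single support vertex $v$ on its path with $y(v)=1$, so $R_2=\Vl\cup\{v:y(v)=1\}$ already works, which is exactly what the paper does.
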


\medskip

In what follows, we therefore assume $B_\OPT < \log L$
and present an efficient way to partially
enumerate vertices to be protected on top levels, 
leading to the claimed $O(1)$-approximation.

\subsubsection*{Partial enumeration algorithm}

Throughout our algorithm, we set 
\begin{equation*}
h=\lfloor\log^{(2)} L\rfloor
\end{equation*}
to be the threshold level defining top vertices $V_{\leq h}$
and bottom vertices $V_{> h}$.
Within our enumeration procedure we will solve LPs
where we explicitly include some vertex set
$A\subseteq V_{\leq h}$ to be part of the protected
vertices, and also exclude some set $D\subseteq V_{\leq h}$
from being protected. Our enumeration works by growing
the sets $A$ and $D$ throughout the algorithm.
We thus define the following LP for two disjoint
sets $A,D \subseteq V_{\leq h}$:
\begin{equation}\label{eq:lpRMFCAD}
\begin{array}{*2{>{\displaystyle}r}c*2{>{\displaystyle}l}}
\min & B    &     &     & \\
     & x    &\in  & \bar{P}_B & \\
     & B    &\geq &   1 & \\
     & x(u) &=    &   1 & \quad \forall u\in A\\
     & x(u) &=    &   0 & \quad \forall u\in D\enspace .\\
\end{array}\tag{$\mathrm{LP(A,D)}$}\labeltarget{eq:lpRMFCADtarget}
\end{equation}
Notice that~\ref{eq:lpRMFCAD} is indeed an LP even though the
definition of $\bar{P}_B$ depends on $B$ (but it does so linearly).

Throughout our enumeration procedure, the disjoint
sets $A, D \subseteq V_{\leq h}$ that we consider are
always such that for any $u\in A\cup D$, we have
$P_u\setminus\{u\} \subseteq D$. In other words, the vertices
$A\cup D \cup \{r\}$ form the vertex set of a subtree
of $G$ such that no root-leaf path contains two vertices
in $A$. We call a disjoint pair  of
sets $A,D\subseteq V_{\leq h}$ with this property
a \emph{clean pair}.

Before formally stating our enumeration procedure,
we briefly discuss the main idea behind it.
Let $\OPT\subseteq V\setminus \{r\}$ be an optimal solution
to our (compressed) RMFC instance corresponding to some
budget $B_{\OPT} \in \mathbb{Z}_{\geq 1}$. We assume without loss
of generality that $\OPT$ does not contain redundancies, i.e.,
there is precisely one vertex of $\OPT$ on each leaf-root
path.
Assume that we already guessed some clean pair
$A,D \subseteq V_{\leq h}$ of vertex sets to be
protected and not to be protected, respectively,
and that this guess is compatible with $\OPT$, i.e.,
$A\subseteq \OPT$ and $D\cap \OPT=\emptyset$.
Let $(x,B)$ be an optimal solution to~\ref{eq:lpRMFCAD}.
Because we assume that the sets $A$ and $D$ are compatible with
$\OPT$, we have $B\leq B_{\OPT}$ because
$(B_\OPT, \chi^\OPT)$ is feasible for \ref{eq:lpRMFCAD}. We define
\begin{equation*}
W_x = \left\{u\in \Gamma \;\middle\vert\;
  x(P_u \cap V_{> h}) \geq \frac{2}{3}\right\}
\end{equation*}
to be the set of leaves cut off from the root
by an $x$-load of at least $\mu=\frac{2}{3}$
within bottom levels.
For each $u\in \Gamma\setminus W_x$,
let $f_u\in V_{\leq h}$ be the vertex closest
to the root among all vertices in
$(P_u \cap V_{\leq h}) \setminus D$, and we define
\begin{equation}\label{eq:defFx}
F_x = \{f_u \mid u\in \Gamma\setminus W_x\} \setminus A.
\end{equation}
Notice that by definition, no two vertices of $F_x$ lie on
the same leaf-root path.
Furthermore, every leaf $u\in \Gamma\setminus W_x$
is part of the subtree
$T_f$ for precisely one $f\in F_x$.
The main motivation for considering $F_x$ is that to guess
vertices in top levels, we can show that it suffices
to focus on vertices
lying below some vertex in $F_x$, i.e., vertices
in the set $Q_x = V_{\leq h} \cap (\cup_{f\in F_x} T_{f})$.
To exemplify this, we first consider the special case
$\OPT\cap Q_x = \emptyset$, which will also play
a central role later in the analysis of our algorithm.
We show that for this case we can get an
$O(1)$-approximation to RMFC, even though we may only
have guessed a proper subset $A\subsetneq \OPT\cap V_{\leq h}$
of the $\OPT$-vertices within the top levels.

\begin{lemma}\label{lem:goodEnum}
Let $(A, D)$ be a clean pair of
vertices that is compatible with $\OPT$, i.e.,
$A\subseteq \OPT, D\cap \OPT = \emptyset$,
and let $x$ be an optimal solution
to~\ref{eq:lpRMFCAD}.
Moreover, let $(y,\bar{B})$ be an optimal solution to
\hyperlink{eq:lpRMFCADtarget}{$\mathrm{LP(A,V_{\leq h} \setminus A)}$}.
Then, if $\OPT\cap Q_x=\emptyset$, we have
$\bar{B}\leq \frac{5}{2} B_{\OPT}$.

Furthermore, if $\OPT\cap Q_x = \emptyset$,
by applying Theorem~\ref{thm:bottomCover}
to $y\wedge \chi^{V_{> h}}$ with $\mu=1$ and $q=2$, a set 
$R\subseteq V_{> h}$ is obtained such that
$R\cup A$ is a feasible solution to RMFC with respect
to the budget $6 \cdot B_{\OPT}$.\footnote{For two vectors
$a,b\in \mathbb{R}^n$ we denote by $a\wedge b\in \mathbb{R}^n$
the component-wise minimum of $a$ and $b$.}
\end{lemma}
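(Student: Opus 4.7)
The plan is to prove both claims by exhibiting a single explicit feasible solution to \hyperlink{eq:lpRMFCADtarget}{$\mathrm{LP}(A,V_{\leq h}\setminus A)$} of budget $\tfrac{5}{2}B_{\OPT}$, and then applying Theorem~\ref{thm:bottomCover} directly to the optimal $y$. Both parts hinge on the hypothesis $\OPT\cap Q_x=\emptyset$, which forces the $\OPT$-covering vertex of certain troublesome leaves down into the bottom levels $V_{>h}$.

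For the first claim I would work with the candidate
\begin{equation*}
\tilde y \;=\; \chi^A \;+\; \tfrac{3}{2}\bigl(x\wedge\chi^{V_{>h}}\bigr) \;+\; \bigl(\chi^{\OPT}\wedge\chi^{V_{>h}}\bigr).
\end{equation*}
The boundary conditions $\tilde y(u)=1$ on $A$ and $\tilde y(u)=0$ on $V_{\leq h}\setminus A$ are immediate since $\tilde y=\chi^A$ on the top levels. The per-level budget is $|A\cap V_\ell|\le B_{\OPT}\,2^\ell$ for $\ell\le h$, and at most $\tfrac{3}{2}x(V_\ell)+|\OPT\cap V_\ell|\le \tfrac{5}{2}B_{\OPT}\,2^\ell$ for $\ell>h$, where I use that $B\le B_{\OPT}$ because compatibility of $(A,D)$ with $\OPT$ makes $(\chi^{\OPT},B_{\OPT})$ feasible for \ref{eq:lpRMFCAD}. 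The cut-off $\tilde y(P_u)\ge 1$ splits into three cases: $u\in W_x$ (handled by $\tfrac{3}{2}x(P_u\cap V_{>h})\ge 1$), $u\notin W_x$ with $f_u\in A$ (handled by $\chi^A$), and $u\notin W_x$ with $f_u\in F_x$. The heart of the proof is this last case: I need the $\OPT$-covering vertex $v^*\in \OPT\cap P_u$ to lie in $V_{>h}$ so that $\chi^{\OPT}\wedge\chi^{V_{>h}}$ catches it. Compatibility excludes $v^*\in D$; maximality of $f_u$ in $(P_u\cap V_{\leq h})\setminus D$ puts every vertex of $P_u\cap V_{\leq h}$ strictly above $f_u$ into $D$; and the hypothesis $\OPT\cap Q_x=\emptyset$ forbids $v^*$ from being $f_u$ itself ($f_u\in F_x\subseteq Q_x$) or from lying in $T_{f_u}\cap V_{\leq h}\subseteq Q_x$. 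Hence $v^*\in V_{>h}$, and optimality of $(y,\bar B)$ then yields $\bar B\le \tfrac{5}{2}B_{\OPT}$.

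For the second claim I apply Theorem~\ref{thm:bottomCover} to $y\wedge\chi^{V_{>h}}\in P_{\bar B}$ with $\mu=1$ and $q=2$, which is consistent with $h=\lfloor\log^{(2)}L\rfloor$. This yields $R\subseteq V_{>h}$ hitting every leaf $u$ with $y(P_u\cap V_{>h})\ge 1$, and with $\chi^R\in P_{2\bar B+1}$. For every remaining leaf $u$ we have $y(P_u\cap V_{>h})<1$, and since $y$ vanishes on $V_{\leq h}\setminus A$ while $y\equiv 1$ on $A$, LP-feasibility $y(P_u)\ge 1$ forces $A\cap P_u\neq\emptyset$, so $A$ covers those leaves. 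Adding $A$ on the top levels, whose per-level size is bounded by $|\OPT\cap V_\ell|\le B_{\OPT}\,2^\ell$, one gets that $R\cup A$ is feasible for budget $\max\{2\bar B+1,\,B_{\OPT}\}\le 5B_{\OPT}+1\le 6B_{\OPT}$, using $B_{\OPT}\ge 1$.
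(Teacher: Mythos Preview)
Your proof is correct and follows essentially the same approach as the paper: you construct the identical candidate $\tilde y=\chi^A+\tfrac{3}{2}(x\wedge\chi^{V_{>h}})+(\chi^{\OPT}\wedge\chi^{V_{>h}})$, verify its feasibility for $\mathrm{LP}(A,V_{\leq h}\setminus A)$ via the same case split on leaves (the paper's dichotomy ``$A\cap P_u\neq\emptyset$ versus $\OPT\cap P_u\cap V_{>h}\neq\emptyset$'' is equivalent to your ``$f_u\in A$ versus $f_u\in F_x$'' since $(A,D)$ is clean), and then apply Theorem~\ref{thm:bottomCover} to $y\wedge\chi^{V_{>h}}$ and combine with $A$ on disjoint levels exactly as the paper does.
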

\begin{proof}
Notice that $\OPT\cap Q_x=\emptyset$
implies that for each $u\in \Gamma \setminus W_x$,
we either have $A\cap P_u \neq \emptyset$ and thus
a vertex of $A$ cuts $u$ off from the root, or
the set $\OPT$ contains a vertex on $P_u \cap V_{>h}$.
Indeed, consider a leaf $u\in \Gamma \setminus W_x$
such that $A\cap P_u = \emptyset$.
Then
$\OPT\cap Q_x = \emptyset$ implies that no vertex
of $T_{f_u}\cap V_{\leq h}$ is part of $\OPT$.
Furthermore, $P_{f_u}\setminus T_{f_u} \subseteq D$
because $(A,D)$ is a clean pair and $f_u$ is the
topmost vertex on $P_u$ that is not in $D$.
Therefore, $\OPT \cap P_u \cap V_{\leq h} = \emptyset$,
and since $\OPT$ must contain a vertex in $P_u$, we must
have $\OPT\cap P_u \cap V_{>h}\neq \emptyset$.

However, this observation implies
that $z=\frac{3}{2}(x\wedge \chi^{V_{>h}})
+(\chi^{\OPT} \wedge \chi^{V_{>h}})+\chi^A$
satisfies
$z(P_u) \geq 1$ for all $u\in \Gamma$.
Moreover we have $z\in P_{\frac{3}{2}B+B_{\OPT}}$
due to the following.
First, $x\wedge \chi^{V_{>h}} \in P_B$ and
$\chi^{\OPT} \in P_{B_{\OPT}}$, which implies
$z-\chi^A\in P_{\frac{3}{2}B+B_{\OPT}}$.
Furthermore, $\chi^A\in P_B$, and the vertices in
$A$ are all on levels $V_{\leq h}$ which are disjoint
from the levels on which vertices in 
$\supp(z-\chi^A)\subseteq V_{>h}$ lie,
and thus do not compete
for the same budget.
Hence, $(z,\frac{3}{2}B+B_{\OPT})$ is feasible for
\hyperlink{eq:lpRMFCADtarget}{$\mathrm{LP(A,V_{\leq h} \setminus A)}$},
and thus
$\bar{B} \leq \frac{3}{2}B + B_{\OPT} \leq \frac{5}{2} B_{\OPT}$,
as claimed.

The second part of the lemma follows in a straightforward
way from Theorem~\ref{thm:bottomCover}.
Observe first that each leaf $u\in \Gamma$ is either
fully cut off from the root by $y$ on only top levels
or only bottom levels because $y$ is a $\{0,1\}$-solution
on the top levels $V_{\leq h}$, since on top levels it was
fixed to $\chi^A$ because it is a solution to
\hyperlink{eq:lpRMFCADtarget}{$\mathrm{LP(A,V_{\leq h} \setminus A)}$}.
Reusing the notation in Theorem~\ref{thm:bottomCover},
let $W=\{u\in \Gamma \mid (y\wedge \chi^{V_{> h}})(P_u) \geq 1\}$
be all leaves cut off from the root by $y\wedge \chi^{V_{>h}}$.
By the above discussion, every leaf is thus either part of $W$
or it is cut off from the root by vertices in $A$. 
Theorem~\ref{thm:bottomCover} guarantees that $R\subseteq V_{>h}$
cuts off all leaves in $W$ from the root, and hence, $R\cup A$
indeed cuts off all leaves from the root.
Moreover, by Theorem~\ref{thm:bottomCover}, the set
$R\subseteq V_{> h}$ is feasible with respect to the
budget $5B_{\OPT} +1 \leq 6 B_{\OPT}$.
Furthermore, $A$ is feasible for budget $B_{\OPT}$ because
it is a subset of $\OPT$. Since $A\subseteq V_{\leq h}$
and $R\subseteq V_{> h}$ are on disjoint levels, the
set $R\cup A$ is feasible for the budget $6 B_{\OPT}$.
\end{proof}

Our final algorithm is based on a recursive enumeration
procedure that computes a polynomial
collection of clean pairs $(A,D)$
such that there is one pair $(A,D)$ in the collection
with a corresponding LP solution $x$ of 
\hyperlink{eq:lpRMFCADtarget}{$\mathrm{LP(A,D)}$}
satisfying that the triple $(A,D,x)$ fulfills the conditions of
Lemma~\ref{lem:goodEnum}, and thus leading to a
constant-factor approximation.
Our enumeration algorithm
\hyperlink{alg:enumRMFCtarget}{$\mathrm{Enum}(A,D,\gamma)$}
is described below.
It contains a parameter $\gamma\in \mathbb{Z}_{\geq 0}$
that bounds the recursion depth of the enumerations.

\smallskip

{
\renewcommand{\thealgocf}{}
\begin{algorithm}[H]
\SetAlgorithmName{$\bm{\mathrm{Enum}(A,D,\gamma)}$
\labeltarget{alg:enumRMFCtarget}
}{}

\begin{enumerate}[rightmargin=1em]
\item Compute optimal solution $(x,B)$ to 
\hyperlink{eq:lpRMFCADtarget}{$\mathrm{LP(A,D)}$}.

\item\label{item:stopWhenBLarge}
\textbf{If} $B > \log L$\textbf{:} \textbf{stop}.
Otherwise, continue with step~\ref{item:addTriple}.

\item\label{item:addTriple}
Add $(A,D,x)$ to the family of triples to be considered.

\item\label{item:enumRecCall} \tm{if}{\textbf{I}}%
\textbf{f} $\gamma\neq 0$ \textbf{:}
\hfill \texttt{//recursion depth not yet reached \quad}

\quad \tm{for}{\textbf{F}}\textbf{or $u\in F_x$:}
\hfill \texttt{//$F_x$ is defined as in~\eqref{eq:defFx} \quad}

\quad\quad Recursive call to $\mathrm{Enum}(A\cup\{u\},D,\gamma-1)$.\\
\quad\quad \tm[overlay]{end}{}Recursive call
to $\mathrm{Enum}(A,D\cup \{u\},\gamma-1)$.

\begin{tikzpicture}[overlay, remember picture]
\draw (if) ++ (0,-0.5em) |- ($(if |- end) + (0.2,-0.2)$);
\draw (for) ++ (0,-0.5em) |- ($(for |- end) + (0.2,-0.1)$);
\end{tikzpicture}

\vspace{-1.5em}

\end{enumerate}

\caption{Enumerating triples $(A,D,x)$ to find one 
satisfying the conditions of Lemma~\ref{lem:goodEnum}.
}
\label{alg:enumRMFC}

\end{algorithm}
\addtocounter{algocf}{-1}
}%

\smallskip

Notice that for any clean pair $(A,D)$ and $u\in F_x$,
the two pairs $(A\cup \{u\}, D)$ and $(A, D\cup \{u\})$
are clean, too. Hence, if we start 
\hyperlink{alg:enumRMFCtarget}%
{$\mathrm{Enum}(A,D,\gamma)$}
with a clean pair $(A,D)$, we will encounter
only clean pairs during all recursive calls.

The key property of the above enumeration procedure
is that only a small recursion
depth $\gamma$ is needed for the enumeration algorithm 
to explore a good triple $(A,D,x)$, which satisfies
the conditions of Lemma~\ref{lem:goodEnum}, if we
start with the trivial clean pair $(\emptyset, \emptyset)$.
Furthermore, due to step~\ref{item:stopWhenBLarge},
we always have $B\leq \log L$ whenever the
algorithmm is in step~\ref{item:enumRecCall}. As we will see
later, this allows us to prove that $|F_x|$ is small, which
will limit the width of our recursive calls, and leads to
an efficient procedure as highlighted in the following Lemma.

\begin{lemma}\label{lem:enumWorks}
Let $\bar{\gamma}= 2(\log L)^2 \log^{(2)} L$.
The enumeration procedure \hyperlink{alg:enumRMFCtarget}%
{$\mathrm{Enum}(\emptyset,\emptyset,\bar{\gamma})$}
runs in polynomial time.
Furthermore, if $B_\OPT \leq \log L$, then
\hyperlink{alg:enumRMFCtarget}%
{$\mathrm{Enum}(\emptyset,\emptyset,\bar{\gamma})$} will
encounter a triple $(A,D,x)$ satisfying
the conditions of Lemma~\ref{lem:goodEnum}, i.e.,
\begin{enumerate}[nosep, label=(\roman*)]
\item $(A,D)$ is a clean pair,
\item $A\subseteq \OPT$,
\item $D\cap \OPT = \emptyset$, and
\item $\OPT\cap Q_x = \emptyset$.
\end{enumerate}
\end{lemma}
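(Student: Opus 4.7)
My proof will split into two parts: the polynomial running time, and the existence of a good triple $(A,D,x)$ along the explored tree. Both parts will hinge on two quantitative estimates — a polylogarithmic bound on $|F_x|$ (which controls the branching factor) and a potential function on clean pairs compatible with $\OPT$ that is bounded above by $\bar{\gamma}$ and drops by at least one at every step along a well-chosen path through the recursion tree. The main delicacy will be calibrating the potential so that its initial value is exactly $\bar{\gamma}$; this is what pins down the choice $h=\lfloor\log^{(2)} L\rfloor$.

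\textbf{Bounding $|F_x|$ and the runtime.} First I would observe that $F_x$ is an antichain, since $f_u$ is the topmost non-$D$ vertex on $P_u$ and cleanness of $(A,D)$ forces every strict ancestor of each vertex in $F_x$ into $D$. For each $f\in F_x$, any leaf $u$ with $f_u=f$ satisfies $u\notin W_x$, so $x(P_u\cap V_{\leq h})>1/3$; since $x$ vanishes on $D$ and $f$ is the topmost non-$D$ vertex on $P_u$, this $x$-mass concentrates in $T_f\cap V_{\leq h}$. Summing over the pairwise disjoint subtrees $T_f$ and using $x\in P_B$ together with $B\leq \log L$ (ensured by step~\ref{item:stopWhenBLarge}) yields
\[
\frac{|F_x|}{3} \;<\; x(V_{\leq h}) \;\leq\; B\cdot(2^{h+1}-2) \;\leq\; 2B\log L \;\leq\; 2(\log L)^2,
\]
so $|F_x|<6(\log L)^2$. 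The recursion tree has depth at most $\bar{\gamma}$ and branching at most $2|F_x|$, so its size is at most $(12(\log L)^2)^{\bar{\gamma}}$, whose logarithm is $O((\log L)^2(\log^{(2)} L)^2)=o(\log N)$ since $L=O(\log N)$ by Theorem~\ref{thm:compressionRMFC}. Together with polynomial work per node (one LP solve plus the computation of $F_x$), this gives polynomial total runtime.

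\textbf{Potential and good child.} For correctness I would assume $B_\OPT\leq\log L$ and fix an optimal $\OPT$ without redundancies, so that every leaf–root path contains exactly one $\OPT$-vertex. Along every compatible branch ($A\subseteq\OPT$, $D\cap\OPT=\emptyset$), $(\chi^\OPT,B_\OPT)$ is feasible for~\ref{eq:lpRMFCAD}, so $B\leq B_\OPT\leq \log L$ and step~\ref{item:stopWhenBLarge} will never stop such a branch. I would define
\[
\Phi(A,D) \;=\; \sum_{v\in\OPT\cap V_{\leq h}\setminus A}\bigl(\level(v) - |P_v\cap D|\bigr),
\]
every summand being at least $1$ since $v\notin D$. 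Using $2^h\leq\log L$ and $|\OPT\cap V_{\leq h}|\leq B_\OPT(2^{h+1}-2)\leq 2\log L\cdot 2^h$,
\[
\Phi(\emptyset,\emptyset) \;\leq\; h\cdot|\OPT\cap V_{\leq h}| \;\leq\; \log^{(2)} L \cdot 2(\log L)^2 \;=\; \bar{\gamma}.
\]
If at the current node $\OPT\cap Q_x\neq\emptyset$, choose $v\in\OPT\cap Q_x$ and the unique $f\in F_x$ with $v\in T_f$. When $f=v$, the child $\mathrm{Enum}(A\cup\{v\},D,\gamma-1)$ preserves compatibility and cleanness and removes the $v$-summand from $\Phi$. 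When $f$ is a strict ancestor of $v$, the no-redundancy assumption forces $f\notin\OPT$ (otherwise two $\OPT$-vertices would lie on a path below $v$), so $\mathrm{Enum}(A,D\cup\{f\},\gamma-1)$ preserves compatibility and cleanness and increases $|P_v\cap D|$ by one. In either case $\Phi$ drops by at least one.

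\textbf{Wrap-up.} Following such good children from the root, $\Phi$ drops by at least one per level, so after at most $\bar{\gamma}$ levels one of the following must hold: either $\OPT\cap Q_x=\emptyset$ already, or $\Phi=0$. In the latter case $\OPT\cap V_{\leq h}\subseteq A$; but then no $v\in A$ can lie in any $T_f$ with $f\in F_x$ (if $v=f$, this would contradict $F_x\cap A=\emptyset$; if $v$ is a strict descendant of $f$, cleanness would force $f\in D$, contradicting $F_x\cap D=\emptyset$), so $\OPT\cap Q_x=\emptyset$ also in this case. The triple $(A,D,x)$ reached this way is recorded in step~\ref{item:addTriple} (the branch is compatible, so step~\ref{item:stopWhenBLarge} never fires) and satisfies properties (i)–(iv), completing the proof.
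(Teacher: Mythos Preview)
Your proof is correct and follows essentially the same approach as the paper: the bound on $|F_x|$ via the $x$-mass in $V_{\leq h}$ is identical, and your potential $\Phi(A,D)=\sum_{v\in\OPT\cap V_{\leq h}\setminus A}(\level(v)-|P_v\cap D|)$ coincides with the paper's distance-based potential $\sum_{u\in\OPT\cap V_{\leq h}} d_u$ on compatible clean pairs (since redundancy-freeness of $\OPT$ forces $P_v\cap A=\emptyset$ for $v\in\OPT\setminus A$). Your wrap-up for the $\Phi=0$ case is spelled out a bit more explicitly than in the paper, but the overall argument is the same.
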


Hence, combining Lemma~\ref{lem:enumWorks} and
Lemma~\ref{lem:goodEnum} completes our enumeration procedure
and implies the following result.

\begin{corollary}\label{cor:summaryEnum}
Let $\mathcal{I}$ be an RMFC instance on $L$ levels
on a graph $G=(V,E)$ with budgets $B_\ell = 2^\ell \cdot B$.
Then there is a procedure with running time polynomial
in $2^L$, returning
a solution $(Q,B)$ for $\mathcal{I}$, where
$Q\subseteq V\setminus \{r\}$ is a set of vertices
to protect that is feasible for budget $B$,
satisfying the following:
If the optimal budget $B_{\OPT}$ for $\mathcal{I}$ satisfies
$B_{\OPT} \leq \log L$, then $B\leq 6 B_\OPT$.
\end{corollary}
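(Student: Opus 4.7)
The plan is to simply package the enumeration procedure Algorithm~\ref{alg:enumRMFC} together with the rounding of Lemma~\ref{lem:goodEnum}. Concretely, I would invoke $\mathrm{Enum}(\emptyset,\emptyset,\bar\gamma)$ with the recursion depth $\bar\gamma = 2(\log L)^2\log^{(2)} L$ prescribed by Lemma~\ref{lem:enumWorks}, and for each triple $(A,D,x)$ added by the enumeration to the working collection, produce a candidate RMFC solution as follows. First, compute an optimal solution $(y,\bar B)$ of the auxiliary LP $\mathrm{LP}(A, V_{\leq h}\setminus A)$. Next, since $y\wedge \chi^{V_{>h}}\in P_{\bar B}$ and is supported on $V_{>h}$ with $h=\lfloor\log^{(2)} L\rfloor$, apply Theorem~\ref{thm:bottomCover} with parameters $\mu=1$ and $q=2$ to obtain a set $R\subseteq V_{>h}$ whose characteristic vector lies in $P_{2\bar B+1}$ and which, in particular, cuts off every leaf $u\in \Gamma$ satisfying $(y\wedge \chi^{V_{>h}})(P_u)\geq 1$. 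The candidate returned by the algorithm is the pair $(R\cup A,\,B)$ with the smallest integer budget $B$ such that $\chi^{R\cup A}$ lies in the integer polytope of RMFC.

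For the running time, Lemma~\ref{lem:enumWorks} already asserts that $\mathrm{Enum}(\emptyset,\emptyset,\bar\gamma)$ runs in polynomial time and therefore only generates a polynomial number of triples $(A,D,x)$. Each triple incurs only one extra LP solve, plus one application of the (polynomial-time) rounding of Theorem~\ref{thm:bottomCover}, and a constant amount of extra bookkeeping. Since the compressed instance has $|V|$ bounded polynomially in $2^L$, the overall running time is polynomial in $2^L$ as claimed.

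For the approximation guarantee, suppose $B_\OPT\leq \log L$. Lemma~\ref{lem:enumWorks} then guarantees that at least one triple $(A,D,x)$ collected by the enumeration satisfies the four conditions of Lemma~\ref{lem:goodEnum}: $(A,D)$ is a clean pair, $A\subseteq \OPT$, $D\cap \OPT=\emptyset$, and $\OPT\cap Q_x=\emptyset$. Applied to this particular triple, Lemma~\ref{lem:goodEnum} delivers exactly the bottom-level set $R$ obtained from Theorem~\ref{thm:bottomCover} with $\mu=1$, $q=2$, and asserts that $R\cup A$ is a feasible RMFC solution for budget $6B_\OPT$. Therefore the best candidate returned by the algorithm satisfies $B\leq 6 B_\OPT$. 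The only bookkeeping point that needs to be verified carefully is that $A\subseteq V_{\leq h}$ and $R\subseteq V_{>h}$ live on disjoint level ranges, so their budget demands never compete on the same level and the bound from Theorem~\ref{thm:bottomCover} carries over to $R\cup A$ unchanged; this is precisely the observation already made inside the proof of Lemma~\ref{lem:goodEnum}, so beyond it the present argument is simply a clean composition of the two lemmas.
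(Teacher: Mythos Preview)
Your proposal is correct and follows essentially the same route as the paper: run $\mathrm{Enum}(\emptyset,\emptyset,\bar\gamma)$, for each collected triple solve $\mathrm{LP}(A,V_{\leq h}\setminus A)$ and apply Theorem~\ref{thm:bottomCover} with $\mu=1$, $q=2$ to obtain a candidate $R\cup A$, and return the candidate with smallest feasible budget; correctness then follows directly from Lemma~\ref{lem:enumWorks} and Lemma~\ref{lem:goodEnum}. Your added remarks on the running time bound in terms of $2^L$ and on the disjointness of the level ranges of $A$ and $R$ are minor elaborations that the paper either leaves implicit or already handles inside the proof of Lemma~\ref{lem:goodEnum}.
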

\begin{proof}
It suffices to run 
\hyperlink{alg:enumRMFCtarget}%
{$\mathrm{Enum}(\emptyset,\emptyset,\bar{\gamma})$} to
first efficiently obtain a family of triples
$(A_i,D_i,x_i)_i$, where $(A_i, D_i)$ is a clean pair,
and $x_i$ is an optimal solution to
\hyperlink{eq:lpRMFCADtarget}{$\mathrm{LP(A_i,D_i)}$}.
By Lemma~\ref{lem:enumWorks}, one of these triples
satisfies the conditions of Lemma~\ref{lem:goodEnum}.
(Notice that these conditions cannot be checked since
it would require knowledge of $\OPT$.)
For each triple $(A_i,D_i,x_i)$ we obtain a corresponding
solution for $\mathcal{I}$ following the construction
described in Lemma~\ref{lem:goodEnum}. More precisely,
we first compute an optimal solution $(y_i,\bar{B}_i)$ to 
\hyperlink{eq:lpRMFCADtarget}{$\mathrm{LP(A_i,V_{\leq h} \setminus A_i)}$}.
Then, by applying Theorem~\ref{thm:bottomCover} to
$y_i\wedge \chi^{V_{> h}}$ with $\mu=1$ and $q=2$,
a set of vertices
$R_i\subseteq V_{> h}$ is obtained such that
$R_i\cup A_i$ is feasible for $\mathcal{I}$ for some
budget $B_i$.
Among all such sets $R_i\cup A_i$, we return the one
with minimum $B_i$.
Because Lemma~\ref{lem:enumWorks} guarantees that
one of the triples $(A_i, D_i, x_i)$ satisfies the
conditions of Lemma~\ref{lem:goodEnum}, we have by
Lemma~\ref{lem:goodEnum} that the best protection
set $Q=R_j\cup A_j$ among all $R_i\cup A_i$ has a
budget $B_j$ satisfying $B_j \leq 6 B_{\OPT}$.
\end{proof}

\subsection*{Summary of our $O(1)$-approximation for RMFC}

Starting with an RMFC instance $\mathcal{I}^{\mathrm{orig}}$
on a tree with $N$ vertices, we
first apply our compression result, Theorem~\ref{thm:compressionRMFC},
to obtain an RMFC instance $\mathcal{I}$ on a graph $G=(V,E)$ with depth
$L=O(\log N)$, and non-uniform budgets $B_\ell = 2^\ell B$
for $\ell\in [L]$.
Let $B_{\OPT}\in \mathbb{Z}_{\geq 1}$ be the optimal
budget value for $B$ for instance $\mathcal{I}$%
---recall that $B=B_{\OPT}$ in instance $\mathcal{I}$
implies that level $\ell\in [L]$ has budget $2^{\ell} \cdot B_{\OPT}$---%
and let $B_{\OPT}^{\mathrm{orig}}$
be the optimal budget for $\mathcal{I}^{\mathrm{orig}}$.
By Theorem~\ref{thm:compressionRMFC}, we have
$B_{\OPT} \leq B_{\OPT}^{\mathrm{orig}}$, and any solution
to $\mathcal{I}$ using budget $B$ can efficiently be transformed
into one of $\mathcal{I}^{\mathrm{orig}}$ of budget
$2B$.

We now invoke
Theorem~\ref{thm:bigBIsGood} and Corollary~\ref{cor:summaryEnum}.
Both guarantee that a solution to $\mathcal{I}$ with certain properties
can be computed efficiently.
Among the two solutions derived from Theorem~\ref{thm:bigBIsGood}
and Corollary~\ref{cor:summaryEnum}, we consider the one
$(Q,B)$ with lower budget $B$, where $Q\subseteq V\setminus \{r\}$
is a set of vertices to protect, feasible for budget
$B$.
If $B\geq \log L$, then Theorem~\ref{thm:bigBIsGood} implies
$B\leq 3 B_{\OPT}$, otherwise Corollary~\ref{cor:summaryEnum}
implies $B\leq 6 B_{\OPT}$. Hence, in any case we have
a $6$-approximation for $\mathcal{I}$. As mentioned before,
Theorem~\ref{thm:compressionRMFC} implies that the solution
$Q$ can efficiently be transformed into a solution for the
original instance $\mathcal{I}^{\mathrm{orig}}$ that is
feasible with respect to the budget
$2 B \leq 12 B_{\OPT} \leq 12 B^{\mathrm{orig}}_{\OPT}$,
thus implying Theorem~\ref{thm:O1RMFC}.

\section{Details on compression results}\label{sec:proofsCompression}

In this section, we present the proofs for our compression results,
Theorem~\ref{thm:compressionFF} and Theorem~\ref{thm:compressionRMFC}.
We start by proving Theorem~\ref{thm:compressionFF}. The same ideas are used
with a slight adaptation in the proof of Theorem~\ref{thm:compressionRMFC}. 

We call an instance $\overline{\mathcal{I}}$ obtained from 
an instance $\mathcal{I}$ by a sequence of down-push operations
a \emph{push-down of} $\mathcal{I}$.
We prove Theorem~\ref{thm:compressionFF} by proving
the following result, of which Theorem~\ref{thm:compressionFF}
is an immediate consequence, as we will soon show.

\begin{theorem}\label{thm:compressionDownPush}
Let $\mathcal{I}$ be a unit-budget Firefighter instance
with depth $L$, and let $\delta\in (0,1)$.
Then one can efficiently construct a push-down
$\overline{\mathcal{I}}$
of $\mathcal{I}$ such that
\smallskip
\begin{enumerate}[nosep,label=(\roman*)]
\item\label{item:closeToOPT}
   $\val(\OPT(\overline{\mathcal{I}}))
  \geq (1-\delta)\val(\OPT(\mathcal{I}))$, and

\item
$\overline{\mathcal{I}}$ has nonzero budget
on only $O(\frac{\log L}{\delta})$ levels.
\end{enumerate}
\end{theorem}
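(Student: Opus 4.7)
The plan is to prove Theorem~\ref{thm:compressionDownPush} by combining a geometric partitioning of levels with a solution-transfer argument. First, I would partition $[L]$ into groups by setting boundary levels $\ell_k = \min\{\lceil (1+\delta)^k\rceil, L\}$ for $k = 0, 1, \ldots, K$, stopping once $\ell_K = L$, so that $K = O(\log L/\delta)$ and the group $G_k = \{\ell_{k-1}+1, \ldots, \ell_k\}$ satisfies $|G_k| \leq \delta\ell_k + 1$. Within each $G_k$, I would perform $|G_k|-1$ successive down-pushes that move the unit budget of each level in $G_k$ down to the bottom level $\ell_k$. After these pushes, only the $K$ boundary levels carry nonzero budget (with budget $|G_k|$ at level $\ell_k$), which immediately yields~(ii).

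For~(i), I would exhibit a feasible solution $\overline{S}$ of $\overline{\mathcal{I}}$ with $\val(\overline{S}) \geq (1-\delta)\val(\OPT(\mathcal{I}))$. Starting from a non-redundant optimal $\OPT(\mathcal{I})$ (at most one protected vertex per level, and $\val(\OPT(\mathcal{I})) = \sum_{v\in \OPT(\mathcal{I})} w(T_v)$ since the saved subtrees are disjoint), I would construct $\overline{S}$ by mapping each $v \in \OPT(\mathcal{I}) \cap V_{G_k}$ to some descendant $\phi(v) \in T_v \cap V_{\ell_k}$, then setting $\overline{S} = \{\phi(v)\}_v$. Budget feasibility at each $\ell_k$ in $\overline{\mathcal{I}}$ is automatic since $|\OPT(\mathcal{I}) \cap V_{G_k}| \leq |G_k|$ matches the pushed-down budget.

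The principal obstacle is bounding the aggregate loss $\sum_v [w(T_v) - w(T_{\phi(v)})]$ by $\delta\val(\OPT(\mathcal{I}))$. A greedy per-vertex choice of $\phi(v)$ (e.g., the descendant of $v$ at level $\ell_k$ with the largest subtree weight) is not sufficient in general, because $w(T_v)$ can be concentrated in $T_v \cap V_{<\ell_k}$, which is unavoidably lost when $v$ lies strictly above $\ell_k$. I would address this by exploiting the slack that $\overline{S}$ may include up to $|G_k|$ vertices per group at level $\ell_k$, not just one per original $v$, and choosing them jointly from $\bigcup_{v \in \OPT(\mathcal{I})\cap V_{G_k}} (T_v \cap V_{\ell_k})$ so as to maximize covered weight. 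The required bound then follows from an amortization/telescoping argument across groups that leverages the geometric ratio $|G_k|/\ell_k \leq \delta$, the disjointness of the saved subtrees, and the fact that each level in $G_k$ originally carried its own unit budget. Summing the per-group loss contributions gives total loss at most $\delta\val(\OPT(\mathcal{I}))$.
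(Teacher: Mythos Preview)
Your geometric partition of the levels and the resulting push-down coincide with the paper's construction, so part~(ii) is fine. The gap is in part~(i): your solution-transfer step maps each $v\in \OPT(\mathcal{I})\cap V_{G_k}$ to a \emph{descendant} at level $\ell_k$, but this can lose an arbitrarily large fraction of the value, and neither the ``joint selection'' nor the vague amortization you propose can repair it. Concretely, take a root with $L$ children $c_1,\dots,c_L$, each starting a path of length $L$; put weight $W$ on the diagonal vertex $p_{\ell,\ell}$ (the level-$\ell$ vertex on path $\ell$) and weight $0$ everywhere else. An optimal non-redundant solution protects $p_{\ell,\ell}$ at time $\ell$ for each $\ell$, giving $\val(\OPT)=LW$. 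For every non-boundary level $\ell\in G_k$ with $\ell<\ell_k$, every descendant of $p_{\ell,\ell}$ at level $\ell_k$ has subtree weight $0$, so any choice of $|G_k|$ vertices from $\bigcup_{v\in \OPT\cap V_{G_k}}(T_v\cap V_{\ell_k})$ recovers at most $W$ per group. Hence your $\overline{S}$ has value at most $KW$ where $K=O(\frac{\log L}{\delta})$, and the ratio $K/L\to 0$, far below $1-\delta$.

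The missing idea is that you should not move $\OPT$-vertices down at all; instead, keep a \emph{subset} $\overline{S}\subseteq S^*$ at the original levels. Feasibility for the push-down is governed by the cumulative constraints $|\overline{S}\cap V_{\leq \ell_{i+1}-1}|\leq \ell_i$, and since $|S^*\cap V_{\leq \ell_{i+1}-1}|\leq \ell_{i+1}-1\leq (1+\delta)\ell_i$, the set $S^*$ violates these constraints only by a factor $1+\delta$. The paper makes this precise by noting that the feasible region is the independent-set polytope of a laminar matroid, so $\frac{1}{1+\delta}\chi^{S^*}$ lies in it, and integrality yields an $\overline{S}\subseteq S^*$ with $\sum_{u\in\overline{S}}w(T_u)\geq \frac{1}{1+\delta}\val(\OPT)\geq(1-\delta)\val(\OPT)$.
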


Before we prove Theorem~\ref{thm:compressionDownPush}, we show
how it implies 
Theorem~\ref{thm:compressionFF}.

\begin{proof}[Proof of Theorem~\ref{thm:compressionFF}]
We start by showing 
how levels of zero budget can be removed 
through the following \emph{contraction operation}. 
Let $\ell \in \{2,\dots, L\}$ be a level whose budget
is zero. For each vertex
$u \in V_{\ell-1}$ we contract all edges from $u$
to its children and increase the
weight $w(u)$ of $u$ by the sum of the weights
of all of its children.
Formally, if $u$ has children $v_1, \dots, v_k\in V_\ell$,
the vertices $u,v_1, \dots, u_k$ are replaced by a single
vertex $z$ with weight $w(z) = w(u) + \sum_{i=1}^k w(v_i)$,
and $z$ is adjacent to the parent of $u$ and to all children
of $v_1,\dots, v_k$.
One can easily observe that this is an ``exact''
transformation in the sense that any solution before
the contraction remains one after contraction
and vice versa (when identifying the vertex $z$
in the contracted version with $v$);
moreover, solutions before and
after contraction have the same value.

Now, by first applying Theorem~\ref{thm:compressionDownPush}
and then applying the latter contraction operations level by
level to all levels
$\ell\in \{2,\dots, L\}$
with zero budget (in an arbitrary order),
we obtain an equivalent instance with the desired 
depth, thus satisfying the conditions of
Theorem~\ref{thm:compressionFF}.
\end{proof}

It remains to prove Theorem~\ref{thm:compressionDownPush}.

\begin{proof}[Proof of Theorem~\ref{thm:compressionDownPush}]
Consider a unit-budget Firefighter instance on a tree
$G=(V,E)$ with depth $L$.
The push-down $\overline{\mathcal{I}}$ that we construct
will have nonzero budgets precisely on the following
levels $\mathcal{L} \subseteq [L]$:
\begin{equation*}
\mathcal{L} = \left\{\left\lceil(1+\delta)^j\right\rceil
    \;\middle\vert\; j\in \left\{0,\dots,
        \left\lfloor\frac{\log L}{\log(1+\delta)}
           \right\rfloor\right\}\right\}
   \cup \{L\}.
\end{equation*}
For simplicity, let $\mathcal{L}= \{\ell_1,\dots, \ell_k\}$
with $1=\ell_1 < \ell_2 < \dots < \ell_k=L$.
Hence,
$k=O(\frac{\log L}{\log(1+\delta)})
  = O(\frac{\log L}{\delta})$. The push-down
$\overline{\mathcal{I}}$ is obtained by pushing
any budget on a level not in $\mathcal{L}$ down
to the next level in $\mathcal{L}$. Formally,
for $i\in [k]$, the budget $B_{\ell_i}$
at level $\ell_i$ is given by
$B_{\ell_i} = \ell_i - \ell_{i-1}$, where
we set $\ell_{0}=0$.
Moreover, $B_\ell=0$ for
$\ell\in [L]\setminus \mathcal{L}$.
Clearly, the instance $\overline{\mathcal{I}}$ can be
constructed efficiently. Furthermore, the number
of levels with nonzero budget is equal to
$k=O(\frac{\log L}{\delta})$ as desired. It remains
to show point~\ref{item:closeToOPT}
of Theorem~\ref{thm:compressionDownPush}.

To show~\ref{item:closeToOPT}, consider an optimal
redundancy-free solution $S^*\subseteq V$ of $\mathcal{I}$; hence,
$\val(\OPT(\mathcal{I})) = \sum_{u\in S^*} w(T_u)$ and
no two vertices of $S^*$ lie on the same leaf-root path.
We will show that there is a feasible solution
$\overline{S}$ to $\overline{\mathcal{I}}$ such that
$\overline{S}\subseteq S^*$ and the value of
$\overline{S}$ is at least $(1-\delta)\val(\OPT(\mathcal{I}))$.
Notice that since $S^*$ is redundancy-free, any subset
of $S^*$ is also redundancy-free. Hence, the value of
the set $\overline{S}$ to construct will be equal
to $\sum_{u\in \overline{S}} w(T_u)$.
The set $S^*$ being (budget-)feasible for $\mathcal{I}$
implies 
\begin{equation}\label{eq:SStarFeasible}
|S^*\cap V_{\leq \ell}| \leq \ell
   \quad \forall \ell\in [L].
\end{equation}
Analogously, a set $S\subseteq V$ is feasible for
$\overline{\mathcal{I}}$ if and only if
\begin{equation}\label{eq:SFeasibleFull}
|S\cap V_{\leq \ell}| \leq \sum_{i=1}^\ell B_i
   \quad \forall \ell\in [L].
\end{equation}
Hence, we want to show that there is a set $\overline{S}$
satisfying the above system and such that
$\sum_{u\in \overline{S}}w(T_u)
  \geq (1-\delta)\val(\OPT(\mathcal{I}))$.
Notice that in~\eqref{eq:SFeasibleFull}, the constraint
for any $\ell\in [L-1]$ such that $B_{l+1}=0$ is
redundant due to the constraint for level $\ell+1$
which has the same right-hand side but a larger
left-hand side.
Thus, system~\eqref{eq:SFeasibleFull} is equivalent
to the following system
\begin{equation}\label{eq:SFeasibleShort}
\begin{aligned}
|S\cap V_{\leq \ell_{i+1}-1}| &\leq \ell_{i} 
   \quad \forall i\in [k-1],\\
|S\cap V| &\leq L.
\end{aligned}
\end{equation}
To show that there is a good subset
$\overline{S}\subseteq S^*$ that
satisfies~\eqref{eq:SFeasibleShort} we use a
polyhedral approach.
Observe that~\eqref{eq:SFeasibleFull} is the
constraint system of a laminar matroid
(see~\cite[Volume B]{Schrijver2003} for more information on matroids).
Hence,
the convex hull of all characteristic vectors
$\chi^S\in \{0,1\}^V$ of sets $S\subseteq S^*$
satisfying~\eqref{eq:SFeasibleShort} is given
by the following polytope
\begin{equation*}
P = \left\{
x\in [0,1]^V \;\middle\vert\;
\begin{minipage}[c]{0.4\linewidth}
\vspace{-1em}
\begin{align*}
x(V_{\leq \ell_{i+1}-1}) &\leq \ell_{i} \;\;\forall i\in [k-1],\\
x(V) &\leq L,\\
x(V\setminus S^*) &= 0
\end{align*}
\end{minipage}
\right\}.
\end{equation*}
Alternatively, to see that $P$ indeed
describes the correct polytope,
without relying on matroids, one can observe that its
constraint matrix is totally unimodular because it
has the consecutive-ones property with respect to the
columns.

Thus there exists a set $\overline{S}\subseteq S^*$ with
$\sum_{u\in \overline{S}} w(T_u) \geq (1-\delta)\val(\OPT(\mathcal{I}))$
if and only if
\begin{equation}\label{eq:polSb}
\max\left\{\sum_{u\in S^*} x(u)\cdot
 w(T_u) \;\middle\vert\;
    x\in P\right\}\geq (1-\delta)\val(\OPT(\mathcal{I})).
\end{equation}
To show~\eqref{eq:polSb}, and thus complete the proof,
we show that $y=\frac{1}{1+\delta} \chi^{S^*}\in P$.
This will indeed imply~\eqref{eq:polSb} since the
objective value of $y$ satisfies
\begin{equation*}
\sum_{u\in S^*} y(u) \cdot w(T_u) =
 \frac{1}{1+\delta}\val(\OPT(\mathcal{I}))
    \geq (1-\delta)\val(\OPT(\mathcal{I})).
\end{equation*}

To see that $y\in P$, notice that
$y(V\setminus S^*)=0$ and
$y(V) = \frac{1}{1+\delta} |S^*|
\leq \frac{1}{1+\delta} L \leq L$, where the
first inequality follows by $S^*$
satisfying~\eqref{eq:SStarFeasible} for $\ell=L$.
Finally, for $i\in [k-1]$, we have
\begin{align*}
y(V_{\leq \ell_{i+1}-1}) &=
  \frac{1}{1+\delta}
    |S^* \cap V_{\leq \ell_{i+1}-1}|
\leq \frac{1}{1+\delta}(\ell_{i+1}-1),
\end{align*}
where the inequality follows from $S^*$
satisfying~\eqref{eq:SStarFeasible}
for $\ell=\ell_{i+1}-1$.
It remains to show $\ell_{i+1} -1 \leq (1+\delta)\ell_i$
to prove $y\in P$.
Let $\alpha \in \mathbb{Z}_{\geq 0}$ be the smallest
integer for which we have
$\ell_{i+1} = \lceil (1+\delta)^{\alpha}\rceil$. In
particular, this implies
$\ell_{i}=\lceil (1+\delta)^{\alpha-1}\rceil$. We
thus obtain
\begin{equation*}
\ell_{i+1} - 1 \leq (1+\delta)^{\alpha}
   = (1+\delta) (1+\delta)^{\alpha-1}
   \leq (1+\delta) \ell_i,
\end{equation*}
as desired.

\end{proof}

We conclude with the proof of Theorem~\ref{thm:compressionRMFC}.

\begin{proof}[Proof of Theorem~\ref{thm:compressionRMFC}.]

 We start by describing the construction of $G' = (V',E')$. As is the case
in the proof of Theorem~\ref{thm:compressionFF}, we first change the 
budget assignment of the instance and then contract all levels with zero budgets.
Notice that, for a given budget $B$ per layer,
we can consider an RMFC instance as a Firefighter instance,
where each leaf $u\in \Gamma$ has weight $w(u)=1$, and all other
weights are zero. Since our goal is to save all leaves, we want
to save vertices of total weight $|\Gamma|$.

For simplicity of presentation we assume that $L$ is a power of $2$. This assumption does
not compromise generality, as one can always augment the original tree with one path starting from the root and going down to level
$2^{\lceil\log L\rceil}$.

The set of levels in which the transformed instance will have
nonzero budget is 
\begin{equation*}
\mathcal{L} = \left\{2^j-1 \,\middle\vert\, j\in \{1,\ldots, \log L \} \right\}.
\end{equation*}
However, instead of down-pushes we will do \emph{up-pushes} were
budget is moved upwards. More precisely, 
the budget of any level $\ell\in [L]\setminus \mathcal{L}$
will be assigned to the first level in $\mathcal{L}$ that
is above $\ell$, i.e., has a smaller index than $\ell$.
As for the Firefighter case, we now remove all $0$-budget
levels using contraction, which will lead to a new
weight function $w'$ on the vertices. Since our goal
is to save the weight of the whole tree,
we can remove for each vertex $u$ with $w'(u) > 0$, the
subtree below $u$. This does not change the problem since
we have to save $u$, and thus will anyway also save its subtree.
This finishes our construction of $G'=(V',E')$, and the task
is again to remove all leaves of $G'$.
Notice that $G'$ has $L' \leq |\mathcal{L}| = \log L $
many levels, and level $\ell\in [L']$ has a budget of
$B 2^{\ell}$ as desired.
Analogous to the
discussion for compression in the context of the Firefighter 
problem we have that if the original problem is feasible,
then so is the RMFC problem on $G'$ with
budgets $B 2^{\ell}$.
Indeed, before performing the contraction operations (which
do not change the problem), the original RMFC problem was
a push-down of the one we constructed.

Similarly, one can observe that before contraction,
the instance we obtained is itself a push-down of
the original instance with budgets $2B$ on each level.
Hence, analogously to the compression result for
the Firefighter case, any solution to the RMFC problem
on $G'$ can 
efficiently be transformed into a solution to the original
RMFC problem on $G$ with budgets $2B$ on each level.

\end{proof}

\section{Missing details for Firefighter PTAS}\label{sec:proofsFF}

In this section we present the missing proofs for our PTAS for the
Firefighter problem.

We start by proving Lemma~\ref{lem:sparsityFF}, showing that
any vertex solution $x$ to \ref{eq:lpFF} has
few $x$-loose vertices.
More precisely, the proof below shows that the number
of $x$-loose vertices is upper bounded by the number
of tight budget constraints.
The precise same reasoning used in the proof of
Lemma~\ref{lem:sparsityFF} can also be applied
in further contexts, in particular for the RMFC problem.

\subsubsection*{Proof of Lemma~\ref{lem:sparsityFF}}

Let $x$ be a vertex of the polytope defining the feasible set
of~\ref{eq:lpFF}.
Hence, $x$ is uniquely defined by
$|V\setminus\{r\}|$ many linearly independent and tight
constraints of this polytope.
Notice that the tight constraints can be partitioned into
three groups:
\begin{enumerate}[label=(\roman*),nosep]
\item Tight nonnegativity constraints, one for
each vertex in $\mathcal{F}_1=\{u\in V\setminus \{r\} \mid x(u) = 0\}$.

\item Tight budget constraints, one for each level in
$\mathcal{F}_2 = \{\ell\in [L] \mid x(V_{\leq \ell})=\sum_{i=1}^\ell B_i\}$.

\item Tight leaf constraints, one for each vertex in
$\mathcal{F}_3 = \{u\in \Gamma \mid x(P_u) = 1\}$.
\end{enumerate}
Due to potential degeneracies of the polytope describing
the feasible set of~\ref{eq:lpFF} there may be several
options to describe $x$ as the unique solution to
a full-rank linear subsystem of the constraints described
by $\mathcal{F}_1 \cup \mathcal{F}_2 \cup \mathcal{F}_3$.
We consider a system that contains all tight
nonnegativity constraints, i.e.,
constraints corresponding to $\mathcal{F}_1$, and
complement these constraints with arbitrary subsets
$\mathcal{F}'_2\subseteq \mathcal{F}_2$ and 
$\mathcal{F}'_3\subseteq \mathcal{F}_3$ of
budget and leaf constraints that lead to a full rank
linear system corresponding to the constraints
$\mathcal{F}_1 \cup \mathcal{F}'_2 \cup \mathcal{F}'_3$.
Hence
\begin{equation}\label{eq:fullRankSys}
|\mathcal{F}_1| + |\mathcal{F}'_2| + |\mathcal{F}'_3| = |V| - 1.
\end{equation}

Let $V^{\mathcal{L}}\subseteq \supp(x)$
and $V^{\mathcal{T}}\subseteq \supp(x)$
be the $x$-loose and $x$-tight vertices, respectively.
We first show $|\mathcal{F}'_3|\leq |V^{\mathcal{T}}|$.
For each leaf $u\in \mathcal{F}'_3$, let $f_u\in V^\mathcal{T}$ be 
the first vertex on the unique $u$-root path that is part of
$\supp(x)$. In particular, if $u\in \supp(x)$ then $f_u=u$.
Clearly, $f_u$ must be an $x$-tight vertex because
the path constraint with respect to $u$ is tight.
Notice that for any distinct vertices $u_1,u_2\in \mathcal{F}'_3$,
we must have $f_{u_1}\neq f_{u_2}$. Assume by sake of
contradiction that $f_{u_1}= f_{u_2}$. However, this implies
$\chi^{P_{u_1}} - \chi^{P_{u_2}}\in \spn(\{\chi^{v} \mid v\in \mathcal{F}_1\})$, since 
$P_{u_1} \Delta P_{u_2} := (P_{u_1} \setminus P_{u_2})\cup (P_{u_2} \setminus P_{u_1}) \subseteq \mathcal{F}_1$, and leads to a contradiction
because we exhibited a linear dependence among the constraints
corresponding to $\mathcal{F}'_3$ and $\mathcal{F}_1$.
Hence, $f_{u_1}\neq f_{u_2}$ which implies that the
map $u \mapsto f_u$ from $\mathcal{F}'_3$ to $V^{\mathcal{T}}$
is injective and thus
\begin{equation}\label{eq:boundLeafConstr}
|\mathcal{F}'_3| \leq |V^{\mathcal{T}}|.
\end{equation}
We thus obtain
\begin{align*}
|\supp(x)| &= |V|-1-|\mathcal{F}_1|
    && \text{($\supp(x)$ consists of all $u\in V\setminus \{r\}$ with
               $x(u)\neq 0$, i.e., $u\not\in \mathcal{F}_1$)}\\
  &= |\mathcal{F}'_2| + |\mathcal{F}'_3|
    && \text{(by~\eqref{eq:fullRankSys})}\\
  &\leq |\mathcal{F}'_2| + |V^{\mathcal{T}}|
    && \text{(by~\eqref{eq:boundLeafConstr})},
\end{align*}
which leads to the desired result since
\begin{equation*}
|V^{\mathcal{L}}| = |\supp(x)| - |V^{\mathcal{T}}|
  \leq |\mathcal{F}'_2| \leq L.
\end{equation*}

\qed

\subsubsection*{Proof of Lemma~\ref{lem:pruning}}
Within this proof we focus on protection sets where the budget available
for any level is spent on the same level (and not a later one).
As discussed, there is always an optimal protection set
with this property.

Let $B_\ell \in \mathbb{Z}_{\geq 0}$ be the budget available at level $\ell\in [L]$ and let 
$\lambda_\ell = \lambda B_\ell$.
 We construct the tree $G'$ using the following greedy procedure. Process
the levels of $G$ from the first one to the last one. At every level $\ell\in [L]$,
pick $\lambda_\ell$ vertices $u^\ell_1, \cdots, u^\ell_{\lambda_\ell}$ at the $\ell$-th 
level of $G$ greedily, i.e., pick each next vertex such that the subtree corresponding to that 
vertex has largest weight among all remaining vertices in the level. 
After each selection of a vertex the greedy procedure can no longer 
select any vertex in the corresponding subtree in subsequent iterations.\footnote{
For $\lambda=1$ this procedure produces a set of vertices, which comprise
a $\frac{1}{2}$-approximation for the Firefighter problem, as it coincides
with the greedy algorithm of Hartnell and Li~\cite{HartnellLi2000}.}

Now, the tree $G'$ is constructed by deleting from $G$ any vertex
that is both not contained in any subtree $T_{u^\ell_i}$, and not 
contained in any path $P_{u^\ell_i}$ for $\ell\in [L]$ and $i\in [\lambda_\ell]$.
In other words, if $U\subseteq V$ is the set of all leaves
of $G$ that were disconnected from the root by the greedy
algorithm, then we consider the subtree of $G$ induced
by the vertices $\cup_{u\in U}P_u$.
Finally, the weights of vertices on the paths 
$P_{u^\ell_i} \setminus \{u^\ell_i\}$ for $\ell\in [L]$ and $i\in [\lambda_\ell]$ are reduced
to zero. This concludes the construction of $G'=(V',E')$ and the new weight function $w'$. Denote
by $D_\ell = \{u^\ell_1,\cdots, u^\ell_{\lambda_\ell}\}$ the set of vertices chosen by the
greedy procedure in level $\ell$, and let $D=\cup_{\ell\in [L]} D_{\ell}$.
Observe that by construction we have that each vertex
with non-zero weight is in the subtree of a vertex in $D$, i.e.,
$$
w'(V') = \sum_{u\in D} w'(T'_u).
$$
The latter immediately implies point~\ref{item:pruningLargeOpt}
of Lemma~\ref{lem:pruning} because the vertices $D$ can
be partitioned into $\lambda$ many vertex sets that are
budget-feasible and can thus be protected in a Firefighter solution.
Hence an optimal solution to the Firefighter problem
on $G'$ covers at least a $\frac{1}{\lambda}$-fraction of the total
weight of $G'$.

It remains to prove point~\ref{item:pruningSmallLoss} of the Lemma.
Let $S^* = S^*_1\cup \cdots \cup S^*_L$ be the vertices protected in some optimal
solution in $G$, where $S^*_\ell \subseteq V_\ell$ are the vertices protected in level $\ell$ (and
hence $|S^*_\ell| \leq B_\ell$). 
Without loss of generality, we assume that $S^*$ is redundancy-free.
For distinct vertices $u,v\in V$ we say that $u$ \emph{covers} $v$ if $v\in T_u \setminus \{u\}$.

For $\ell \in [L]$, let $I_\ell = S^*_l \cap D_\ell$ be the set of vertices protected 
by the optimal solution that are also chosen by the greedy algorithm in level $\ell$.
Furthermore, let $J_\ell \subseteq S^*_\ell$
be the set of vertices of the optimal solution that are 
covered by vertices chosen by the greedy algorithm in earlier
iterations, i.e.,
$J_\ell = S^*_\ell \cap \bigcup_{u\in D_1\cup\cdots\cup D_{\ell -1}} T_u$. 
Finally, let $K_\ell = S^*_\ell \setminus (I_\ell \cup J_\ell)$ be all other
optimal vertices in level $\ell$. Clearly, $S^*_\ell = I_\ell \cup J_\ell \cup K_\ell$ 
is a partition of $S^*_\ell$.

Consider a vertex $u\in K_\ell$ for some $\ell\in [L]$. From the guarantee of the greedy 
algorithm it holds that for every vertex $v\in D_\ell$ we have $w'(T_v) = w(T_v) \geq w(T_u)$. 
The same does not necessarily hold for covered vertices. 
On the other hand, covered vertices
are contained in $G'$ with their original weights. We exploit these two 
properties to prove the existence of a solution in $G'$
of almost the same weight as $S^*$.

To prove the existence of a good solution we construct
a solution $A = A_1 \cup \cdots \cup A_L$ with $A_\ell \subseteq V_\ell$ and $|A_\ell| \leq B_\ell$
randomly, and prove a bound on its expected quality.
We process the levels of the tree $G'$ top-down to construct $A$ step
by step.
This clearly does not compromise generality. Recall that we only need to prove the 
existence of a good solution, and not compute it efficiently. We can hence assume the
knowledge of $S^*$ in the construction of $A$. To this end assume that all levels
$\ell' < \ell$ were already processed, and the corresponding sets $A_{\ell'}$ were
constructed. The set $A_{\ell}$ is constructed as follows:

\begin{enumerate}
\item Include in $A_\ell$ all vertices in $I_\ell$.
\item Include in $A_\ell$ all vertices in $J_\ell$ that are not 
covered by vertices in $A_1\cup \cdots \cup A_{\ell-1}$ (vertices selected so far).
\item Include in $A_\ell$ a \emph{uniformly random subset} of $|K_\ell|$ vertices
from $D_\ell \setminus I_\ell$.
\end{enumerate}

It is easy to verify that the latter algorithm returns a redundancy-free solution, as no two
chosen vertices in $A$ lie on the same path to the root. Next, we show that the expected
weight of vertices saved by $A$ is at least $(1-\frac{1}{\lambda})\val(\OPT(\overline{\mathcal{I}}))$, 
which will prove our claim, since then at least one solution has the desired quality.

Since we only need a bound on the expectation we can focus on a single level $\ell \in [L]$ 
and show that the contribution of vertices in $A_\ell$ is in expectation at least $1-\frac{1}{\lambda}$
times the contribution of the vertices in $S^*_\ell$. Observe that the vertices in $I_\ell$ are
contained both in $S^*_\ell$ and in $A_\ell$, hence it suffices to show that the contribution
of $A_\ell \setminus I_\ell$ is at least $1-\frac{1}{\lambda}$ times the contribution 
of $S^*_\ell \setminus I_\ell$, in expectation. Also, recall that every vertex in $D_\ell$
contributes at least as much as any vertex in $K_\ell$, by the greedy selection rule. It follows
that the $|K_\ell|$ randomly selected vertices in $A_\ell$ have at least as much contribution
as the vertices in $K_\ell$. Consequently, to prove the claim is suffices to bound the 
expected contribution of vertices in $A_\ell \cap J_\ell$ with respect to the contribution of
$J_\ell$. Since $A_\ell \cap J_\ell \subseteq J_\ell$ it suffices to show that every vertex
$u\in J_\ell$ is also present in $A_\ell$ with probability at least $1-\frac{1}{\lambda}$.

To bound the latter probability we make use of the random choices in the construction
of $A$ as follows. Let $\ell' < \ell$ be the level at which for some $w\in D_{\ell'}$ it 
holds that $u\in T_w$. In other words, $\ell'$ is the level that contains the ancestor 
of $u$ that was chosen by the greedy construction of $G'$. Now, since $S^*$ is redundancy-free,
and by the way that $A$ is constructed, it holds that if $u\not\in A_\ell$ 
then $w\in A_{\ell'}$, namely if $u$ is covered, it can only be covered by the 
unique ancestor $w$ of $u$ that was chosen in the greedy construction of $G'$. Furthermore,
in such a case the vertex $w$ was selected randomly in the third step of the $\ell'$-th
iteration. Put differently, the probability that the vertex $u$ is covered 
is exactly the probability that its ancestor $w$ is chosen randomly to be part of $A_{\ell'}$.
Since these vertices are chosen to be a random subset of $|K_{\ell'}|$ vertices from the set $D_{\ell'}\setminus I_{\ell'}$,
this probability is at most 
$$
\frac{|K_{\ell'}|}{|D_{\ell'}| - |I_{\ell'}|} =
\frac{|K_{\ell'}|}{\lambda B_{\ell'} - |I_{\ell'}|} \leq 
\frac{1}{\lambda}, 
$$
where the last inequality follows from $|K_{\ell'}| + |I_{\ell'}| \leq B_{\ell'}$.
This implies that $u\in A_\ell$ with probability of at least $1-\frac{1}{\lambda}$, as required
and concludes the proof of the lemma.

\qed

\subsubsection*{Proof of Lemma~\ref{lem:setQ}}

We construct the set $Q$ in two phases as follows. First we construct 
a set $\overline Q \subseteq H$ of vertices fulfilling the first and the third properties, i.e.,
it will satisfy $|\overline Q| = O(\frac{\log N}{\epsilon^3})$, as well as the property that
$G[V\setminus \overline Q\cup \{r\}]$ has connected components each of weight at most $\eta$. Then,
we add to $\overline Q$ all vertices of $H$ of degree at least three to arrive
at the final set $Q$.

It will be convenient to define heavy vertices and heavy tree with respect to any 
subtree $G'= (V', E')$ of $G$ which contains the root $r$. Concretely, we 
define $H_{G'} = \{u\in V'\setminus \{r\} \,\mid\, w(T'_u)\geq \eta\}$ 
to be the set of $G'$-heavy vertices. The $G'$-heavy tree is the
subtree $G'[H_{G'} \cup \{r\}]$ of $G'$. Observe that $H = H_G$ and that
$H_{G'} \subseteq H$ for every subtree $G'$ of $G$.

To construct $\overline Q$ we process the tree $G$ in a bottom-up 
fashion starting with $\overline Q = \emptyset$. We will also remove
parts of the tree in the end of every iteration. The first iteration 
starts with $G' = G$. In every iteration that starts with tree $G'$, include in 
$\overline Q$ an arbitrary leaf $u\in H_{G'}$ of the heavy tree and remove $u$ and all vertices
in its subtree from $G'$. The procedure ends when there is
either no heavy vertex in $G'$ anymore, or when $r$ is the
only heavy vertex in $G'$.

Let us verify that the claimed properties indeed hold. The fact that 
$|\overline Q| = O(\frac{\log N}{\epsilon^3})$ follows from the fact that at each iteration 
we remove a $G'$-heavy vertex including all its subtree from the 
current tree $G'$. This implies that the total weight of the tree $G'$
decreases by at least $\eta$ in every iteration. Since we only include one 
vertex in every iteration we have
$|\overline Q| \leq \frac{w(V)}{\eta} = O(\frac{\log N}{\epsilon^3})$.

The third property follows from the fact that we always remove a leaf
of the $G'$-heavy tree. Observe that the connected components of 
$G[V\setminus (\overline Q \cup \{r\})]$ are contained in the subtrees
we disconnect in every iteration in the construction of $\overline Q$.
By definition of $G'$-heavy leaves, in any such iteration where 
a $G'$-heavy leaf $u$ is removed from the tree, these parts have weight 
at least $\eta$, but any subtree rooted at any descendant of $u$ has
weight strictly smaller than $\eta$ (otherwise this descendant would
be $G'$-heavy as well, contradicting the assumption that it has a
$G'$-heavy leaf $u$ as an ancestor). Now, since $u$ is included in $\overline Q$,
the connected components are exactly these subtrees, so the property indeed holds.

To construct $Q$ and conclude the proof it remains to include in $\overline Q$
all remaining nodes of degree at least three in the heavy tree. The 
fact that also all leaves of the heavy tree are included in $Q$ is
readily implied by the construction of $\overline Q$, so the second property 
holds for $Q$. Clearly, by removing more vertices from the heavy tree, the sizes
of connected components only get smaller, so $Q$ also satisfies the third
condition, since $\overline Q$ already did. Finally, the number of 
vertices of degree at least three in the heavy tree is strictly
less than the number of its leaves, which is $O(\frac{\log N}{\epsilon^3})$;
for otherwise a contradiction would occur since the tree would
have an average degree of at least $2$.
This implies that, in
total, $|Q| = O(\frac{\log N}{\epsilon^3})$,
so the first property also holds.

To conclude the proof of the lemma it remains to note that the latter
construction can be easily implemented in polynomial time.

\qed

\section{Missing details for $O(1)$-approximation
for RMFC}\label{sec:proofsRMFC}

This section contains the missing proofs for our
$12$-approximation for RMFC.

\subsection*{Proof of Theorem~\ref{thm:bottomCover}}

To prove Theorem~\ref{thm:bottomCover} we first show
the following result, based on which Theorem~\ref{thm:bottomCover}
follows quite directly.

\begin{lemma}\label{lem:sliceCover}
Let $B\in \mathbb{R}_{\geq 1}$, $\eta\in (0,1]$,
$k \in \mathbb{Z}_{\geq 1}$, and
$\ell_1 = \lfloor \log^{(k)} L \rfloor$,
$\ell_2 = \lfloor \log^{(k-1)} L \rfloor$.
Let $x\in P_B$ with
$\supp(x)\subseteq V_{(\ell_1,\ell_2]}
 \coloneqq V_{>\ell_1} \cap V_{\leq \ell_2}$,
and we define $Y = \{u\in \Gamma \mid x(P_u) \geq \eta\}$.
Then one can efficiently compute a
set $R\subseteq V_{(\ell_1,\ell_2]}$ such
that
\smallskip
\begin{enumerate}[nosep, label=(\roman*)]
\item\label{item:scHitPath}
$R\cap P_u \neq \emptyset \quad \forall u\in Y$, and

\item\label{item:scBudgetOk}
$\chi^R\in P_{\bar{B}}$,
where $\bar{B} = \frac{1}{\eta} B + 1$.
\end{enumerate}
\end{lemma}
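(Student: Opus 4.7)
The plan is to rescale $x$ so that the leaves in $Y$ carry at least one unit of fractional load, extract a sparse vertex solution of the resulting covering LP, and round it integrally to obtain $R$, paying one extra unit of per-level budget to absorb the loose portion.

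\emph{Setup.} Set $y := x/\eta$. Then $y \in P_{B/\eta}$, $y(P_u) \geq 1$ for every $u \in Y$, and $\supp(y) \subseteq V_{(\ell_1,\ell_2]}$. Consider the polytope
\[
Q = \left\{ z \in \mathbb{R}_{\geq 0}^{V\setminus\{r\}} \;\middle\vert\; z(v) = 0\ \forall v \notin V_{(\ell_1,\ell_2]},\ z \in P_{B/\eta},\ z(P_u) \geq 1\ \forall u \in Y \right\},
\]
which is non-empty since $y \in Q$. Starting from $y$, I would walk to a vertex $z$ of $Q$ in polynomial time via standard LP-vertex extraction.

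\emph{Sparsity and level arithmetic.} Partition $\supp(z) = \Vt \cup \Vl$, calling $v \in \supp(z)$ $z$-tight if some leaf $u \in Y$ with $v \in P_u$ has $z(P_u) = 1$, and $z$-loose otherwise. The linear-algebra counting argument of Lemma~\ref{lem:sparsityFF} transfers verbatim: each tight leaf $u$ maps injectively to the deepest support vertex on $P_u$ (which must be $z$-tight, and injectivity follows from the same contradiction via linear dependence among the basis constraints). This yields $|\Vl| \leq |\mathcal{F}_2'|$, with $\mathcal{F}_2'$ a set of tight budget rows in the vertex's basis. Budget rows at levels outside $(\ell_1,\ell_2]$ are automatically slack since $\supp(z)\subseteq V_{(\ell_1,\ell_2]}$, so $|\Vl| \leq \ell_2 - \ell_1$. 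The decisive arithmetic is that $\ell_1 \geq \log^{(k)} L - 1$ gives $2^{\ell_1+1} \geq 2^{\log^{(k)} L} = \log^{(k-1)} L \geq \ell_2 \geq \ell_2 - \ell_1$; hence $|\Vl| \leq 2^\ell$ for every $\ell \in (\ell_1,\ell_2]$, so the loose vertices together fit within a single extra unit of budget per level.

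\emph{Construction of $R$.} Set $R = \Vl \cup R'$, where $R' \subseteq \Vt$ is an integral rounding of $z$ on its tight support that respects the budget $B/\eta$ per level. Every leaf $u \in Y$ has $\supp(z) \cap P_u \neq \emptyset$; if the intersection contains a loose vertex, then $u$ is cut off by $\Vl$, otherwise $\supp(z) \cap P_u \subseteq \Vt$ and $u$ is cut off by $R'$. The per-level budget of $R$ is then at most $|\Vl \cap V_\ell| + |R' \cap V_\ell| \leq 2^\ell + (B/\eta) \cdot 2^\ell = \bar{B} \cdot 2^\ell$, as required by property~(ii).

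\emph{Main obstacle.} The crux will be building $R'$: integrally covering all leaves whose entire $z$-support on $P_u$ is tight, using at most $(B/\eta) \cdot 2^\ell$ vertices per level. The Firefighter PTAS's laminar/TU reduction is unavailable here, since the $\geq 1$ direction of the path constraint allows two tight vertices to lie on a common root-leaf path. I would handle this either by exploiting the consecutive-ones structure of the tree path-incidence matrix (totally unimodular in DFS column order) combined with the partition structure of the level budget constraints to argue that the LP restricted to $\Vt$ has an integral optimum, or alternatively by a direct bottom-up greedy over levels $\ell = \ell_2, \ell_2 - 1, \ldots, \ell_1 + 1$, selecting at each level just enough tight support vertices to finish cutting off the leaves not covered by deeper choices.
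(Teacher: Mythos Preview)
Your overall framework—rescale by $1/\eta$, pass to a vertex of the resulting covering polytope $Q$, bound the number of loose vertices via the sparsity argument, and absorb them with the extra ``$+1$'' of budget—matches the paper's proof. The arithmetic $2^{\ell_1+1} \geq \log^{(k-1)} L \geq \ell_2 - \ell_1$ is exactly the crucial inequality, and your sparsity reasoning is correct.

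The gap is precisely the ``main obstacle'' you flag: constructing $R'$. Neither of your proposed routes is sound as stated. The tree-path incidence matrix does have the consecutive-ones property in DFS order, but once you append the per-level budget rows $z(V_\ell) \leq (B/\eta)2^\ell$, the combined matrix is \emph{not} totally unimodular in general, so the LP restricted to $\Vt$ need not have integral vertices. The bottom-up greedy is a heuristic with no evident budget guarantee. So your argument is incomplete at exactly the step you identify.

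The paper avoids this subproblem entirely by choosing the vertex of $Q$ more carefully: instead of walking to an arbitrary vertex, take $z$ to be an optimal vertex solution of $\min\{z(V\setminus\{r\}) \mid z \in Q\}$. Minimality then forces $z(P_u) \leq 1$ for every leaf $u$ (otherwise one could decrease some coordinate and stay in $Q$), so in fact $z(P_u) = 1$ for all $u \in Y$. This buys you the following dichotomy for each $u \in Y$: either $|\supp(z)\cap P_u| = 1$, in which case that single vertex $v$ has $z(v)=1$; or $|\supp(z)\cap P_u| \geq 2$, in which case the topmost support vertex on $P_u$ is loose (in the standard sense $z(P_v)<1$). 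Hence one can simply take
\[
R = \Vl \cup \{v \in V\setminus\{r\} : z(v) = 1\},
\]
with no rounding step at all. Coverage is immediate from the dichotomy, and the budget bound is $|R\cap V_\ell| \leq z(V_\ell) + |\Vl| \leq (B/\eta)2^\ell + 2^\ell$. In short, the missing idea is not a clever rounding of the tight part but the choice of objective that makes rounding unnecessary.
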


We first observe that Lemma~\ref{lem:sliceCover} indeed
implies Theorem~\ref{thm:bottomCover}.

\begin{proof}[Proof of Theorem~\ref{thm:bottomCover}]
For $k=1,\dots, q$, let
$\ell_1^k = \lfloor \log^{(k)} L\rfloor$ and
$\ell_2^k = \lfloor \log^{(k-1)} L\rfloor$, and we define
$x^k\in P_B$ by $x^k = x \wedge \chi^{V_{(\ell_1^k, \ell_2^k]}}$.
Hence, $x=\sum_{k=1}^q x^k$.
For each $k\in [q]$, we apply Lemma~\ref{lem:sliceCover} to
$x^k$ with $\eta = \frac{\mu}{q}$ to obtain a set
$R^k \subseteq V_{(\ell_1^k, \ell_2^k]}$ satisfying
\begin{enumerate}[nosep, label=(\roman*)]
\item $R^{k}\cap P_u \neq \emptyset$ \quad
$\forall u\in Y^k=\{u\in \Gamma \mid x^k(P_u) \geq \eta\}$, and

\item $\chi^{R^k} \in P_{\bar{B}}$, where
$\bar{B} \coloneqq \frac{1}{\eta} B + 1 = \frac{q}{\mu} B + 1
  \eqqcolon B'$.
\end{enumerate}
We claim that $R=\cup_{k=1}^q R^k$ is a set satisfying
the conditions of Theorem~\ref{thm:bottomCover}.
The set $R$ clearly satisfies $\chi^R \in P_{B'}$
since $\chi^{R^k}\in P_{B'}$ for $k\in [q]$
and the sets $R^k$ are on disjoint levels.
Furthermore, for each $u\in W=\{v\in \Gamma \mid x(P_v)\geq \mu\}$
we indeed have $P_u\cap R\neq\emptyset$ due to the following.
Since $x=\sum_{k=1}^q x^k$ and $x(P_u) \geq \mu$ there exists
an index $j\in [q]$ such that $x^j(P_u) \geq \eta = \frac{\mu}{q}$,
and hence $P_u \cap R \supseteq P_u \cap R^j \neq \emptyset$.

\end{proof}

Thus, it remains to prove Lemma~\ref{lem:sliceCover}.

\begin{proof}[Proof of Lemma~\ref{lem:sliceCover}]\leavevmode

Let $\tilde{B} = \frac{1}{\eta} B$.
We start by determining an optimal vertex solution $y$
to the linear program $\min\{z(V\setminus \{r\}) \mid z\in Q\}$,
where
\begin{equation*}
Q = \{z\in P_{\tilde{B}}
  \mid z(u) = 0\;\forall u\in V
    \setminus (V_{(\ell_1,\ell_2]} \cup \{r\}),\;\;
z(P_u) \geq 1 \;\forall u\in Y\}.
\end{equation*}
Notice that $Q\neq \emptyset$
since $\frac{1}{\eta} x \in Q$; hence, the above
LP is feasible.
Furthermore, notice that $y(P_u)\leq 1$ for $u\in \Gamma$;
for otherwise, there is a vertex $v\in \supp(y)$ such that
$y(P_v) > 1$, and hence $y - \epsilon \chi^{\{v\}}\in Q$
for a small enough $\epsilon >0$, violating that 
$y$ is an \emph{optimal} vertex solution.

Let $V^{\mathcal{L}}$ be all $y$-loose vertices.
We will show that the set
\begin{equation*}
R = V^{\mathcal{L}} \cup \{u\in V\setminus \{r\} \mid y(u)=1\}
\end{equation*}
fulfills the properties claimed by the lemma.
Clearly, $R\subseteq V_{(\ell_1,\ell_2]}$ since
$\supp(y) \subseteq V_{(\ell_1,\ell_2]}$.

To see that condition~\ref{item:scHitPath}
holds, let $u\in Y$, and notice that we have $y(P_u)=1$.
Either $|P_u \cap \supp(y)| =1$, in which case
the single vertex $v$ in $P_u\cap \supp(y)$ satisfies
$y(u)=1$ and is thus contained in $R$; or $|P_u\cap \supp(y)| > 1$,
in which case $P_u\cap V^{\mathcal{L}} \neq \emptyset$ which again
implies $R\cap P_u \neq \emptyset$.

To show that $R$ satisfies~\ref{item:scBudgetOk},
we have to show that $R$ does not exceed the budget
$\bar{B}\cdot 2^\ell = (\frac{1}{\eta}B + 1) 2^\ell$ of any
level $\ell\in \{\ell_1+1,\dots, \ell_2\}$.
We have
\begin{align*}
|R\cap V_\ell| \leq y(V_\ell) + |V^{\mathcal{L}}|
\leq \tilde{B} 2^\ell + |V^{\mathcal{L}}|
= \frac{1}{\eta} B 2^\ell + |V^{\mathcal{L}}|,
\end{align*}
where the second inequality follows from $y\in Q$.
To complete the proof it suffices to show
$|V^{\mathcal{L}}| \leq 2^\ell$.
This follows by a sparsity reasoning analogous to
Lemma~\ref{lem:sparsityFF} implying that the number
of $y$-loose vertices is bounded by the number
of tight budget constraints, and thus
\begin{equation}\label{eq:budgetBoundsFirstStep}
|V^{\mathcal{L}}| \leq \ell_2 - \ell_1 \leq \ell_2
 = \lfloor \log^{(k-1)} L \rfloor.
\end{equation}
Furthermore,
\begin{align*}
2^\ell &\geq 2^{\ell_1+1} = 2^{\lfloor \log^{(k)} L \rfloor + 1}
\geq 2^{\log^{(k)} L} = \log^{(k-1)} L,
\end{align*}
which, together with~\eqref{eq:budgetBoundsFirstStep},
implies $|V^{\mathcal{L}}| \leq 2^\ell$ and thus
completes the proof.

\end{proof}

\subsection*{Proof of Theorem~\ref{thm:bigBIsGood}}

Let $(y,B)$ be an optimal solution to the RMFC
relaxation $\min\{B \mid x\in \bar{P}_B\}$
and let $h=\lfloor \log L \rfloor$.
Hence, $B\leq B_\OPT$.
We invoke Theorem~\ref{thm:bottomCover} with respect
to the vector $y\wedge \chi^{V_{>h}}$ and $\mu=0.5$
to obtain a set $R_1\subseteq V_{>h}$ satisfying
\begin{enumerate}[nosep,label=(\roman*)]
\item $R_1\cap P_u\neq \emptyset
\quad\forall u\in W$,
and

\item $\chi^{R_1} \in P_{2B+1}$,
\end{enumerate}
where
$W = \{u\in \Gamma \mid y(P_u\cap V_{>h}) \geq 0.5\}$.
Hence, $R_1$ cuts off all leaves in $W$ from
the root by only protecting vertices on
levels $V_{> h}$ and using budget bounded by
$2B+1 \leq 3B \leq 3 \max\{\log L, B_\OPT\}$.

We now focus on the leaves $\Gamma \setminus W$,
which we will cut off from the root by protecting
a vertex set $R_2 \subseteq V_{\leq h}$ feasible
for budget $3 \max\{\log L, B_\OPT\}$.
Let $(z,\bar{B})$ be an optimal vertex
solution to the
following linear program
\begin{equation}\label{eq:reoptTop}
\min\left\{\bar{B} \;\middle\vert\;
x\in P_{\bar{B}},\; 
x(P_u) = 1 \;\forall u\in \Gamma\setminus W
\right\}.
\end{equation}
First, notice that~\eqref{eq:reoptTop} is feasible
for $\bar{B}\leq 2B$. This follows by observing
that the vector $q= 2(y\wedge \chi^{V_{\leq h}})$
satisfies $q\in P_{2 B}$ since $y\in P_B$.
Moreover, for $u\in \Gamma\setminus W$,
we have
\begin{equation*}
q(P_u) = 2 y(P_u \cap V_{\leq h})
= 2 (1-y(P_u \cap V_{> h})) > 1,
\end{equation*}
where the last inequality follows from
$y(P_u\cap V_{>h}) < 0.5$ because
$u\in \Gamma\setminus W$.
Finally, there exists a vector
$q' < q$ such that
$q'(P_u) =1$ for $u\in \Gamma\setminus W$.
The vector $q'$ can be obtained from $q$ by
successively reducing values on vertices
$v\in \supp(q)$ satisfying
$q(P_v) > 1$.
This shows that $(q',2B)$ is a feasible
solution to~\eqref{eq:reoptTop} and hence
$\bar{B} \leq 2B$.

Consider the set of all $z$-loose vertices
$V^{\mathcal{L}}=\{u\in \supp(z) \mid z(P_u)<1\}$.
We define
\begin{equation*}
R_2 = V^{\mathcal{L}} \cup 
\{u\in \supp(z) \mid z(u)=1\}.
\end{equation*}
Notice that for each $u\in \Gamma\setminus W$,
the set $R_2$ contains a vertex on the
path from $u$ to the root. Indeed, either
$|\supp(z)\cap P_u|=1$ in which case there
is a vertex $v\in P_u$ with $z(v)=1$, which is
thus contained in $R_2$, or $|\supp(z)\cap P_u|>1$
in which case the vertex $v\in \supp(z)\cap P_u$
that is closest to the root among all vertices in
$\supp(z)\cap P_u$ is a $z$-loose vertex.
Hence, the set $R=R_1\cup R_2$ cuts off all leaves
from the root. It remains to show that it is
feasible for budget $3 \max\{\log L, B_\OPT\}$.

Using an analogous sparsity reasoning as in
Lemma~\ref{lem:sparsityFF}, we obtain that
$|V^{\mathcal{L}}|$ is bounded by the number
of tight budget constraints, which is at most
$h=\lfloor \log L \rfloor \leq \log L$.
Hence, for any level $\ell\in [h]$, we have
\begin{align*}
|R_2 \cap V_\ell| &\leq |V^{\mathcal{L}}| + z(V_\ell) \\
  &\leq \log L + 2^\ell \bar{B} && \text{($(z,\bar{B})$ feasible
for~\eqref{eq:reoptTop})}\\
  &\leq \log L + 2^\ell \cdot (2 B) && \text{($\bar{B}\leq 2B$)}\\
  &\leq 2^\ell \cdot (3 \max\{\log L, B_\OPT\}).
     && \text{($B\leq B_\OPT$)}
\end{align*}
Thus, both $R_1$ and $R_2$ are budget-feasible for
budget $3 \max\{\log L, B_\OPT\}$, and since they
contain vertices on disjoint levels, $R=R_1\cup R_2$
is feasible for the same budget.

\qed

\subsection*{Proof of Lemma~\ref{lem:enumWorks}}

To show that the running time of
\hyperlink{alg:enumRMFCtarget}%
{$\mathrm{Enum}(\emptyset,\emptyset,\bar{\gamma})$}
is polynomial, we show that there is only a polynomial
number of recursive calls to
\hyperlink{alg:enumRMFCtarget}%
{$\mathrm{Enum}(A,D,\gamma)$}. Notice that the number
of recursive calls done in one execution of
step~\ref{item:enumRecCall} of the algorithm is equal
to $2 |F_x|$.
We thus start by upper bounding $|F_x|$ for any solution
$(x,B)$ to \ref{eq:lpRMFCAD} with $B < \log L$.
Consider a vertex $f_u\in F_x$, where
$u\in \Gamma\setminus W_x$.
Since $u$ is a leaf not in $W_x$, we have
$x(P_u \cap V_{\leq h}) > \frac{1}{3}$, and
thus
\begin{equation*}
x(T_{f_u}\cap V_{\leq h}) > \frac{1}{3}
\quad \forall f_u \in F_x.
\end{equation*}
Because no two vertices of $F_x$ lie on the same
leaf-root path, the sets $T_{f_u} \cap V_{\leq h}$
are all disjoint for different $f_u\in F_x$,
and hence
\begin{align*}
\frac{1}{3}|F_x| &< \sum_{f \in F_x} x(T_{f}\cap V_{\leq h})\\
 &\leq x(V_{\leq h})
    && \text{(disjointness of sets $T_{f}\cap V_{\leq h}$
              for different $f \in F_x$})\\
 &\leq \sum_{\ell=1}^h 2^\ell B
    && \text{($x$ satisfies budget constraints of~\ref{eq:lpRMFCAD} )}\\
 &< 2^{h+1} B\\
 &< 2 (\log L)^2.
    && \text{($h=\lfloor \log^{(2)} L \rfloor$ and $B < \log L$)}
\end{align*}
Since the recursion depth is
$\bar{\gamma}=2(\log L)^2 \log^{(2)} L$,
the number of recursive calls is bounded by
\begin{align*}
O\left((2 |F_x|)^{\bar{\gamma}}\right) &= 
(\log L)^{O((\log L)^2 \log^{(2)} L)}
=2^{o(L)} = o(N),
\end{align*}
thus showing that
\hyperlink{alg:enumRMFCtarget}%
{$\mathrm{Enum}(\emptyset,\emptyset,\bar{\gamma})$}
runs in polynomial time.

It remains to show that \hyperlink{alg:enumRMFCtarget}%
{$\mathrm{Enum}(\emptyset,\emptyset,\bar{\gamma})$} finds
a triple satisfying the conditions of Lemma~\ref{lem:goodEnum}.
For this we identify a particular execution path of the
recursive procedure 
\hyperlink{alg:enumRMFCtarget}%
{$\mathrm{Enum}(\emptyset,\emptyset,\bar{\gamma})$} that,
at any point in the algorithm, will maintain a clean
pair $(A,D)$ that is compatible with $\OPT$,
i.e., $A\subseteq \OPT$ and $D\cap \OPT = \emptyset$.
At the beginning of the algorithm we clearly have
compatibility with $\OPT$ since $A=D=\emptyset$.
To identify the execution path we are interested
in, we highlight which recursive call we want to follow
given that we are on the execution path.
Hence, consider a clean pair $(A,D)$
that is compatible with $\OPT$ and assume we are
within the execution of
\hyperlink{alg:enumRMFCtarget}%
{$\mathrm{Enum}(A,D,\gamma)$}.
Let $(x,B)$ be an optimal solution to~\ref{eq:lpRMFCAD}.
Notice that $B \leq B_\OPT \leq \log L$, because
$(A,D)$ is compatible with $\OPT$.
If $\OPT\cap Q_x=\emptyset$, then $(A,D,x)$ fulfills the
conditions of Lemma~\ref{lem:goodEnum} and we are done.
Hence, assume $\OPT\cap Q_x \neq \emptyset$, and
let $f \in F_x$ be such that
$\OPT\cap T_{f}\cap V_{\leq h}\neq \emptyset$.
If $f \in \OPT$, then consider the execution path
continuing with the call of 
\hyperlink{alg:enumRMFCtarget}%
{$\mathrm{Enum}(A\cup \{f\},D,\gamma-1)$}; otherwise,
if $f\not\in \OPT$, we focus on the call of
\hyperlink{alg:enumRMFCtarget}%
{$\mathrm{Enum}(A,D\cup \{f\},\gamma-1)$}.
Notice that compatibility with $\OPT$ is maintained
in both cases.

To show that the thus identified execution path of 
\hyperlink{alg:enumRMFCtarget}%
{$\mathrm{Enum}(\emptyset,\emptyset,\bar{\gamma})$}
indeed leads to a triple satisfying the conditions
of Lemma~\ref{lem:goodEnum}, we measure progress
as follows. For any clean pair $(A,D)$
compatible with $\OPT$, we define
a potential function $\Phi(A,D)\in \mathbb{Z}_{\geq 0}$
in the following way.
For each $u\in \OPT\cap V_{\leq h}$,
let $d_u\in \mathbb{Z}_{\geq 0}$
be the distance of $u$ to the first vertex in
$A\cup D \cup \{r\}$ when following the unique
$u$-$r$ path. We define
$\Phi(A,D)= \sum_{u\in \OPT \cap V_{\leq h}} d_u$.
Notice that as long as we have a triple $(A,D,x)$
on our execution path that does
not satisfy the conditions of Lemma~\ref{lem:goodEnum},
then the next triple $(A',D',x')$ on our execution
path satisfies $\Phi(A',D') < \Phi(A,D)$.
Hence, either we will encounter a triple on our
execution path satisfying
the conditions of Lemma~\ref{lem:goodEnum}
while still having a strictly positive potential,
or we will encounter a triple $(A,D,x)$ compatible
with $\OPT$ and $\Phi(A,D)=0$, which implies
$\OPT\cap V_{\leq h} = A$,
and we thus correctly guessed all vertices of
$\OPT\cap V_{\leq h}$ implying that
the conditions of Lemma~\ref{lem:goodEnum}
are satisfied for the triple $(A,D,x)$.
Since $\Phi(A,D)\geq 0$ for any compatible clean
pair $(A,D)$, this implies that a triple
satisfying the conditions of Lemma~\ref{lem:goodEnum}
will be encountered if the recursion depth $\bar{\gamma}$
is at least $\Phi(\emptyset,\emptyset)$.
To evaluate $\Phi(\emptyset,\emptyset)$ we have to compute
the sum of the distances of all
vertices $u\in \OPT\cap V_{\leq h}$
to the root. The distance of $u$ to the root is at
most $h$ since $u\in V_{\leq h}$. Moreover, 
$|\OPT \cap V_{\leq h}| < 2^{h+1} B_{\OPT}$
due to the budget constraints. Hence,
\begin{align*}
\Phi(\emptyset, \emptyset)
  &< h \cdot 2^{h+1} \cdot B_{\OPT}\\
  &\leq 2 \log^{(2)} L \cdot (\log L)^2
    && \text{($h=\lfloor \log^{(2)} L \rfloor$ and $B_\OPT \leq \log L$)}\\
  &= \bar{\gamma},
\end{align*}
implying that a triple fulfilling the conditions of
Lemma~\ref{lem:goodEnum} is encountered by
\hyperlink{alg:enumRMFCtarget}%
{$\mathrm{Enum}(\emptyset,\emptyset,\bar{\gamma})$}.

\qed

\section*{Acknowledgements}
We are grateful to Noy Rotbart for many stimulating discussions
and for bringing several relevant references to our attention.

\appendix

\section{Basic transformations for the Firefighter problem}\label{apx:trans}

In this section we provide some basic transformations showing how
different natural variations of the Firefighter problem can be reduced
to each other.
We start by proving Lemma~\ref{lem:genBudgetsToUnit}.

\begin{proof}[Proof of Lemma~\ref{lem:genBudgetsToUnit}]
 Consider an instance of the weighted Firefighter problem with general budgets 
consisting of a tree $G = (V,E)$ of depth $L$ rooted at the vertex $r\in V$, weights 
$w(u) \in \mathbb{Z}_{\geq 0}$ for all $u\in V\setminus \{r\}$
and budgets $B_\ell \in \mathbb{Z}_{> 0}$
for all $\ell\in [L]$. We transform the instance into an equivalent instance with 
unit budgets by performing the following simple
steps for all levels $V_\ell$ for $\ell\in [L]$:
\begin{itemize}
 \item For every $u\in V_\ell$, subdivide the edge connecting $u$ to its ancestor
in $G$ into a path with $B_\ell$ edges, by introducing $B_\ell-1$ new vertices. Denote
the nodes on this path, excluding the ancestor of $u$ in $G$, by $Y_u$.

 \item Set the weight of all new vertices to zero, while maintaining the weight $w(u)$ 
for the original vertex $u$.
\end{itemize}

Denote the resulting tree by $G' = (V', E')$.
To conclude the construction it remains to allow one unit of budget in every level of
the transformed tree.
It is easy to verify that feasible solutions to the Firefighter problem for 
the two instances are in correspondence. A feasible solution for $G$ is transformed
to a solution in $G'$ by replacing the $B_\ell$
vertices $S_\ell$ protected in any level $V_\ell$ of $G$
with any $B_\ell$ vertices on the corresponding paths $\{Y_u \,\mid\, u\in S_\ell\}$ in $G'$, one in 
each of the $B_\ell$ distinct levels of $G'$ that are in correspondence with $V_\ell$. The opposite
transformation selects for every protected vertex $u\in V'$ in a feasible solution 
for $G'$ the vertex $u\in V$ such that $u'\in Y_u$. It is straightforward to verify that
in both transformations the obtained solutions are feasible and that they have 
weights identical to the original solutions.

Finally, since $B_\ell \leq n$ can be assumed for every $\ell\in [L]$, each one of the 
$n-1$ edges in $G$ is subdivided into a path of length at most $n$, thus the 
number of vertices in $G'$ is at most $O(n^2)$.

\end{proof}

We remark that a construction analogous to the one used
in the proof of Lemma~\ref{lem:genBudgetsToUnit} can be used
to show that RMFC with non-uniform budgets can be reduced to
the uniform budget case. In an RMFC instance with non-uniform
budgets,
the budget on level $\ell$ is equal to $B\cdot a_\ell$,
where $a_\ell \in \mathbb{Z}_{> 0}$ for $\ell\in [L]$ are
given as input, and the goal is still to find the minimum $B$
to protect vertices that cut off all leaves from the root
and fulfill the budget constraints.

\medskip

Next, we show how a weighted instance of the Firefighter problem can be
transformed into a unit-weight one with only an arbitrarily small loss
in term of the objective function.

\begin{lemma}\label{lem:generalToUnitWeights}
Let $\delta > 0$ and $\alpha \in (0,1]$. 
Any weighted unit-budget Firefighter problem on a
tree $G = (V,E)$ and weights $w(u)\in \mathbb{Z}_{\geq 0}$ 
for $u\in V\setminus \{r\}$ can be transformed efficiently
into a polynomial-size unit-weight unit-budget
Firefighter problem on a tree $G' = (V', E')$
such that any $\alpha$-approximate feasible solution 
for $G'$ can be efficiently transformed
into a $(1-\delta)\alpha$-approximate solution for $G$.
\end{lemma}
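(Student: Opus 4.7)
The plan is to combine three ingredients: the pruning of Lemma~\ref{lem:pruning} to obtain an instance where $\val(\OPT)$ is a constant fraction of the total weight, a weight-rounding step to make all weights polynomially bounded integers, and a gadget construction that attaches unit-weight leaves to each original vertex. Concretely, we first apply Lemma~\ref{lem:pruning} with parameter $\lambda = \lceil 2/(\delta\alpha) \rceil$ to $G$, obtaining a pruned weighted unit-budget instance on a subtree $\bar G$ with reduced weights $\bar w \leq w$ satisfying $\val(\OPT(\bar G)) \geq (1-1/\lambda)\val(\OPT(G))$ and $\val(\OPT(\bar G)) \geq \bar w(V(\bar G))/\lambda$. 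We then fix the rounding scale $\eta = \delta\alpha \cdot \bar w(V(\bar G))/(4n\lambda)$ and set $w'(v) := \lfloor \bar w(v)/\eta\rfloor$ for $v \in V(\bar G)\setminus\{r\}$, so each $w'(v)$ is a nonnegative integer bounded by $O(n/(\delta\alpha)^2)$. Finally, we construct the unit-weight tree $G'$ by taking $\bar G$, setting all original weights to zero, and attaching, for each $v \in V(\bar G)\setminus\{r\}$, exactly $w'(v)$ fresh unit-weight leaves as children of $v$; the depth of $G'$ is at most $L+1$, and $|V(G')| = O(n^3/(\delta\alpha)^2)$, hence polynomial.

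The analysis hinges on two correspondences between solutions in $(\bar G,w')$ and solutions in $G'$. First, any feasible solution $S^*$ in $\bar G$ is immediately feasible in $G'$, and since every saved original vertex automatically saves all of its attached gadget leaves, $\val_{G'}(S^*) = \val_{\bar G,w'}(S^*)$; in particular $\val(\OPT(G')) \geq \val(\OPT(\bar G,w'))$. Second, given any feasible $S'$ in $G'$, set $S := S' \cap V(\bar G)$, which is clearly feasible in $\bar G$. Since all root-to-leaf paths of $\bar G$ lie inside $V(\bar G)$, a vertex $v \in V(\bar G)$ is saved by $S$ in $\bar G$ if and only if it is saved by $S'$ in $G'$, so the only slack comes from gadget leaves that $S'$ protects individually while their parent is burned; each such protection consumes one unit of budget at some level of $G'$, so there are at most $L+1$ of them in total, yielding $\val_{\bar G,w'}(S) \geq \val_{G'}(S') - (L+1)$.

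Combining the two correspondences, an $\alpha$-approximate $S'$ in $G'$ produces $S$ with $\val_{\bar G,w'}(S) \geq \alpha\val(\OPT(\bar G,w')) - (L+1)$, and pushing back to $\bar w$ through the sandwich $\eta w' \leq \bar w \leq \eta w' + \eta$ introduces one further additive error of at most $\alpha n \eta$. By our choice of $\eta$, the total additive loss $\eta(L+1) + \alpha n\eta \leq 2n\eta$ is at most $(\delta/2)\alpha \cdot \val(\OPT(\bar G,\bar w))$, and the pruning loss factor $1-1/\lambda$ is at least $1-(\delta/2)\alpha$; combining these with the fact that any solution in $\bar G$ is a solution in $G$ of no smaller value gives the required $(1-\delta)\alpha$-approximation for $G$. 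The main obstacle is precisely that the additive loss $L+1$ incurred in $G'$ must be absorbed into a multiplicative bound against $\alpha\val(\OPT)$ rather than against $\val(\OPT)$ itself; since $\alpha$ may be small this forces both $\lambda$ and $\eta$ to depend on $\alpha$ as well as on $\delta$, which is why the size bound on $G'$ involves $1/\alpha$, though it remains polynomial in $n$ for any fixed $\delta,\alpha > 0$.
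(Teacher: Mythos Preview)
Your argument is correct, but it takes a heavier route than the paper's. The paper avoids Lemma~\ref{lem:pruning} entirely: instead of pruning to make $\val(\OPT)$ a constant fraction of the total weight, it uses the trivial lower bound $\val(\OPT)\geq w_{\max}$ (any single vertex can be protected) to choose the rounding scale $D=\delta w_{\max}/(2n)$, which already gives polynomially bounded rounded weights $w'(u)=O(n/\delta)$. It then attaches $\lfloor \tfrac{4n}{\alpha\delta}\,w'(u)\rfloor$ unit-weight leaves to each vertex; the additive slack from individually protected gadget leaves and from the $+1$ per vertex in the gadget is again absorbed against $\alpha\val(\OPT)$ via $\val(\OPT)\geq w_{\max}\geq 1$. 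Your use of Lemma~\ref{lem:pruning} works just as well to control the additive losses, but it imports a nontrivial probabilistic lemma for a reduction that can be done with a one-line observation; the paper's proof is self-contained and gives a slightly smaller instance ($O(n^3/(\alpha\delta^2))$ versus your $O(n^3/(\alpha\delta)^2)$).

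One small slip: you set the original vertices to weight~$0$, so your $G'$ is not literally a \emph{unit-weight} instance as the lemma promises. This is easily repaired by giving every vertex weight~$1$ (as the paper does); the extra additive term of at most $n$ from the original vertices is absorbed exactly as your $(L+1)$ term is, since $n\eta$ is already within your $2n\eta$ slack budget.
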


\begin{proof}
Assume $w(V)>0$, i.e., not all weights are zero,
since for otherwise the result trivially holds.
Notice that this assumption also implies that
the value $\val(\OPT)$ of an optimal Firefighter solution
in $G$ satisfies $\val(\OPT)\geq 1$.

For simplicity we present the transformation in two steps,
each losing at most a $\frac{\delta}{2}$-fraction in terms of objective.
First we use a standard scaling 
and rounding technique to obtain a new weight function
that is bounded by a polynomial in the size of the tree.
Concretely, we construct weights
$w'(u) \in \mathbb{Z}_{\geq 0}$ for $u\in V\setminus \{r\}$
such that $w'(u) = O(\frac{n}{\delta})$ for every $u\in V$, and 
for a well-chosen parameter $D\in \mathbb{R}_{>0}$
we have: 
\begin{equation}\label{eq:scalingProps}
Dw'(S) \leq w(S) \leq Dw'(S) + \frac{\delta}{2} \val(\OPT)
\qquad \forall S\subseteq V\setminus \{r\}.
\end{equation}
In a second phase discussed below we use the obtained instance to
construct a unit-weight instance with the desired
property. 

Let $w_{\max} = \max_{u\in V\setminus \{r\}} w(u)$
be the maximum weight of any vertex in $G$.
Define $D = \delta w_{\max} /2n$, where $n=|V|$, and for
every $u\in V\setminus \{r\}$ set $w'(u) = \fl{w(u)/D}$.
Observe that $\val(\OPT) \geq w_{\max}$ since any single
vertex can be protected.
The latter 
scaling indeed fulfills the desired properties
as $w'(u)\leq 2n/\delta = O(n/\delta)$, and for every
$S\subseteq V\setminus \{r\}$ we have 
\begin{align*}
Dw'(S) &\leq w(S) \leq D w'(S) + D |S|
  \leq Dw'(S) + \frac{\delta}{2} \val(\OPT),
\end{align*}
where the first two inequalities follows from
$w'(u) = \lfloor w(u)/D\rfloor \;\forall u\in V\setminus \{r\}$,
and the last one
from $D |S| \leq D n =\delta w_{\max}/2 \leq \delta \val(\OPT)/2$.
This shows~\eqref{eq:scalingProps}.

We show next that the latter transformation loses
at most a $\frac{\delta}{2}$-fraction in
the objective function. More precisely,
let $S'\subseteq V\setminus \{r\}$
be a set of vertices that will not burn in an $\alpha$-approximate
solution to the Firefighter problem with respect to the
weights $w'$. We will show that
$w(S') \geq (1-\frac{\delta}{2}) \alpha \val(\OPT)$, implying
that the same solution is $(1-\frac{\delta}{2}) \alpha$-approximate
with respect to the original weights $w$.
Let $S^* \subseteq V\setminus \{r\}$ be the vertices
that will not burn at the end of the process
in an optimal solution for $G$.
By~\eqref{eq:scalingProps} we have
$D w'(S^*) + \frac{\delta}{2} \val(\OPT) \geq w(S^*) = \val(\OPT)$,
implying $Dw'(S^*) \geq (1-\frac{\delta}{2})\val(\OPT)$.
We conclude:
\begin{align*}
\left(1-\frac{\delta}{2}\right)\val(\OPT) &\leq D w'(S^*) \\
   &\leq \frac{1}{\alpha} D w'(S')
      && \text{($S'$ corresponds to an $\alpha$-approximate solution
                for weights $w'$)}    \\
   &\leq \frac{1}{\alpha} w(S'),
      && \text{(since $w'(u) = \lfloor{w(u)}/{D}\rfloor$
                 for $u\in V\setminus \{r\}$)}
\end{align*}
which yields
$w(S') \geq (1-\frac{\delta}{2})\alpha \val(\OPT)$, as desired. 

\smallskip

Next we present the second transformation, which,
given a weighted Firefighter problem
with tree $G=(V,E)$ and integer
weights $w(u)\in \mathbb{Z}_{\geq 0}$ bounded by $O(n)$, transforms it into a 
unit-weight instance on a new tree $G'= (V',E')$
by losing at most a $\frac{\delta}{2}$-fraction 
in terms of objective.

The tree $G'$ is obtained from $G$ by taking a
copy of $G$ and attaching
$\lfloor\frac{4n}{\alpha \delta}w(u)\rfloor$ new
leaves to every vertex $u\in V\setminus \{r\}$. 
For brevity, for a vertex set
$R\subseteq V\setminus \{r\}$, we denote by
$\mathrm{sv}(R)\subseteq V$ the set of all
vertices that will not burn in $G$ if one protects
the set $R$,
i..e, $\mathrm{sv}(R)=\cup_{u\in R} T_u$.
Similarly, for $R'\subseteq V'\setminus \{r\}$,
we denote by $\mathrm{sv}'(R')=\cup_{u\in R'} T'_u \subseteq V'$
all vertices in $G'$ that will not burn if $R'$ gets protected.

Consider a solution that protects
a set $R'\subseteq V'\setminus \{r\}$ of vertices in $G'$. 
Observe that $V \cap R'$ is a feasible set of
vertices to protect in $G$.
We can upper bound the objetive value of $R'$ in $G'$ as
follows, where $w'$ is the unit-weight function used in $G'$:
\begin{equation}\label{eq:svPrimeLeqSv}
\begin{aligned}
w'(\mathrm{sv}'(R')) &= |\mathrm{sv}'(R')|
  \leq |R'\setminus V|
     + \sum_{u \in \mathrm{sv}(R'\cap V)}
       \left(1+\left\lfloor
           \frac{4n}{\alpha \delta} w(u)\right\rfloor\right)\\
  &\leq 
   n + \sum_{u\in \mathrm{sv}(R'\cap V)}
          \left( 1 + \frac{4n}{\alpha\delta} w(u) \right)\\
  &\leq 2n + \frac{4n}{\alpha\delta}w(\mathrm{sv}(R'\cap V)).
\end{aligned}
\end{equation}
Moreover, for any 
set of vertices $R\subseteq V\setminus \{r\}$ 
in $G$ we have
\begin{equation}\label{eq:svLeqSvPrime}
\begin{aligned}
w'(\mathrm{sv}'(R)) &= |\mathrm{sv}'(R)|
  = \sum_{u\in \mathrm{sv}(R)}\left( 1 +
        \left\lfloor\frac{4 n}{\alpha \delta} w(u) \right\rfloor \right)\\
  &\geq \sum_{u\in \mathrm{sv}(R)} \frac{4n}{\alpha\delta} w(u)\\
  &= \frac{4 n}{\alpha \delta} w(\mathrm{sv}(R)).
\end{aligned}
\end{equation}
We complete the rest of the proof similar to the
proof of the first transformation.
Let $R^* \subseteq V\setminus \{r\}$ be
an optimal set of vertices to protect in $G$, and let
$R'\subseteq V'\setminus \{r\}$ be an $\alpha$-approximation
for the unit-weight Firefighter instance on $G'$.
Our goal is to show that $R'\cap V$ is a solution
to the Firefighter problem on $G$ of value at least
$(1-\frac{\delta}{2}) \alpha \val(\OPT)$. Indeed,
we have
\begin{align*}
\val(\OPT) &= w(\mathrm{sv}(R^*))
    \leq \frac{\alpha \delta}{4 n} w'(\mathrm{sv}'(R^*))
        && \text{(by~\eqref{eq:svLeqSvPrime})}\\
  &\leq \frac{\delta}{4n} w'(\mathrm{sv}'(R'))
        && \text{(since $R'$ is an $\alpha$-approximation for $G'$)}\\
  &\leq \frac{\delta}{2} + \frac{1}{\alpha} w(\mathrm{sv}(R'\cap V))
        && \text{(by~\eqref{eq:svPrimeLeqSv})}\\
  &\leq \frac{\delta}{2}\val(\OPT)
        + \frac{1}{\alpha} w(\mathrm{sv}(R'\cap V)),
      && \text{(because $\val(\OPT)\geq 1$)}
\end{align*}
which implies
\begin{equation*}
w(\mathrm{sv}(R'\cap V))
  \geq \left(1-\frac{\delta}{2}\right)\alpha \val(\OPT),
\end{equation*}
as desired.

Finally, both transformations can be implemented in polynomial time. For the
first transformation this is trivial, while for the second transformation one uses
the fact that the input weights are polynomially bounded, and hence $G'$ has
polynomial size.
\end{proof}

\end{document}